\documentclass[11pt]{report}
\usepackage[utf8]{inputenc}
\usepackage{graphicx}
\graphicspath{ {images/} }
\usepackage{caption}
\usepackage{subcaption}
\usepackage{float}
\usepackage[width=150mm,top=35mm,bottom=25mm,bindingoffset=6mm]{geometry}
\usepackage{fancyhdr}
\usepackage{setspace}
\newtheorem{defi}{Definition}[section]
\newtheorem{cor}{Corollary}[section]

\newtheorem{lem}{Lemma}[section]
\newtheorem{prop}{Proposition}[section]
\newtheorem{obs}{Observation}[section]
\newtheorem{ex}{Example}[section]
\newcommand{\ket}[1]{|#1\rangle}
\newcommand{\bra}[1]{\langle #1|}      
\newcommand{\braket}[2]{\langle #1 | #2 \rangle}  
\usepackage{framed}
\usepackage{lipsum}
\usepackage{hyperref}
\usepackage{xcolor}
\usepackage{enumerate}
\usepackage{multicol}
\usepackage{mathtools}
\usepackage{amsfonts}
\usepackage{amsmath}
\usepackage{tikz}
\usetikzlibrary{quantikz2}
\newenvironment{proof}{\noindent \textbf{Proof:}}{\hfill$\square$}
\renewcommand{\headrulewidth}{0pt}
\usepackage[numbers]{natbib}

\let\oldboldsymbol\boldsymbol
\renewcommand{\boldsymbol}[1]{\ifmmode\oldboldsymbol{#1}\else\texorpdfstring{$\oldboldsymbol{#1}$}{#1}\fi}

\title{Thesis Title}
\author{Author Name}
\date{Day Month Year}

\begin{document}

\begin{titlepage}
    \begin{center}
        \vspace*{1cm}
        
        \Huge
        \textbf{Designing tight frames for quantum computing}
        
        \vspace{0.5cm}
        \LARGE

        \textbf{Luis Quezada}
\vfill

        \begin{figure}[hbtp]
\centering
\includegraphics[scale=0.255]{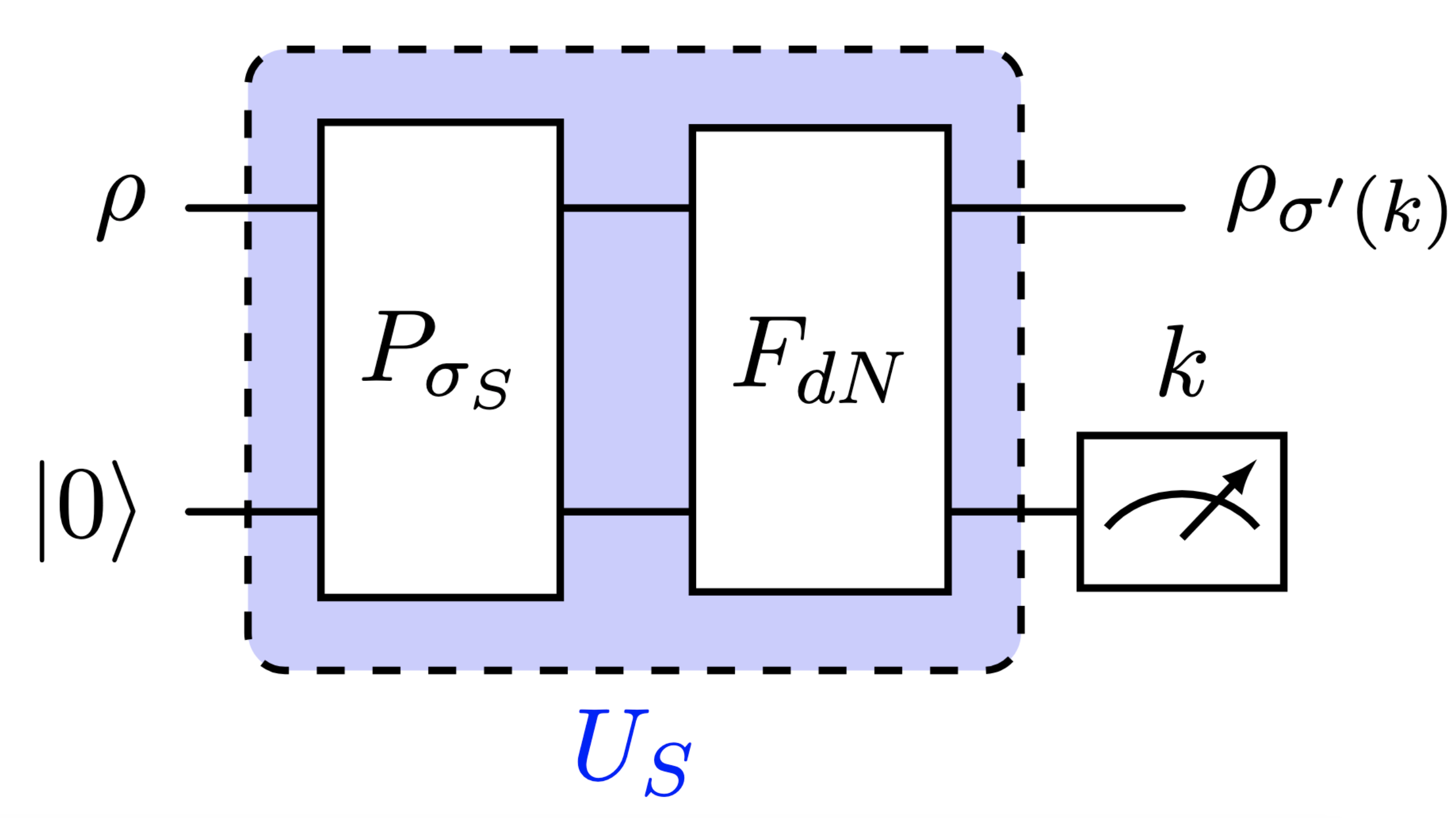}
\end{figure}

        \vfill
        
        A dissertation submitted in partial fulfillment\\
        of the requirements for the degree of\\
        Physics graduate\\

        \vspace{1.8cm}

        \Large
        At the\\Pontificia Universidad Católica de Chile\\
        Year 2024\\ 
        \vspace{1.0cm}
        \begin{flushleft}
        \large
        Date of Final Oral Exam: 24/01/2025\\
        Advising Professor: Dardo Goyeneche, Physics\\
        Committee: \\
        \setlength{\parindent}{10ex}
        Aldo Delgado Hidalgo, Professor, Physics\\Giuseppe de Nittis,  Professor, Mathematics\\ 
        \end{flushleft}
        
    \end{center}
    
\end{titlepage}

\chapter*{}
\pagenumbering{roman}

\clearpage

\doublespacing

\vspace*{\fill} 
\begin{center}
    {\Huge \textbf{Dedication}} 
\end{center}
\vspace{1.5cm} 

\begin{center}
    {\large To my family, who stood as my pillar during difficult times. \\
    To my friends, with whom I shared the enthusiasm for my work. \\
    To my teachers, who guided me and kept me focused on my goals. \\
    And to all those who, even with little knowledge of physics or mathematics, inspired me to undertake this project.}
\end{center}
\vspace*{\fill} 

\chapter*{Acknowledgements}

I would like to begin by expressing my deepest gratitude to my advisor, Dardo Goyeneche, for his invaluable patience. His guidance was essential for this project, as was his insistence on addressing key topics. Despite initial disagreements, he managed to guide me while granting me the freedom to work in my own style. I am also grateful to my dear friend Herbert Díaz, with whom I could share all kinds of physics-related "madness" without fear of judgment. To my grandmother, Blanca Rosa, who was there every day, in good times and bad, providing me with the affection needed to keep moving forward. To my mother, whose advice continuously guides my life toward a better future. To the Pontificia Universidad Católica de Chile, which over these years has become a home to me, and which I sincerely hope will continue to be so. Lastly, I am deeply thankful to the Fondecyt Regular Project Nro.1230586 for its financial support, which was crucial for the development of this work.

\newpage

\fancyhf{} 
\fancyhead[RO,R]{\thepage} 
\renewcommand{\headrulewidth}{0pt}

\begin{center}
    {\Huge \textbf{Abstract}} 
\end{center}
 \vspace{0.5cm}
The aim of this thesis is to explore the implementation a special kind of quantum measurements, so called Harmonic tight frames. To achieve this goal, representation theory is addressed, emphasizing its construction from irreducible representations and the relations they satisfy. These tools simplifies the study due to the existence of symmetries. Leading to the concept of group frames, defined as orbits under the unitary action of a group and characterized by their symmetry groups. Among the various groups, the simplest are the abelian ones, giving rise to harmonic frames. These are analyzed through the classification theorem of abelian groups and the characters of their irreducible representations.\\

In the quantum mechanics context, the frame elements can be interpreted as pure states of a system. Therefore, the necessary and sufficient conditions for separability are explored, as separable states are easier to implement due to their local nature. Alternatively, when viewing frames as measurements, the concept of POVMs and Naimark’s theorem are studied as key tools for designing measurements in quantum computers.\\

Using this knowledge, a characterization of the harmonic frames was given from the abelian group structure theorem, which allowed obtaining a necessary and sufficient condition for their separability. The conditions of maximum entanglement of these states for bipartite systems were also studied, obtaining a necessary condition on the dimensions of the subsystems. Finally, quantum cirtuit were obtained that allows implementing the harmonic frames associated to cyclic groups as POVM using a Fourier matrix and a permutation matrix. We conclude by giving simple examples where the results are applied to quantum computers and proposing future avenues of research that could serve to improve the circuit design and extend it to the case of all harmonics.

\singlespacing

\tableofcontents

\doublespacing

\newpage

\pagenumbering{arabic}
\setcounter{page}{1} 

\chapter{Introduction}

\bibliographystyle{IEEEtran}
Frame theory was first introduced in the 1950s by Duffin and Schaeffer in their paper "A class of nonharmonic Fourier series" \cite{duffin1952class}, which builds upon the concepts studied by D. Gabor in \cite{gabor1946theory} regarding signal representation. Frames offer significant flexibility in decomposing signals or vectors into simpler components within a vector space. Unlike bases, which require vectors to be linearly independent, frames allow redundancy, meaning they can contain more vectors than necessary. This redundancy provides greater freedom in representing signals more efficiently, particularly in the presence of noise or incomplete information \cite{casazza2000art}. These properties motivate the interpretation of quantum measurement results in terms of frames \cite{eldar2002optimal}, with the Naimark theorem \cite{beneduci2018notes} playing a central role in this context.\\

One area of interest is the study of a specific type of frame known as tight frames \cite{benedetto2002finite}, which are closely related to Positive Operator-Valued Measures (POVMs) \cite{yard2024introduction}. Tight frames have various applications in quantum information theory, including quantum tomography \cite{renes2004symmetric}, quantum state discrimination \cite{bergou2010discrimination}, state estimation \cite{derka1998universal}, Bell inequalities \cite{vertesi2010two}, quantum protocols \cite{gomez2016device}, quantum key distribution \cite{bennett1992quantum}, quantum sources \cite{jozsa2003entanglement}, and more.\\

As the goal is to design these measurements for quantum computers, a particular type of tight frame related to group theory via representation theory \cite{steinberg2016representation} has been selected for investigation. These frames, known as group frames \cite{waldron2018group}, are particularly interesting due to their connection with frame symmetries, which are expected to simplify the design process in quantum computing \cite{vale2018symmetry}. Among the many groups with various sizes and properties, abelian groups stand out due to their simple structure \cite{fuchs2015abelian}. This leads to the introduction of harmonic frames \cite{marshall2018number}, which is the primary focus of this thesis.\\

This work is structured as follows. Chapter 2 addresses the preliminaries necessary for a proper understanding of the research contents, in addition to introducing some examples to illustrate these concepts. First, the finite frame theory is studied (section \ref{frames}), with a primary focus on tight frames and their most important properties. For further reading on this topic, \cite{casazza2008finite} offers valuable insights.\\

Next, a review of representation theory is provided (section \ref{representation}), focusing on the structures and equivalences within this theory. An excellent resource on this subject is the book \cite{tung1985group}. With the knowledge acquired from these sections, we move on to the study of group frames (section \ref{group frames}), examining their relationship with the symmetry group of the frame and the structure of those group frames that are tight. A deeper exploration of this topic can be found in \cite{waldron2017tightframes}.\\

Following this, the definition of harmonic frames is reviewed (section \ref{harmonic}), along with some observations on Abelian groups and their representations, in order to characterize them in a way that is useful for the objectives of this thesis. For details on the structure of Abelian groups, \cite{piro2013fundamental} provides a concise overview, while \cite{chien2011unitary} delves deeply into harmonic frames.\\

\newpage

Next, we turn to concepts more related to information theory. In section \ref{state}, the definitions of state and separability are analyzed, along with their basic properties. A comprehensive discussion of these topics can be found in \cite{nielsen2010quantum}, and for additional perspectives on separability, \cite{horodecki2024multipartite} is a key reference.\\

Finally, section \ref{measures} introduces the definition of POVM and Naimark's theorem. For further exploration of these topics, \cite{paris2004modern} serves as an authoritative guide.\\

Chapter 3 focuses on the main results of this thesis. First, the separability of harmonic frames in bipartite states is analyzed (section \ref{bipartite}), with the primary findings summarized in propositions \ref{prop 3.1.1} and \ref{prop 3.1.2}. Subsequently, the study is extended to multipartite systems in section \ref{multipartite}, where the key result is presented as proposition \ref{prop 3.2.1}.\\

Next, harmonic frames associated with cyclic groups ($C_N$-frames) are explored in section \ref{Cn frames}. A design for implementing these frames in quantum computers is developed, leveraging Naimark's theorem (proposition \ref{prop 3.3.1}). Finally, the results are applied to simple examples on quantum computers, as detailed in sections \ref{example povm} and \ref{example separability}.\\

Finally, Chapter 4 presents the conclusions. A summary of the results obtained is provided in section \ref{summary}. This is followed by a reflection on these results in section \ref{reflections}, concluding with a discussion in section \ref{future research} on potential research directions that could further explore the contents of this thesis.\\

As will be analyzed throughout this work, this thesis lies at the intersection of group theory and quantum information theory, approached through the lens of frame theory. The results presented provide a fresh perspective on the design of quantum measurements, particularly in the context of quantum computing. By bridging these fields, this research not only contributes to the theoretical understanding of harmonic frames and their properties but also offers practical insights that may inspire new methods for implementing quantum measurements. This work aspires to serve as a foundation for further studies, opening avenues for innovative developments in quantum information science.

\chapter{Preliminaries}

\section{Finite frames}\label{frames}

Throughout this thesis we will be working with a $d$-dimensional Hilbert space $\mathcal{H}$ over a field $\mathbb{K}$, which will generally be $\mathbb{R}$ or $\mathbb{C}$. The elements in $\mathcal{H}$ will be denoted as $\ket{\psi}$ and the elements of the dual space $\mathcal{H}^{*}$ will be written as $\bra{\psi}$. In addition, the inner and outer product of $\ket{\psi},\ket{\phi}\in \mathcal{H}$ is described as $\braket{\psi}{\phi}$ and $\ket{\psi}\bra{\phi}$ respectively. We will also denote the norm $\sqrt{\braket{\psi}{\psi}}$ as $\vert\vert \psi\vert\vert$  and call $J$ a finite index set. \textbf{It is important to clarify that in this work \boldsymbol{\ket{\psi}} does not necessarily have norm one}. With this in mind, a finite frame can be defined as follows:

\begin{defi}
    
We say that the sequence $ (\ket{f_j})_{j\in J} \subset \mathcal{H} $ is a \textbf{frame} if for any $\ket{f}\in \mathcal{H}$ exists $A,B>0$ such that:
\begin{equation}\label{2.1}
A\,||f||^2\leq \sum_{j\in J} |\braket{f_j}{f}|^2\leq B\,||f||^2.
\end{equation}

\noindent When $A=B$, it is said that the set forms a \textbf{tight frame}. 

\end{defi}

In this section we will study basic properties of these mathematical objects, to later make the connection with states and POVMs, which are the main interest of this thesis. The following are some examples that apply to this definition:

\newpage

\begin{ex} \textbf{An orthonormal basis} forms a tight frame with $A=B=1$ due to the Parseval's identity.
    
\end{ex}

\begin{ex}\label{5} \textbf{Three equiangular vectors} in $\mathbb{C}^2$:
\begin{equation}
\ket{f_1}=\dfrac{1}{\sqrt{2}}\begin{bmatrix}
    1 \\
    1
\end{bmatrix}\,\,\,\,\, \ket{f_2}=\dfrac{1}{\sqrt{2}}\begin{bmatrix}
    1 \\
    \omega
\end{bmatrix} \,\,\,\, \, \ket{f_3}=\dfrac{1}{\sqrt{2}}\begin{bmatrix}
    1 \\
    \omega^2
\end{bmatrix}, 
\end{equation}

   \noindent where $\omega=e^{2\pi i/3}$ form a tight frame with $A=B=3/2$. This can be seen by calculating explicitly the sum of (\ref{2.1}) in an arbitrary vector and using that the sums of all unit roots is $0$.
\end{ex}

\begin{ex} The following vectors in $\mathbb{R}^2$ form a \textbf{frame that is not tight}:
\begin{equation}\ket{f_1}=\dfrac{1}{\sqrt{3}}\begin{bmatrix}
    1 \\
    2
\end{bmatrix}\,\,\,\,\, \ket{f_2}=\dfrac{1}{\sqrt{3}}\begin{bmatrix}
    2 \\
    1
\end{bmatrix},\end{equation}

    \noindent where $A=1/3$ and $B=3$. The above is obtained by calculating explicitly the sum of (\ref{2.1}) in an arbitrary vector and using the inequalities:
\begin{equation}
-4(a^2+b^2)\leq 8ab\leq 4(a^2+b^2),\end{equation}

\noindent where $a,b$ are the coordinates of the vector.
    
\end{ex}

The following operators can be useful when working with frames:

\begin{defi}

Let a sequence $(\ket{f_j})_{j\in J} \subset \mathcal{H}$. We define its \textbf{synthesis operator} as:

\begin{equation}\left\{ \begin{array}{lcc}
             V:l_2(J)\to \mathcal{H}\\
             V(a)=\displaystyle \sum_{j\in J} a_j\ket{f_j}

             \end{array} \right. .\end{equation}
\noindent Its dual $V^*$ is called the \textbf{analysis operator}, and the product $S:=VV^*$ is known as the \textbf{frame operator}.
    
\end{defi}

\newpage

For example, we can use these operators to provide a proof that the frames correspond to generating sets. For this purpose, let us note the following:
\begin{obs}\label{o2.1.1}
\begin{equation}\label{2.6}    
S\ket{f}=V(V^*\ket{f})=V(\braket{f_j}{f})_{j\in J}=\left(\sum_{j\in J}\ket{f_j}\bra{f_j} \right)\ket{f}.
\end{equation}
\end{obs}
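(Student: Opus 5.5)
The identity in Observation \ref{o2.1.1} follows directly from the definitions: we compute $V^*$ explicitly, then apply $V$ to the result. All vectors are finite-dimensional and $J$ is finite, so no convergence or domain issues arise. Throughout, we use the standard inner product $\langle a, b\rangle = \sum_{j\in J}\overline{a_j} b_j$ on $l_2(J)$.

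\textbf{Step 1: compute the analysis operator.} The adjoint $V^*:\mathcal{H}\to l_2(J)$ is characterized by $\langle a, V^*\ket{f}\rangle_{l_2(J)} = \braket{Va}{f}$ for all $a\in l_2(J)$ and $\ket{f}\in\mathcal{H}$. Expanding the right-hand side using conjugate-linearity of the inner product in its first slot gives
\begin{equation*}
\braket{Va}{f}=\sum_{j\in J}\overline{a_j}\,\braket{f_j}{f}.
\end{equation*}
Since this holds for every $a$, the left-hand side $\sum_{j\in J}\overline{a_j}\,(V^*\ket{f})_j$ must agree term by term, which forces $(V^*\ket{f})_j=\braket{f_j}{f}$. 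Hence $V^*\ket{f}=(\braket{f_j}{f})_{j\in J}$. This is the only step requiring care: the conjugation must land on $a_j$ and not on the coefficient, and this depends on the inner product being conjugate-linear in the first argument, as the bra-ket notation of the paper indicates.

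\textbf{Step 2: apply the synthesis operator.} Substituting $a_j=\braket{f_j}{f}$ into the definition of $V$ gives
\begin{equation*}
V(V^*\ket{f})=\sum_{j\in J}\braket{f_j}{f}\,\ket{f_j}.
\end{equation*}
Each term can be rewritten as $\braket{f_j}{f}\,\ket{f_j}=\ket{f_j}\bra{f_j}\,\ket{f}$, by the definition of the outer product acting on a vector. Since $J$ is finite, linearity lets us factor $\ket{f}$ out of the sum, yielding $\bigl(\sum_{j\in J}\ket{f_j}\bra{f_j}\bigr)\ket{f}$. This completes the chain of equalities in (\ref{2.6}).
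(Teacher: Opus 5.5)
Your proposal is correct and follows the same chain of equalities the paper writes down: $S=VV^*$ by definition, then $V^*\ket{f}=(\braket{f_j}{f})_{j\in J}$, then applying $V$ and regrouping the result as $\left(\sum_{j\in J}\ket{f_j}\bra{f_j}\right)\ket{f}$. The only difference is that the paper states the formula for the analysis operator without argument, while you derive it from the defining property of the adjoint; this is a welcome addition but not essential.
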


\begin{prop}\label{4}
    A sequence $(\ket{f_j})_{j\in J}\subset \mathcal{H}$ is a frame if and only if its elements generate $\mathcal{H}$.\end{prop}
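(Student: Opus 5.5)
The plan is to rewrite the middle term of (\ref{2.1}) through the frame operator and then treat the two directions separately. By Observation \ref{o2.1.1},
\begin{equation*}
\sum_{j\in J}|\braket{f_j}{f}|^2=\bra{f}S\ket{f}=\|V^*\ket{f}\|^2 .
\end{equation*}
So the frame condition says that the quadratic form of the positive semidefinite operator $S=\sum_j \ket{f_j}\bra{f_j}$ is squeezed between $A\|f\|^2$ and $B\|f\|^2$.

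\emph{Frame $\Rightarrow$ spanning.} Argue by contraposition. If the $\ket{f_j}$ do not generate $\mathcal{H}$, their span $W$ is a proper subspace. Since $\mathcal{H}$ is finite dimensional, $W^{\perp}\neq\{0\}$. Pick a nonzero $\ket{f}\in W^\perp$. Then every $\braket{f_j}{f}=0$, so the middle sum vanishes while $A\|f\|^2>0$. This contradicts the lower bound, so a frame must span $\mathcal{H}$.

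\emph{Spanning $\Rightarrow$ frame.} The upper bound is automatic. By Cauchy--Schwarz,
\begin{equation*}
\sum_j|\braket{f_j}{f}|^2\le \Big(\sum_j\|f_j\|^2\Big)\|f\|^2 ,
\end{equation*}
so we may take $B=\sum_j\|f_j\|^2$. If this is $0$, all $\ket{f_j}=0$, which forces $\mathcal{H}=\{0\}$. Otherwise $B>0$, or we can simply replace $B$ by $\max(B,1)$.

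The lower bound is the main step, and I would prove it by compactness. Consider the continuous function $g(f)=\sum_j|\braket{f_j}{f}|^2$ on the unit sphere of $\mathcal{H}$. Because $d<\infty$, the sphere is compact, so $g$ attains a minimum $A$ at some unit vector $\ket{f_0}$. If $A=0$, then $\ket{f_0}$ is orthogonal to every $\ket{f_j}$, hence to their span, which is all of $\mathcal{H}$. That gives $\ket{f_0}=0$, contradicting $\|f_0\|=1$. So $A>0$. Homogeneity, $g(\lambda f)=|\lambda|^2g(f)$, then extends $A\|f\|^2\le g(f)$ from the sphere to all $\ket{f}$.

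As an alternative, one can note that $S$ is Hermitian and positive semidefinite with trivial kernel, since $\ker S=\ker V^*=(\operatorname{span}\{f_j\})^\perp$. Then $A$ and $B$ can be taken to be its smallest and largest eigenvalues. The only point needing care is finite dimensionality, which is used both for the existence of the orthogonal complement argument and for compactness or diagonalization.
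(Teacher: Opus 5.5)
Your proof is correct. The forward direction is the paper's argument in substance. The paper passes from ``$V$ not surjective'' to ``$S$ not injective'' and takes a nonzero $\ket{f}\in Ker(S)$, whereas you take $\ket{f}\in(\operatorname{span}\{\ket{f_j}\})^{\perp}$ directly. Since $Ker(S)=Ker(V^*)=(\operatorname{Im}V)^{\perp}$, these are the same vectors, and your version skips the rank--nullity step.

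The real difference is in the converse. The paper uses only the route you list as an alternative: spanning makes $S$ positive definite, and $A,B$ are taken to be its extreme eigenvalues via the spectral theorem. Your main argument instead gets $B=\sum_j\|f_j\|^2$ from Cauchy--Schwarz and $A$ from minimizing a continuous function on the compact unit sphere. This avoids diagonalizing $S$ and makes explicit exactly where finiteness of $d$ and of $J$ is used.

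What the spectral route buys is sharpness and a link to what follows. The eigenvalues $\lambda_{\min}(S)$ and $\lambda_{\max}(S)$ are the optimal frame bounds, and phrasing everything through $S$ is what Proposition \ref{3} ($S=AI_{\mathcal{H}}$ for tight frames) builds on. Your $A$ is in fact the minimal Rayleigh quotient, hence equal to $\lambda_{\min}(S)$, though your argument does not identify it as such. Your $B=\operatorname{Tr}(S)$, by contrast, strictly exceeds $\lambda_{\max}(S)$ in the spanning case whenever $d\ge 2$.

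Your remark that $Ker(S)=Ker(V^*)=(\operatorname{span}\{\ket{f_j}\})^{\perp}$ also supplies the justification the paper leaves implicit for ``$V$ surjective, therefore $S$ bijective''.
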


\begin{proof}  Suppose that the frame does not generate $\mathcal{H}$, then $V$ is not surjective, and therefore $S$ is not injective. Let $\ket{f}$ be a non-zero element in the kernel $Ker(S)$, then we have for (\ref{2.1}) and (\ref{2.6}) that $0<A\, ||f||^2\leq \bra{f}S\ket{f}=0$, which is a contradiction. On the other hand, if the set generates $\mathcal{H}$, then $V$ is surjective, and therefore $S$ is bijective. Consequently, we can take $A$ and $B$ as the smallest and largest eigenvalues of $S$ respectively. Since $S$ is positive definite, $A,B>0$, and by the spectral theorem, the bound $ A\vert\vert f\vert\vert^2\leq\bra{f}S\ket{f}\leq B\vert\vert f\vert\vert^2$ is obtained.

\end{proof}

The frame operator also allows for the characterization of tight frames, as shown below.

\begin{prop}\label{3} A sequence $(\ket{f_j})_{j\in J}$ in $\mathcal{H}$ is tight frame if and only if $S=AI_{\mathcal{H}}$, with $A>0$ and $I_{\mathcal{H}}$ the identity operator in $\mathcal{H}$. Additionally, a tight frame also satisfies that:
\begin{equation}
\sum_{j\in J} ||f_j||^2=dA \,\,\,\,\,\,\,\,\,\,\,\, \sum_{j\in J}\sum_{k\in J} |\braket{f_j}{f_k}|^2=dA^2.
\end{equation}
    
\end{prop}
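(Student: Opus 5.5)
The equivalence will come from Observation \ref{o2.1.1}, which gives $\bra{f}S\ket{f}=\sum_{j\in J}|\braket{f_j}{f}|^2$ for every $\ket{f}\in\mathcal{H}$. If $S=AI_{\mathcal{H}}$ with $A>0$, then this sum equals $A\,||f||^2$, so both inequalities in (\ref{2.1}) hold with $A=B$ and the sequence is a tight frame.

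For the converse, suppose $\sum_j|\braket{f_j}{f}|^2=A\,||f||^2$ for all $\ket{f}$. Then $\bra{f}(S-AI_{\mathcal{H}})\ket{f}=0$ for every $\ket{f}$. The operator $S-AI_{\mathcal{H}}$ is self-adjoint, since $S=VV^*$. A self-adjoint operator whose quadratic form vanishes identically is zero. To see this, diagonalize it by the spectral theorem and test on its eigenvectors; over $\mathbb{C}$ one could use polarization instead. Hence $S=AI_{\mathcal{H}}$. This step is the only one requiring care, because over $\mathbb{R}$ polarization alone would not suffice without self-adjointness. That is why I would rely on the spectral theorem, as in the proof of Proposition \ref{4}.

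For the two identities, I would take traces. From $S=\sum_j\ket{f_j}\bra{f_j}=AI_{\mathcal{H}}$ we get $\mathrm{tr}(S)=\sum_j\braket{f_j}{f_j}=\sum_j||f_j||^2$ and $\mathrm{tr}(AI_{\mathcal{H}})=dA$. For the second identity, I would compute $\mathrm{tr}(S^2)$ in two ways. On one hand, $\mathrm{tr}(S^2)=\sum_{j,k}\mathrm{tr}(\ket{f_j}\braket{f_j}{f_k}\bra{f_k})=\sum_{j,k}\braket{f_j}{f_k}\braket{f_k}{f_j}=\sum_{j,k}|\braket{f_j}{f_k}|^2$. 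On the other hand, $\mathrm{tr}(S^2)=\mathrm{tr}(A^2I_{\mathcal{H}})=dA^2$. Equating the two expressions gives the claim.
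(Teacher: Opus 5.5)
Your proof is correct and follows essentially the same route as the paper. The converse uses the spectral theorem on eigenvectors of the self-adjoint operator $S$ (you phrase it for $S-AI_{\mathcal{H}}$), and both identities come from computing $\mathrm{Tr}(S)$ and $\mathrm{Tr}(S^2)$. The only cosmetic difference is that the paper passes to the Gramian $V^*V$ by cyclicity of the trace, while you expand $S=\sum_j\ket{f_j}\bra{f_j}$ into rank-one terms directly, which is the same computation.
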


\begin{proof}
   If $S=AI_{\mathcal{H}}$ with $A>0$, then $\bra{f}S\ket{f}=A\vert\vert f\vert\vert^2$, so the sequence is a tight frame. On the other hand, if the sequence form a tight frame, by the spectral theorem and definition (\ref{2.1}) applied to the eigenvectors of $S$, it follows that the only eigenvalue of $S$ is $A$, which implies that $S=AI_{\mathcal{H}}$. 
   
   \newpage
   
   Finally, by the cyclic property of the trace:
\begin{equation}dA=Tr(AI_{\mathcal{H}})=Tr(S)=Tr(V^*V)=\sum_{j\in J}||f_j||^2,\end{equation}
\noindent and
\begin{equation}dA^2=Tr(A^2I_{\mathcal{H}})=Tr(S^2)=Tr((V^*V)^2)=\sum_{j\in J}\sum_{k\in J}|\braket{f_j}{f_k}|^2.\end{equation}

\end{proof}

Another operator that is useful when analyzing frames is the \textbf{Gramian} $G:=V^*V$. This can be used to characterize the tight frame. To do so, let us note the following:

\begin{obs}

Given a tight frame $(\ket{f_j})_{j\in J}$ in $\mathcal{H}$ with constant $A$, if each vector is divided by $\sqrt{A}$, then we get again a tight frame with constant $1$ due to (\ref{2.6}) and the proposition \ref{3}. This is called a \textbf{normalized tight frame}.

\end{obs}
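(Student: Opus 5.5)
To prove the observation, I would scale each vector and track what happens to the frame operator. Given a tight frame $(\ket{f_j})_{j\in J}$ with constant $A>0$, define $\ket{g_j}:=\ket{f_j}/\sqrt{A}$. By Observation \ref{o2.1.1}, the frame operator of any sequence is $\sum_{j\in J}\ket{f_j}\bra{f_j}$. Let $S_g$ denote the frame operator of the rescaled sequence. Since the outer product is sesquilinear and $\sqrt{A}$ is real, each term picks up a factor $1/A$, so
\begin{equation*}
S_g=\sum_{j\in J}\ket{g_j}\bra{g_j}=\frac{1}{A}\sum_{j\in J}\ket{f_j}\bra{f_j}=\frac{1}{A}S.
\end{equation*}

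Next, I would invoke Proposition \ref{3} in the ``only if'' direction. Because $(\ket{f_j})_{j\in J}$ is tight with constant $A$, we have $S=AI_{\mathcal{H}}$, and therefore $S_g=I_{\mathcal{H}}$.

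Finally, I would apply Proposition \ref{3} again, this time in the ``if'' direction with constant $1>0$. Since $S_g=1\cdot I_{\mathcal{H}}$, the sequence $(\ket{g_j})_{j\in J}$ is a tight frame with constant $1$. As a direct check of the definition (\ref{2.1}), the frame sum equals $\bra{f}S_g\ket{f}=\|f\|^2$, so both frame bounds are $1$.

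There is no real obstacle here. The only point needing care is that the factor $1/\sqrt{A}$ is real and positive. This is guaranteed because the tight-frame constant satisfies $A>0$, and it ensures the factor comes out as $1/A$ with no complex-conjugate subtlety.
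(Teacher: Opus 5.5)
Your proposal is correct and follows essentially the same route the paper indicates. You use (\ref{2.6}) to see that rescaling by $1/\sqrt{A}$ multiplies the frame operator by $1/A$, and then apply Proposition \ref{3} in both directions: $S=AI_{\mathcal{H}}$ gives $S_g=I_{\mathcal{H}}$, so the rescaled sequence is tight with constant $1$.
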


\begin{prop}\label{xdddd}
A frame $(\ket{f_j})_{j\in J}$ in $\mathcal{H}$ is normalized tight frame if and only if $G$ is an orthogonal projection.    
\end{prop}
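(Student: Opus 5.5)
The plan is to work entirely with the synthesis operator $V$, using the identities $S=VV^*$ and $G=V^*V$ together with Proposition \ref{3}, which says that a normalized tight frame is exactly one with $S=I_{\mathcal{H}}$. Throughout, $G$ is automatically self-adjoint, since $G^*=(V^*V)^*=V^*V=G$. So $G$ is an orthogonal projection if and only if it is idempotent, $G^2=G$, and both directions reduce to relating $G^2$ with $G$.

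For the forward direction, assume the frame is normalized tight. Proposition \ref{3} gives $VV^*=I_{\mathcal{H}}$. Then
\begin{equation*}
G^2=V^*(VV^*)V=V^*V=G,
\end{equation*}
and together with self-adjointness this shows that $G$ is an orthogonal projection. Its range is $\mathrm{Ran}(V^*)$, although we do not need this.

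For the converse, assume $G^2=G$. Then $V^*(VV^*-I_{\mathcal{H}})V=0$, which we write as $V^*(S-I_{\mathcal{H}})V=0$. By Proposition \ref{4}, $V$ is surjective onto $\mathcal{H}$ because the sequence is a frame. Surjectivity of $V$ is equivalent to injectivity of $V^*$. We then cancel in two steps:
\begin{enumerate}
\item Injectivity of $V^*$ gives $(S-I_{\mathcal{H}})V=0$.
\item Surjectivity of $V$ then gives $S-I_{\mathcal{H}}=0$.
\end{enumerate}
Hence $S=I_{\mathcal{H}}$, and Proposition \ref{3} shows that the frame is tight with $A=1$, that is, normalized tight.

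The only step needing care is this cancellation in the converse. It is exactly where the frame hypothesis, and not merely $G^2=G$, is used: for a non-spanning sequence the equation $V^*(S-I)V=0$ only forces $S=I$ on the span of the vectors. The justification is the finite-dimensional fact $\ker V^*=(\mathrm{Ran}\,V)^\perp=\{0\}$.

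As an alternative check, $G^2=G$ means $G$ has eigenvalues in $\{0,1\}$. Since $S$ and $G$ share their nonzero eigenvalues, and $S$ is invertible, every eigenvalue of $S$ equals $1$, so $S=I_{\mathcal{H}}$. This gives the same conclusion by a spectral argument.
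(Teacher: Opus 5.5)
Your proof is correct and essentially the paper's. The forward direction is the same computation $G^2=V^*(VV^*)V=G$. For the converse, the paper writes $SV=VG=VG^2=S^2V$, cancels the invertible $S$, and then uses surjectivity of $V$; you instead cancel the injective $V^*$ in $V^*(S-I_{\mathcal{H}})V=0$, which is the same cancellation idea resting on Proposition \ref{4}.
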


\begin{proof}
    From the proposition \ref{3}, it follows that if the vectors form a normalized tight frame, then $G^2=V^*\underbrace{(VV^*)}_{=I}V=G=G^*$. On the other hand, if $G$ is a orthogonal projection, since the vectors generate by proposition \ref{4}, then $S$ is bijective, and thus:
\begin{equation}VG=SV=VG^2=S^2V,\end{equation}

\noindent thus, $V=SV$ and since $V$ is surjective, it follows that $S=I_{\mathcal{H}}$.
\end{proof}

The Gramian also allows us to encode the information of certain equivalences between frames.

\begin{defi} The sequences $(\ket{f_j})_{j\in J}$ and $(\ket{g_j})_{j\in J}$ are \textbf{unitarily equivalent} if there exists a unitary operator $U$ such that $U\ket{f_j}=\ket{g_j}$ for any $j\in J$. On the other hand, the sets are said to be \textbf{unitarily equivalent under rearrangement} if there exists a permutation $\sigma \in S_J$ such that $(\ket{f_j})_{j\in J}$ and $(\ket{g_{\sigma(j)}})_{j\in J}$ are unitarily equivalent. 
\end{defi}

\begin{ex}\label{6} The frame given by:
\begin{equation}\ket{g_1}=\dfrac{1}{\sqrt{2}}\begin{bmatrix}
    -1 \\
    1
\end{bmatrix}\,\,\,\,\, \ket{g_2}=\dfrac{1}{\sqrt{2}}\begin{bmatrix}
    -\omega^2 \\
    1
\end{bmatrix} \,\,\,\, \, \ket{g_3}=\dfrac{1}{\sqrt{2}}\begin{bmatrix}
    -\omega \\
    1
\end{bmatrix},\end{equation}

\noindent is unitarily equivalent under rearrangement to the frame from the example \ref{5}, with $\sigma=(2\, 3)$ (cycle notation) and $U=R_{\pi/2}$ the rotation matrix in $\pi/2$ (counterclockwise).
\end{ex}
\begin{prop}\label{hhh} Two frames are unitarily equivalent under rearrangement if their Gramian matrices are equal up to a permutation of rows and columns.
\end{prop}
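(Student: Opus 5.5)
The plan is to reduce the statement to the case without rearrangement and then build the unitary directly from the Gramian. Suppose $G_f$ and $G_g$ are the Gramians of $(\ket{f_j})_{j\in J}$ and $(\ket{g_j})_{j\in J}$. Suppose also that $G_g$ becomes $G_f$ after permuting rows and columns by the same $\sigma\in S_J$, i.e. $\braket{g_{\sigma(j)}}{g_{\sigma(k)}}=\braket{f_j}{f_k}$ for all $j,k$. Replacing the sequence $(\ket{g_j})$ by $(\ket{g_{\sigma(j)}})$, it is enough to show that \emph{two frames with identical Gramians are unitarily equivalent}. The definition of unitary equivalence under rearrangement then gives the claim. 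Note that the permutation must act simultaneously on rows and columns; I will state this as the reading of ``up to a permutation''.

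For the reduced claim, let $V$ and $W$ be the synthesis operators, so $V^*V=W^*W$. I define $U$ on $\mathcal{H}$ by
\[
U\Bigl(\sum_{j} a_j\ket{f_j}\Bigr)=\sum_j a_j\ket{g_j},
\]
that is, $UV=W$. The key step is well-definedness. If $Va=0$, then
\[
\|Wa\|^2=\langle a,W^*Wa\rangle=\langle a,V^*Va\rangle=\|Va\|^2=0,
\]
so $Wa=0$. Hence $\ker V\subseteq\ker W$, and by symmetry the two kernels are equal. Since the $\ket{f_j}$ span $\mathcal{H}$ by Proposition \ref{4}, every vector has the form $Va$, so $U$ is a linear map defined on all of $\mathcal{H}$.

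Next I check that $U$ is unitary. Isometry is the same computation as above: $\|UVa\|^2=\|Wa\|^2=\langle a,W^*Wa\rangle=\langle a,V^*Va\rangle=\|Va\|^2$. The image of $U$ is the span of the $\ket{g_j}$, which is $\mathcal{H}$ again by Proposition \ref{4}, since $(\ket{g_j})$ is a frame. An isometry of the finite-dimensional space $\mathcal{H}$ onto itself is unitary. Finally, $U\ket{f_j}=U(Ve_j)=We_j=\ket{g_j}$, which is exactly unitary equivalence.

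The only delicate step is well-definedness of $U$, namely that linear relations among the $\ket{f_j}$ transfer to the $\ket{g_j}$. The norm identity coming from equal Gramians handles it. Everything else is bookkeeping about the permutation. One could also mention the converse, that unitarily equivalent frames have permuted Gramians, which is immediate from $\braket{Uf_j}{Uf_k}=\braket{f_j}{f_k}$, but the statement only asks for the stated direction.
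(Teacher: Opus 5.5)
Your proof is correct, and it uses a different mechanism from the paper. The paper reindexes so that $\ket{f_1},\dots,\ket{f_d}$ form a basis, which needs the follow-up observation about an auxiliary permutation $\tau$. It then defines $U$ only on that basis by $U\ket{f_j}=\ket{g_{\sigma(j)}}$ and asserts that $U$ is unitary. Finally it handles the redundant vectors $j>d$ separately, by showing that $U\ket{f_j}-\ket{g_{\sigma(j)}}$ is orthogonal to the spanning set $\{\ket{g_{\sigma(i)}}\}_{i\le d}$.

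You instead define $U$ on all of $\mathcal{H}$ at once through $UV=W$. The redundancy among the frame vectors is absorbed into the kernel inclusion $\ker V\subseteq\ker W$, which follows from $V^*V=W^*W$.

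What your route buys:
\begin{itemize}
\item It is basis-free, so it needs neither the reindexing nor the separate check for $j>d$.
\item It actually justifies unitarity through $\|Wa\|=\|Va\|$. The paper leaves this step implicit: it silently uses that $\ket{g_{\sigma(1)}},\dots,\ket{g_{\sigma(d)}}$ share the invertible Gram matrix of $\ket{f_1},\dots,\ket{f_d}$ and hence are also a basis.
\end{itemize}

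What the paper's route buys is concreteness: $U$ is given explicitly as the matrix sending one chosen basis of frame vectors to the corresponding one.

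The paper's proof also establishes the converse, which you only mention in passing. As you say, it is immediate from $\braket{Uf_j}{Uf_k}=\braket{f_j}{f_k}$.
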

\begin{proof}
  If $(\ket{f_j})_{j\in J}$ and $(\ket{g_{\sigma(j)}})_{j\in J}$ are frames unitarily equivalent, then $\braket{f_i}{f_j}=\braket{g_{\sigma(i)}}{g_{\sigma(j)}}$ for all $i,j\in J$, thus, their Gramian are equal, and as the Gramian of $(\ket{g_{j}})_{j\in J}$
  is equal to the Gramian of $(\ket{g_{\sigma(j)}})_{j\in J}$ with rows and columns permuted by $\sigma^{-1}$, it is concluded by transitivity. On the other hand, if the Gramian are equal up to a permutation of rows and columns $\sigma$, since the frames generate, we can take wlog $J=\{1\dots N\}$ and $(\ket{f_j})_{j=1}^d$ basis, with which we define $U\ket{f_j}=\ket{g_{\sigma(j)}}$ for $j\leq d$, then $U$ is unitary, and for $j>d$ and $i\leq d$ we have that $(U\ket{f_j}-\ket{g_{\sigma(j)}})^{*}\ket{g_{\sigma(i)}}=0$, therefore $U\ket{f_j}=\ket{g_{\sigma(j)}}$ for any $j\in J$.

\end{proof}
\begin{obs}
    The wlog is due to the fact that there exists a permutation $\tau$ such that $(\ket{f_{\tau(j)}})_{j=1}^d$ are basis, then $(\ket{f_{\tau(j)}})_{j\in J}$ and $(\ket{g_{\sigma\tau(j)}})_{j\in J}$ are unitarily equivalent, which implies the desired result.
\end{obs}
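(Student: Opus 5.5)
The claim in the observation has two parts. First, some permutation $\tau$ puts a basis in the first $d$ slots. Second, proving Proposition \ref{hhh} for the reindexed pair $(\ket{f_{\tau(j)}})_{j\in J}$ and $(\ket{g_{\sigma\tau(j)}})_{j\in J}$ yields it for the original pair. I will address them in that order.

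\textbf{Existence of $\tau$.} By Proposition \ref{4}, the frame $(\ket{f_j})_{j\in J}$ spans $\mathcal{H}$, and $\dim\mathcal{H}=d$. Any finite spanning family contains a basis, which I extract greedily: run through $J=\{1,\dots,N\}$ and keep each vector not already in the span of those kept. This gives indices $j_1,\dots,j_d$ with $(\ket{f_{j_k}})_{k=1}^d$ a basis. Let $\tau\in S_J$ be any permutation with $\tau(k)=j_k$ for $k\le d$, extended bijectively on the remaining indices. Then $(\ket{f_{\tau(j)}})_{j=1}^d$ is a basis.

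\textbf{Transfer of the hypothesis.} The hypothesis is that $\braket{f_i}{f_j}=\braket{g_{\sigma(i)}}{g_{\sigma(j)}}$ for all $i,j\in J$. Substituting $i\mapsto\tau(i)$ and $j\mapsto\tau(j)$ gives
\begin{equation*}
\braket{f_{\tau(i)}}{f_{\tau(j)}}=\braket{g_{\sigma\tau(i)}}{g_{\sigma\tau(j)}}.
\end{equation*}
So the Gramians of $(\ket{f_{\tau(j)}})_{j\in J}$ and $(\ket{g_{\sigma\tau(j)}})_{j\in J}$ are exactly equal. This reindexed pair therefore satisfies the hypothesis of the argument in Proposition \ref{hhh}, with permutation $\sigma\tau$ and with a basis in the first $d$ positions. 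That argument then supplies a unitary $U$ with $U\ket{f_{\tau(j)}}=\ket{g_{\sigma\tau(j)}}$ for every $j\in J$.

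\textbf{Transfer of the conclusion.} Since $\tau$ is a bijection of $J$, put $k=\tau(j)$. As $j$ ranges over $J$, so does $k$, and the identity becomes $U\ket{f_k}=\ket{g_{\sigma(k)}}$ for all $k\in J$. This says $(\ket{f_k})_{k\in J}$ and $(\ket{g_{\sigma(k)}})_{k\in J}$ are unitarily equivalent, which is the conclusion of Proposition \ref{hhh} for the original indexing.

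The only step requiring care is this last bookkeeping. One must check that relabelling by $\tau$ is applied consistently on both sides, so that the composite permutation is $\sigma\tau$ and not $\tau\sigma$, and that undoing it recovers $\sigma$ itself. Everything else is immediate.
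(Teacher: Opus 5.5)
Your proposal is correct and is essentially the paper's argument written out in full. The paper's observation only asserts that $\tau$ exists and that the reindexed families are unitarily equivalent. You justify $\tau$ by extracting a basis from the spanning family (Proposition~\ref{4}), check that the Gram identity transfers with composite permutation $\sigma\tau$, and undo the relabelling via $k=\tau(j)$ to obtain $U\ket{f_k}=\ket{g_{\sigma(k)}}$ for all $k\in J$.
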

\begin{ex}\label{7} From the example \ref{6}, we can observe that, the Gramians $G_f$ and $G_g$ of $(\ket{f_j})_{j}$ and $(\ket{g_j})_{j}$ (respectively), are:
\begin{equation}G_f=\dfrac{1}{2}\begin{pmatrix}
   2&1+\omega&1+\overline{\omega} \\
      1+\overline{\omega}&2&1+\omega \\
      1+\omega &1+\overline{\omega}& 2

\end{pmatrix}\,\,\,\,\,\,\,\,\, G_g=\dfrac{1}{2}\begin{pmatrix}
   2&1+\overline{\omega}&1+\omega \\
      1+\omega&2&1+\overline{\omega} \\
      1+\overline{\omega} &1+\omega& 2

\end{pmatrix},\end{equation}

\noindent where we note that $G_f$ is equal to $G_g$ by permuting the 2nd and 3rd columns and rows.
    
\end{ex}

With the help of the Gramian, we can deduce the following proposition that provides a relationship between orthonormal bases and tight frames.

\begin{prop}\label{8} A frame $(\ket{f_j})_{j\in J}$ in $\mathcal{H}$ is normalized tight frame if and only if it is unitarily equivalent to an orthogonal projection of an orthonormal basis for $l_2(J)$.

\end{prop}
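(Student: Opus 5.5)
The plan is to pass through the Gramian and use Proposition \ref{xdddd} together with the characterization of unitary equivalence behind Proposition \ref{hhh} (with the trivial permutation). Let $(\ket{e_j})_{j\in J}$ be the standard orthonormal basis of $l_2(J)$. For a subspace $K\subseteq l_2(J)$ with orthogonal projection $P$, the vectors $(P\ket{e_j})_{j\in J}$ have Gramian entries $\bra{e_i}P^*P\ket{e_j}=\bra{e_i}P\ket{e_j}$. So the Gramian of the projected basis, viewed as a frame for $K$, is exactly the matrix of $P$ in the standard basis.

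For the forward direction, assume $(\ket{f_j})_{j\in J}$ is a normalized tight frame. By Proposition \ref{xdddd}, $G=V^*V$ is an orthogonal projection on $l_2(J)$. Take $P=G$ and $K=\operatorname{Ran}G$. I will show directly that $V^*:\mathcal{H}\to K$ is unitary. First, $V^*$ is an isometry, since $VV^*=S=I_{\mathcal{H}}$ by Proposition \ref{3}. Its range is $\operatorname{Ran}V^*=\operatorname{Ran}V^*V=\operatorname{Ran}G=K$, using $V^*=V^*VV^*$. Finally, $V^*\ket{f_j}=V^*V\ket{e_j}=G\ket{e_j}=P\ket{e_j}$. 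So $U:=V^*$ is a unitary from $\mathcal{H}$ onto $K$ carrying $\ket{f_j}$ to $P\ket{e_j}$, as required. This avoids needing a general version of Proposition \ref{hhh}.

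For the converse, suppose $U:\mathcal{H}\to K$ is unitary with $U\ket{f_j}=P\ket{e_j}$. For $\ket{f}\in\mathcal{H}$, let $\ket{k}=U\ket{f}\in K$. Then
\begin{equation*}
\sum_j|\braket{f_j}{f}|^2=\sum_j|\bra{e_j}P\ket{k}|^2=\sum_j|\braket{e_j}{k}|^2=\|k\|^2=\|f\|^2 .
\end{equation*}
This uses Parseval's identity in $l_2(J)$ together with $P\ket{k}=\ket{k}$. Hence the frame is tight with $A=1$, i.e.\ a normalized tight frame. Equivalently, the Gramian equals $P$, and Proposition \ref{xdddd} applies.

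The main subtlety is interpretive rather than technical. "Unitarily equivalent" must mean a unitary between $\mathcal{H}$ and the subspace $K=\operatorname{Ran}P$, not a unitary on a single space as in the earlier definition. I will state this explicitly at the outset. The only real step is then checking that $V^*$ maps onto $\operatorname{Ran}G$, which follows from the identity $V^*=V^*VV^*$.
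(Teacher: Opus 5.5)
Your proof is correct, but it takes a different route from the paper's in both directions.

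\textbf{Forward direction.} Like you, the paper starts from Proposition \ref{xdddd} to know that $G$ is an orthogonal projection. It then only checks $(G\ket{e_i})^*(G\ket{e_j})=\bra{e_i}G\ket{e_j}=\braket{f_i}{f_j}$ and invokes Proposition \ref{hhh} (equal Gramians imply unitary equivalence), so the unitary is obtained abstractly. You instead write the unitary down as $U=V^*$, the analysis operator. You show it is an isometry ($VV^*=I_{\mathcal{H}}$), that it maps onto $\mathrm{Ran}\,G$ (via $V^*=GV^*$), and that $V^*\ket{f_j}=G\ket{e_j}$.

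\textbf{Converse.} The paper writes $UV=PV'$, deduces $G=V'^*PV'$, which is an orthogonal projection, and applies Proposition \ref{xdddd} again. You verify the frame bound directly with Parseval.

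\textbf{What each route buys.} The paper's route is uniform: everything is funneled through the Gramian, which is the point of that subsection. However, it inherits a looseness from Proposition \ref{hhh}. The vectors $G\ket{e_j}$ live in $l_2(J)$, not in $\mathcal{H}$, so the paper tacitly needs a unitary from $\mathcal{H}$ onto $\mathrm{Ran}\,G$. That is exactly the interpretation you state explicitly, and your construction of $V^*$ makes it concrete. Your Parseval argument also never uses the spanning hypothesis that the converse half of Proposition \ref{xdddd} relies on; tightness, and hence spanning, come out as a conclusion.

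\textbf{Minor point.} The paper lets $V'$ be the synthesis operator of an arbitrary orthonormal basis of $l_2(J)$, whereas you fix the standard basis. Your converse goes through verbatim with $\ket{e_j}$ replaced by any orthonormal basis $(\ket{u_j})_{j\in J}$, since Parseval holds for each.
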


\begin{proof}
    If $(\ket{f_j})_{j\in J}$ is unitarily equivalent to an orthogonal projection $P$ of an orthonormal basis for $l_2(J)$, then, calling $V$ and $V^{\prime}$ the synthesis operator of the frame and the orthonormal basis (respectively) and $U$ the unitary mapping of the frame to it, we have that $UV=PV^{\prime}$, thus, $G=V^{\prime *}PV^{\prime}$, therefore is an orthogonal projection and by proposition \ref{xdddd} the frame is normalized and tight. On the other hand, if $(\ket{f_j})_{j\in J}$ is a normalized tight frame, we know by proposition \ref{xdddd} that its Gramian $G$ is an orthogonal projection. Let $\ket{j}$ be the canonical basis in $l_2(J)$, then we have that:
\begin{equation} (G\ket{i})^{*}(G\ket{j})=\bra{i}G\ket{j}=\braket{f_i}{f_j},\end{equation}
\noindent thus, by proposition \ref{hhh}, $(G \ket{j})_{j\in J}$ is unitarily equivalent to $(\ket{f_j})_{j\in J}$.
\end{proof}

The previous proposition allows us to represent a normalized tight frame in $l_2(J)$ under unitary equivalence. Moreover, it is also possible to obtain, from this representation alone, one in $\mathbb{K}^d$.

\begin{prop}\label{9}
    Let $P\in \mathbb{K}^{n\times n}$ orthogonal projection of rank $d$, then the columns of $V=[f_1,\dots,f_n]\in \mathbb{K}^{d\times n}$ form a normalized tight frame in $\mathbb{K}^{d}$ with Gramian $P$ if and only if the rows of $V$ are an orthonormal basis of the row space $row(P)$.\end{prop}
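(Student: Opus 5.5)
We translate both conditions into statements about the $d\times n$ matrix $V$ and compare them. The columns of $V$ form a normalized tight frame exactly when the frame operator is the identity, i.e.\ $VV^*=I_d$ (proposition \ref{3}). Entrywise, $VV^*=I_d$ says that the rows $r_1,\dots,r_d$ of $V$ are orthonormal in $\mathbb{K}^n$, since $(VV^*)_{ik}=r_i r_k^*$. The Gramian of the columns is $G=V^*V$. So the statement reduces to showing that, for a rank-$d$ orthogonal projection $P$,
\[
\bigl(VV^*=I_d \text{ and } V^*V=P\bigr)\iff \text{the rows of } V \text{ are an orthonormal basis of } row(P).
\]

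($\Rightarrow$) Assume $VV^*=I_d$ and $V^*V=P$.
\begin{enumerate}[(i)]
\item Orthonormality of the rows is immediate from $VV^*=I_d$.
\item To place the rows inside $row(P)$, we use $V=VV^*V=VP$. Hence each row satisfies $r_i=r_iP$, so $r_i$ lies in the row space of $P$. Because $P$ is Hermitian, this row space is the image of $\overline{P}$ acting on row vectors, and any vector fixed by right multiplication by $P$ belongs to it.
\item There are $d$ rows, they are orthonormal hence linearly independent, and $\dim row(P)=\operatorname{rank}P=d$. So they form a basis of $row(P)$.
\end{enumerate}

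($\Leftarrow$) Assume the rows form an orthonormal basis of $row(P)$.
\begin{enumerate}[(i)]
\item Orthonormality gives $VV^*=I_d$, so the columns form a normalized tight frame. They span $\mathbb{K}^d$ because $V$ has rank $d$, so they form a frame by proposition \ref{4}.
\item It remains to show $V^*V=P$. The matrix $V^*V=\sum_i r_i^*r_i$ is the orthogonal projection, acting on the right on row vectors, onto $\operatorname{span}\{r_i\}=row(P)$.
\item The main obstacle is identifying this with $P$ itself. A Hermitian idempotent $P$ acting on row vectors by $x\mapsto xP$ is the orthogonal projection onto $\{x: xP=x\}$, and that set is exactly $row(P)$: from $P^2=P$ every row of $P$ is fixed, and a fixed vector $x=xP$ is a combination of rows of $P$. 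Two orthogonal projections with the same range coincide. Therefore $V^*V=P$.
\end{enumerate}

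The argument is careful bookkeeping of conjugates and of row versus column conventions; in the real case these issues disappear. The key facts used are $V=VP$ and the uniqueness of the orthogonal projection onto a given subspace.
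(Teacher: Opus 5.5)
Your proposal is correct and follows essentially the paper's route. In the forward direction you use $V=VV^*V=VP$ to place the rows in $row(P)$, which is the paper's inclusion $row(V)=row(VV^*V)\subset row(V^*V)$, and you finish with a dimension count where the paper uses the reverse inclusion. In the converse you identify $V^*V$ and $P$ as orthogonal projections with the same range, exactly as the paper does, and your explicit treatment of the row-versus-column and conjugation conventions is more careful than the paper's step from $row(G)=row(P)$ to $Im(G)=Im(P)$.
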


 \begin{proof}
     If $(f_j)_{j\leq n}$ is a normalized tight frame with Gramian $P$, then $VV^*=I$, thus, the rows of $V$ are an orthonormal basis and:
\begin{equation}row(G)=row(V^*V)\subset row(V)=row(\underbrace{(VV^*)}_{=I}V)\subset row(V^*V)=row(G),\end{equation}

     \noindent and since $G=P$, it follow that $row(V)=row(G)=row(P)$. On the other hand if the rows of $V$ are an orthonormal basis then $VV^*=I$, thus, $(f_j)_{j\leq n}$ is a normalized tight frame and it follow that $G$ is a orthogonal projection with $row(G)=row(V)=row(P)$, then, the images $Im(G)=Im(P)$, therefore $G=P$.
     
 \end{proof}

 \begin{ex} Normalizing the frame from the example \ref{7}, we can observe that:
\begin{equation}G_f=\dfrac{1}{3} \begin{pmatrix}
   2&-\overline{\omega}&-\omega \\
      -\omega&2&-\overline{\omega}\\
      -\overline{\omega}&-\omega& 2
      \end{pmatrix}.\end{equation}
\noindent By the proposition \ref{8}, the columns of $G_f$ are a canonical copy of $(\sqrt{2/3}\, \ket{f_j})_{j\in J}$ in $\mathbb{C}^3$ up to unitarily equivalence and by proposition \ref{9}, we can recover the tight frame in $\mathbb{C}^2$ using Gram Schmidt in the rows of $G_f$:
\begin{equation}V=\sqrt{\dfrac{2}{3}} \begin{pmatrix}
   1&-\overline{\omega}/2&-\omega/2 \\
      0&\sqrt{3}/2 &-\sqrt{3}\,\overline{\omega}/2\\
      \end{pmatrix}.\end{equation}
    
 \end{ex}

 \newpage

 \section{Representation theory}\label{representation}
In this section we will work with a finite group $\mathcal{G}$ and we denote $Gl(\mathcal{H})$ as the group of invertible operators of $\mathcal{H}$ on $\mathcal{H}$. The main object of study will be given by the following definition.

\begin{defi}
    We call $D:\mathcal{G}\to Gl(\mathcal{H})$ a \textbf{representation} of $\mathcal{G}$ in $\mathcal{H}$ if $D$ is a group homomorphism. The dimensionality of the representation is given by the dimension of $\mathcal{H}$.

\end{defi}

This definition will be relevant later for the study of a specific type of frames, called group frames. With this objective in mind, it is important to note that
    a representation $D$ induces a linear action of $\mathcal{G}$ on 
$\mathcal{H}$ which is given by $g\ket{v}:=D(g)\ket{v}$. The following are some examples of representations:

\begin{ex} \textbf{The trivial representation} of $\mathcal{G}$ in $\mathcal{H}$ is given by the trivial homomorphism $D(g)=I_{\mathcal{H}}$.

\end{ex}

\begin{ex} \textbf{The determinant associated with a representation \boldsymbol{D}}, i.e $Det_D(g)=Det(D(g))$, is a one-dimensional representation.

\end{ex}

\begin{ex} \textbf{The one-dimensional representations of the cyclic group \boldsymbol{C_n}} are of the form $D(a^k)=\omega^{k}$, where $\omega=D(a)$ is an $n-th$ root of unity in $\mathbb{K}$.
\end{ex}

To study the representations of a group, equivalence under similarity relations is used, which allows for simplifying the analysis.

\begin{defi}
    Two representation $D_1$,$D_2$ of $\mathcal{G}$ in Hilbert spaces $\mathcal{H}_1$ and $\mathcal{H}_2$ respectively are \textbf{equivalent} if exists
    a invertible linear operator $A:\mathcal{H}_1\to \mathcal{H}_2$ such that for any $g\in \mathcal{G}$:
  \begin{equation}
        D_1(g)=A^{-1}D_2(g)A.
    \end{equation}

\end{defi}

\newpage

The following result allows us to reduce the study to a particular type of representations.

\begin{prop}\label{10}
    Every representation $D$ of $\mathcal{G}$ in $\mathcal{H}$ is equivalent to a unitary representation.
\end{prop}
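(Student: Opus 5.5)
The plan is the standard averaging (Weyl's unitary trick). Since $\mathcal{G}$ is finite, I will build a new inner product on $\mathcal{H}$ for which every $D(g)$ is unitary. Concretely, I define
\begin{equation*}
\langle v, w\rangle_{\mathcal{G}} := \frac{1}{|\mathcal{G}|}\sum_{h\in\mathcal{G}} \braket{D(h)v}{D(h)w}.
\end{equation*}
Equivalently, I introduce the operator $M:=\frac{1}{|\mathcal{G}|}\sum_{h\in\mathcal{G}} D(h)^*D(h)$, so that $\langle v,w\rangle_{\mathcal{G}}=\bra{v}M\ket{w}$.

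The first step is to check that $M$ is self-adjoint and positive definite. Each term $D(h)^*D(h)$ is positive semidefinite. The term with $h=e$ is $I_{\mathcal{H}}$, because $D$ is a homomorphism and so $D(e)=I_{\mathcal{H}}$. Hence $\bra{v}M\ket{v}\geq \frac{1}{|\mathcal{G}|}\|v\|^2>0$ for $v\neq 0$.

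The second step is the invariance identity $D(g)^*MD(g)=M$ for every $g\in\mathcal{G}$. To prove it, I write $D(g)^*D(h)^*D(h)D(g)=D(hg)^*D(hg)$ and use that $h\mapsto hg$ is a bijection of $\mathcal{G}$, so the sum is just reindexed. This reindexing is the only place where finiteness of $\mathcal{G}$ and the group structure really enter.

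The third step is to produce the intertwiner. By the spectral theorem, $M$ has a positive definite square root $A:=M^{1/2}$, which is invertible. I then set $D'(g):=AD(g)A^{-1}$. This is clearly a homomorphism into $Gl(\mathcal{H})$, and it is equivalent to $D$ in the sense of the definition, via $D(g)=A^{-1}D'(g)A$. For unitarity, I compute
\begin{equation*}
D'(g)^*D'(g)=A^{-1}D(g)^*A^2D(g)A^{-1}=A^{-1}MA^{-1}=I_{\mathcal{H}},
\end{equation*}
using $A^*=A$ together with the invariance identity.

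I do not expect a real obstacle. The only delicate point is positive definiteness of $M$, since it is what makes $A$ invertible, and the identity element handles it. The argument works verbatim for $\mathbb{K}=\mathbb{R}$, with adjoints read as transposes, since the spectral theorem for real symmetric positive definite matrices still provides a real square root.
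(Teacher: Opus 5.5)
Your proof is correct and is essentially the paper's argument. The paper averages $H=\sum_{g}D(g)D(g)^{*}$, conjugates as $\sqrt{H}^{-1}D(g)\sqrt{H}$, and checks $\tilde{D}(g)\tilde{D}(g)^{*}=I_{\mathcal{H}}$ by the same reindexing; this is the mirror image of your $M=\frac{1}{|\mathcal{G}|}\sum_{h}D(h)^{*}D(h)$ with conjugation $AD(g)A^{-1}$.
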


\begin{proof}
    Let us consider the following operator:
\begin{equation}H:=\sum_{g\in \mathcal{G}} D(g)D^{*}(g).\end{equation}

\noindent Since $H$ is the sum of positive definite operators, it is positive definite; therefore, it is possible to define $\sqrt{H}$ which is positive definite. Finally, by the rearrangement lemma, we note that the operator $\Tilde{D}(g):=\sqrt{H}^{-1}D(g)\sqrt{H}$ is unitary. In fact:
$$\Tilde{D}(g)\tilde{D}^{*}(g)=\left(\sqrt{H}^{-1}D(g)\sqrt{H}\right)\left(\sqrt{H} D^{*}(g)\sqrt{H}^{-1}\right))=\sqrt{H}^{-1}\left(\sum_{h\in \mathcal{G}} D(gh)D^{*}(hg)\right)\sqrt{H}^{-1}$$
\begin{equation}=\sqrt{H}^{-1}H\sqrt{H}^{-1}=I_{\mathcal{H}}.\end{equation}

\end{proof}

From this, it is possible to study the basic constituents of representations.

\begin{defi} A subspace $W$ of 
$\mathcal{H}$ is said to be \textbf{invariant} with respect to the representation 
$D$ if for every $g\in \mathcal{G}$, the image $D(g)W\subset W$. If the representation has exactly two invariant subspaces (the trivial $\{0\}$ and $\mathcal{H}$), it is said to be \textbf{irreducible}; otherwise, it is called \textbf{reducible}. In the case that 
$D$ is reducible and the orthogonal complement $W^{\perp}$ is invariant with respect to $D$, we say that $D$ is $\textbf{completely reducible}$.

\end{defi}

\begin{obs}\label{11}
    If $D$ is completely reducible, writing space as a direct sum $\mathcal{H}=W\bigoplus W^{\perp}$, $D$ can be viewed as a direct sum of operators $D=D_1\oplus D_2$, where $D_1(g)=D(g)\vert_{W}$ and  $D_2(g)=D(g)\vert_{W^{\perp}}$ are representations in $W$ and $W^{\perp}$ respectively. 
\end{obs}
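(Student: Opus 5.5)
The statement to justify is Observation \ref{11}: if $D$ is completely reducible, with $W$ invariant and $W^{\perp}$ also invariant, then $D$ splits as $D=D_1\oplus D_2$, where $D_1(g)=D(g)\vert_{W}$ and $D_2(g)=D(g)\vert_{W^{\perp}}$ are representations on $W$ and $W^{\perp}$.

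I will argue directly from the definitions, in three steps.

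\textbf{Step 1 (the restrictions are well defined).} Every $\ket{v}\in\mathcal{H}$ decomposes uniquely as $\ket{v}=\ket{w}+\ket{w^{\perp}}$ with $\ket{w}\in W$ and $\ket{w^{\perp}}\in W^{\perp}$, since $\mathcal{H}$ is finite dimensional. By invariance, $D(g)\ket{w}\in W$ and $D(g)\ket{w^{\perp}}\in W^{\perp}$. Hence $D(g)\vert_{W}$ is a linear map $W\to W$, and $D(g)\vert_{W^{\perp}}$ is a linear map $W^{\perp}\to W^{\perp}$. Linearity gives
\begin{equation*}
D(g)\ket{v}=D_1(g)\ket{w}+D_2(g)\ket{w^{\perp}}.
\end{equation*}
So in a basis adapted to $W\oplus W^{\perp}$, $D(g)$ is block diagonal with blocks $D_1(g)$ and $D_2(g)$. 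This is exactly the statement $D(g)=D_1(g)\oplus D_2(g)$.

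\textbf{Step 2 (each block is invertible).} I need $D_1(g)\in Gl(W)$, and the step I expect to need care is showing that the restriction is onto. I plan to use the group structure rather than a dimension count. Invariance under $g^{-1}$ gives $D(g^{-1})W\subset W$, so $D(g^{-1})\vert_{W}$ is a map $W\to W$. Since $D$ is a homomorphism, for $\ket{w}\in W$
\begin{equation*}
D(g)\vert_{W}\,D(g^{-1})\vert_{W}\ket{w}=D(e)\ket{w}=\ket{w},
\end{equation*}
and the same holds with $g$ and $g^{-1}$ exchanged. Therefore $D_1(g)^{-1}=D_1(g^{-1})$. The identical argument applies to $W^{\perp}$ and $D_2$.

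\textbf{Step 3 (each block is a homomorphism).} For $\ket{w}\in W$, $D_1(gh)\ket{w}=D(g)D(h)\ket{w}$. Because $D(h)\ket{w}$ already lies in $W$, this equals $D_1(g)D_1(h)\ket{w}$. Thus $D_1$ is a representation on $W$, and the same computation shows $D_2$ is a representation on $W^{\perp}$.

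Nothing deeper is needed: orthogonality of the complement only ensures that the direct sum is an orthogonal one. If $D$ is unitary (Proposition \ref{10}), the blocks $D_1$ and $D_2$ are automatically unitary as well.
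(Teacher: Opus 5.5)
Your proof is correct. The paper gives no proof of this observation and treats it as immediate from the definitions of invariant subspace and complete reducibility; your three checks (block-diagonal form from the invariance of $W$ and $W^{\perp}$, invertibility via $D_1(g)^{-1}=D_1(g^{-1})$, and the homomorphism property of the restrictions) are exactly the routine verification it leaves implicit. One wording point: Proposition~\ref{10} does not say $D$ itself is unitary, only that it is equivalent to a unitary representation, so your closing remark should read ``if $D$ is unitary (which can be arranged up to equivalence by Proposition~\ref{10})''.
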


\newpage
\begin{prop}\label{12}
    If $D$ is a reducible unitary representation, then it is completely reducible.
\end{prop}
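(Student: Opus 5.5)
Since $D$ is reducible, there is an invariant subspace $W$ with $\{0\}\neq W\neq\mathcal{H}$. By the definition of complete reducibility, it is enough to show that the orthogonal complement $W^{\perp}$ is also invariant under $D$. Once that holds, Observation \ref{11} gives the decomposition $D=D_1\oplus D_2$ on $\mathcal{H}=W\bigoplus W^{\perp}$.

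The key step is to use unitarity to move $D(g)$ across the inner product. Fix $g\in\mathcal{G}$, $\ket{v}\in W^{\perp}$ and $\ket{w}\in W$. Since $D$ is a homomorphism and unitary,
\begin{equation}
D(g)^{*}=D(g)^{-1}=D(g^{-1}).
\end{equation}
Therefore
\begin{equation}
\bra{w}D(g)\ket{v}=\left(D(g)^{*}\ket{w}\right)^{*}\ket{v}=\left(D(g^{-1})\ket{w}\right)^{*}\ket{v}.
\end{equation}
Because $W$ is invariant under every element of the group, in particular under $g^{-1}$, we have $D(g^{-1})\ket{w}\in W$. Since $\ket{v}$ is orthogonal to all of $W$, the expression vanishes.

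As $\ket{w}\in W$ was arbitrary, $D(g)\ket{v}\in W^{\perp}$, so $D(g)W^{\perp}\subset W^{\perp}$ for every $g$. This proves that $W^{\perp}$ is invariant.

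There is no real obstacle here. The only point needing care is that invariance of $W$ must be applied to $g^{-1}$ rather than to $g$. This is legitimate because the definition of invariance quantifies over all group elements and $\mathcal{G}$ is closed under inverses.
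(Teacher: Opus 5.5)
Your proof is correct and follows essentially the same route as the paper: take a nontrivial invariant $W$, use $D(g)^{*}=D(g^{-1})$ to move $D(g)$ onto the $W$ side of the inner product, and conclude from the invariance of $W$ under $g^{-1}$ that $D(g)W^{\perp}\subset W^{\perp}$. Your form of the adjoint step, $\left(D(g^{-1})\ket{w}\right)^{*}\ket{v}$, is slightly cleaner than the paper's. The paper's first equality $\bra{w}D(g)\ket{w^{\perp}}=\bra{w^{\perp}}D^{*}(g)\ket{w}$ drops a complex conjugation, which is harmless only because the quantity turns out to be zero.
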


\begin{proof}
    Let $W$ non-trivial invariant subspace, $\ket{w^{\perp}}\in W^{\perp}$ and $\ket{w}\in W$, then:
\begin{equation}\bra{w}D(g)\ket{w^{\perp}}=\bra{w^{\perp}}D^{*}(g)\ket{w}=\bra{w^{\perp}}D^{-1}(g)\ket{w}=\bra{w^{\perp}}\underbrace{D(g^{-1})\ket{w}}_{\in W}=0,\end{equation}
    
\noindent therefore, $W^{\perp}$ is invariant, i.e, $D$ is completely reducible.
\end{proof}

With this result, it is possible to reduce the study of representations to those that are irreducible. There are two very important lemmas when working with irreducible representations, which are presented below:

\begin{lem}\textbf{(Schur 1)}
    If $D$ is an irreducible representation and $A\in L(\mathcal{H})$ such that the commutator $[A,D(g)]=0$ for any $g\in \mathcal{G}$, then $A=\lambda I_{\mathcal{H}}$ with $\lambda\in \mathbb{K}$.
\end{lem}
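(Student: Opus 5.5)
The plan is to use the standard eigenvalue argument, working over $\mathbb{K}=\mathbb{C}$. Algebraic closedness is essential: over $\mathbb{R}$ the statement fails. For instance, the two-dimensional rotation representation of $C_4$ on $\mathbb{R}^2$ is irreducible, yet every rotation commutes with it without being a multiple of the identity. So I would first make explicit that the lemma is meant for complex Hilbert spaces. With that settled, the argument needs no earlier results beyond the definitions of invariant subspace and irreducibility.

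First, since $\mathcal{H}$ is finite dimensional (of dimension $d\geq 1$) and $\mathbb{C}$ is algebraically closed, the characteristic polynomial $\det(A-\lambda I_{\mathcal{H}})$ has a root $\lambda\in\mathbb{C}$. Hence $A$ has an eigenvalue $\lambda$ with some eigenvector $\ket{v}\neq 0$. Set $B:=A-\lambda I_{\mathcal{H}}$. Because $I_{\mathcal{H}}$ commutes with everything, the hypothesis $[A,D(g)]=0$ gives $[B,D(g)]=0$ for all $g\in\mathcal{G}$.

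Next, consider the subspace $W:=Ker(B)$. It contains $\ket{v}$, so it is non-zero. It is also invariant under $D$: if $\ket{w}\in W$, then
\begin{equation*}
BD(g)\ket{w}=D(g)B\ket{w}=0,
\end{equation*}
so $D(g)\ket{w}\in W$. Since $D$ is irreducible, its only invariant subspaces are $\{0\}$ and $\mathcal{H}$. Therefore $W=\mathcal{H}$, which means $B=0$, i.e.\ $A=\lambda I_{\mathcal{H}}$.

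No step is genuinely hard. The only delicate point is the existence of the eigenvalue, which is exactly where $\mathbb{K}=\mathbb{C}$ is used. Over $\mathbb{R}$ one can only conclude that the commutant of an irreducible representation is a division algebra. I would therefore state the complex hypothesis explicitly in the proof rather than leave $\mathbb{K}$ general.
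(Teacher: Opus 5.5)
Your proof is correct and is essentially the paper's argument. Both take an eigenvalue $\lambda$ of $A$ and apply irreducibility to the nonzero invariant subspace $Ker(A-\lambda I_{\mathcal{H}})$; the paper first splits on whether $Ker(A)$ is trivial, which is unnecessary. Your explicit restriction to $\mathbb{K}=\mathbb{C}$ is well taken, since the paper's step ``$A$ has a non-zero eigenvalue $\lambda\in\mathbb{K}$'' silently needs algebraic closure, and the lemma as stated fails over $\mathbb{R}$, exactly as your $C_4$ rotation example shows.
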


\begin{proof}
    Note that $Ker(A)$ is invariant with respect to $D$, since for $\ket{v}\in Ker(A)$:
\begin{equation}AD(g)\ket{v}=D(g)A\ket{v}=0.\end{equation}

\noindent But $D$ is irreducible, so $Ker(A)\in\{\{0\},\mathcal{H}\}$, if $Ker(A)=\mathcal{H}$, $A=0$, and if $Ker(A)=\{0\}$, then $A$ has a non-zero eigenvalue $\lambda\in \mathbb{K}$. Since $A-\lambda I_{\mathcal{H}}$ also commutes with $D(g)$ for any $g\in \mathcal{G}$ and $Ker(A-\lambda I_{\mathcal{H}})\neq \{0\}$, it follows that $A-\lambda I_{\mathcal{H}}=0$, that is $A=\lambda I_{\mathcal{H}}$.

\end{proof}

\begin{lem}\textbf{(Schur 2)} Let $D_1$, $D_2$ irreducible representations in $\mathcal{H}_1$ and $\mathcal{H}_2$ respectively. If $A\in L(\mathcal{H}_2,\mathcal{H}_1)$ is such that $D_1(g)A=AD_2(g)$ for any $g\in \mathcal{G}$, then $A=0$ or there is an isomorphism $\mathcal{H}_1\cong \mathcal{H}_2$ and $D_1$, $D_2$ are equivalent.

\end{lem}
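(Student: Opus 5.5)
The plan is to mimic the proof of Schur 1: show that the kernel and the image of the intertwiner $A$ are invariant subspaces, and then let irreducibility force them to be trivial.

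\textbf{Kernel.} Take $\ket{v}\in Ker(A)\subset\mathcal{H}_2$. Then
\begin{equation*}
AD_2(g)\ket{v}=D_1(g)A\ket{v}=0,
\end{equation*}
so $D_2(g)\ket{v}\in Ker(A)$ for every $g\in\mathcal{G}$. Hence $Ker(A)$ is invariant under $D_2$, and irreducibility of $D_2$ gives $Ker(A)\in\{\{0\},\mathcal{H}_2\}$. If $Ker(A)=\mathcal{H}_2$, then $A=0$ and we are done. Otherwise $A$ is injective.

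\textbf{Image.} Next I would look at $Im(A)\subset\mathcal{H}_1$. For $\ket{w}=A\ket{v}$ we have
\begin{equation*}
D_1(g)\ket{w}=D_1(g)A\ket{v}=A\bigl(D_2(g)\ket{v}\bigr)\in Im(A),
\end{equation*}
so $Im(A)$ is invariant under $D_1$. By irreducibility of $D_1$, either $Im(A)=\{0\}$ or $Im(A)=\mathcal{H}_1$. The first option forces $A=0$, which contradicts injectivity in the remaining case (note $\mathcal{H}_2\neq\{0\}$). So $A$ is surjective.

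\textbf{Conclusion.} In the nonzero case $A$ is therefore a linear bijection $\mathcal{H}_2\to\mathcal{H}_1$, which gives the isomorphism $\mathcal{H}_1\cong\mathcal{H}_2$. Rewriting the hypothesis with this bijection yields
\begin{equation*}
D_2(g)=A^{-1}D_1(g)A \quad \text{for all } g\in\mathcal{G},
\end{equation*}
which is precisely the definition of equivalence of $D_1$ and $D_2$, with $A$ as the intertwining operator.

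I do not expect a genuine obstacle; this is a direct argument. The only care points are bookkeeping the direction of $A$ (from $\mathcal{H}_2$ to $\mathcal{H}_1$), so that the equivalence formula matches the definition given earlier, and making sure the case split covers the trivial possibilities. No eigenvalue argument is needed, unlike in Schur 1, so the result holds over any field $\mathbb{K}$.
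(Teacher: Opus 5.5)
Your proof is correct and follows the same route as the paper's: invariance of $Im(A)$ under $D_1$ plus irreducibility, concluding $D_2(g)=A^{-1}D_1(g)A$. Your extra kernel step, which uses irreducibility of $D_2$, is genuinely needed, since the paper goes from ``surjective'' to ``bijective'' without justification and your argument supplies exactly that missing injectivity.
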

    
\begin{proof} Note that $Im(A)$ is invariant with respect to $D_1$, since for $\ket{v}\in Im(A)$, exists $\ket{w}\in \mathcal{H}_2$ such that $\ket{v}=A\ket{w}$, then:

\begin{equation}D_1(g) \ket{v}=D_1(g)A\ket{w}=AD_2(g)\ket{w}\in Im(A),\end{equation}

\noindent thus, $Im(A)\in \{0,\mathcal{H}_1\}$. If $Im(A)=\{0\}$, then $A=0$. On the other hand, if $Im(A)=\mathcal{H}_1$, $A$ is surjective and therefore bijective, hence $\mathcal{H}_1\cong \mathcal{H}_2$ and $D_2=A^{-1} D_1 A$.     
\end{proof}

The first Schur lemma allows us to obtain information about the irreducible representations of Abelian groups.

\begin{cor}\label{x}
  The irreducible representations of an Abelian group are one-dimensional.  
\end{cor}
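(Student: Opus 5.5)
We apply Schur 1 to each group element's own representation operator. Let $D$ be an irreducible representation of the Abelian group $\mathcal{G}$ in $\mathcal{H}$, and fix $h\in\mathcal{G}$. Since $\mathcal{G}$ is Abelian, for every $g\in\mathcal{G}$ we have
\begin{equation*}
D(h)D(g)=D(hg)=D(gh)=D(g)D(h),
\end{equation*}
so $[D(h),D(g)]=0$ for all $g$. By Schur 1 (with $A=D(h)$), $D(h)=\lambda_h I_{\mathcal{H}}$ for some $\lambda_h\in\mathbb{K}$. Since $D(h)$ is invertible, $\lambda_h\neq 0$.

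Thus every operator $D(h)$ is a scalar multiple of the identity. Consequently every subspace $W\subset\mathcal{H}$ is invariant, because $D(h)W=\lambda_h W\subset W$. Take any nonzero $\ket{v}\in\mathcal{H}$; then $W=\mathrm{span}\{\ket{v}\}$ is a nonzero invariant subspace. Irreducibility (exactly two invariant subspaces, $\{0\}$ and $\mathcal{H}$) forces $W=\mathcal{H}$, so $\dim\mathcal{H}=1$. Note that the definition also excludes $\mathcal{H}=\{0\}$, since there $\{0\}=\mathcal{H}$ and only one invariant subspace exists.

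The main point needing care is the validity of Schur 1 itself. Its proof uses the existence of an eigenvalue $\lambda\in\mathbb{K}$ of $A$, which is guaranteed for $\mathbb{K}=\mathbb{C}$ (algebraically closed, finite dimension) but not for $\mathbb{R}$. For example, the rotation representation of $C_3$ on $\mathbb{R}^2$ is irreducible over $\mathbb{R}$ but two-dimensional. I would therefore state explicitly that the corollary is taken over $\mathbb{C}$, the setting of the harmonic frames used later, and that it inherits Schur 1's reliance on eigenvalues existing. Beyond this caveat, the argument is just the commutation observation followed by the one-line invariance argument.
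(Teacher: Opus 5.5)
Your proof is correct and follows the paper's argument: commutativity gives $[D(h),D(g)]=0$, Schur 1 makes each $D(h)$ a scalar multiple of the identity, and irreducibility then forces $\dim\mathcal{H}=1$. You also spell out the invariance of the span of a nonzero vector, which the paper leaves implicit. Your restriction to $\mathbb{K}=\mathbb{C}$ is well-taken: the paper's proof of Schur 1 tacitly assumes $A$ has an eigenvalue in $\mathbb{K}$, and the real two-dimensional rotation representation of $C_3$ is a genuine counterexample over $\mathbb{R}$.
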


\begin{proof} Let $D$ a irreducible representation of $\mathcal{G}$ abelian group, then $[D(g_1),D(g_2)]=0$ for any $g_1,g_2\in \mathcal{G}$, thus, by the first Schur lemma, $D(g)=\lambda_{g} I_\mathcal{H}$, but since $D$ is irreducible, this is only possible if $Dim(\mathcal{H})=1$.
    
\end{proof}

\begin{obs}
    All one-dimensional representations are irreducible, since the subspaces of $\mathcal{H}\cong \mathbb{K}$ are exactly the ideals, which characterize a field as having trivial.
\end{obs}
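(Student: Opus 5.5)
The claim is that every one-dimensional representation is irreducible. I will argue directly from the definition of irreducibility, by listing every subspace of a one-dimensional space. The algebraic detour through ideals of $\mathbb{K}$ that the observation mentions is not actually needed.

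Let $D:\mathcal{G}\to Gl(\mathcal{H})$ be a representation with $Dim(\mathcal{H})=1$. Fix a nonzero vector $\ket{v}$, so that $\mathcal{H}=\{\lambda\ket{v}:\lambda\in\mathbb{K}\}\cong\mathbb{K}$. Let $W\subset\mathcal{H}$ be any subspace.

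Dimension counting gives $Dim(W)\in\{0,1\}$. If $Dim(W)=0$ then $W=\{0\}$. If $Dim(W)=1$ then $W=\mathcal{H}$, since a subspace of full dimension in a finite-dimensional space is the whole space.

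The same conclusion can be phrased in the language the observation suggests. Identify $\mathcal{H}$ with $\mathbb{K}$ via $\lambda\ket{v}\mapsto\lambda$. A subspace $W$ then becomes an additive subgroup of $\mathbb{K}$ that is closed under multiplication by scalars, i.e.\ an ideal of the field $\mathbb{K}$. If $W$ contains some $\mu\neq0$, then $\lambda=(\lambda\mu^{-1})\mu\in W$ for every $\lambda\in\mathbb{K}$, so $W=\mathbb{K}$. Otherwise $W=\{0\}$. Thus a field has only the two trivial ideals.

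Consequently $\mathcal{H}$ has exactly the two subspaces $\{0\}$ and $\mathcal{H}$. Both are trivially invariant under every $D(g)$, because $D(g)$ is linear and invertible. No other subspace exists, so no other invariant subspace exists either. By the definition of irreducibility, $D$ is irreducible.

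There is no real obstacle here. The only care point is to note that irreducibility is established by the absence of nontrivial subspaces altogether, independently of $D$. So the argument never uses the homomorphism property or unitarity of $D$.
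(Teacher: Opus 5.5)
Your proof is correct and is essentially the paper's argument: the paper's one-line justification is your second paragraph (identify $\mathcal{H}\cong\mathbb{K}$, note that subspaces become ideals, and a field has only the ideals $\{0\}$ and $\mathbb{K}$). Your dimension count is just an equivalent, slightly more direct way of saying that a one-dimensional space has no subspaces other than $\{0\}$ and $\mathcal{H}$, so irreducibility holds independently of $D$.
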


Another important result of Schur's lemmas are the orthogonality relations.

\begin{cor}\label{13}
        Let $\{D^i\}_{i\in I}$ be the non-equivalent irreducible representations of $\mathcal{G}$ in $\mathcal{H}_i$. For any $B\in L(\mathcal{H}_j,\mathcal{H}_i)$ exists $\lambda_{B,i}$ such that:

    \begin{equation}A:=\sum_{g\in \mathcal{G}}D^{i}(g^{-1})BD^{j}(g)=\lambda_{B,i}\, \delta_{i,j}\,I_{\mathcal{H}_i}.\end{equation}

\end{cor}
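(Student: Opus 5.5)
The plan is to show that $A$ is an intertwiner between $D^j$ and $D^i$, and then apply the two Schur lemmas.

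\textbf{Intertwining relation.} Fix $h\in\mathcal{G}$ and compute $D^{i}(h^{-1})AD^{j}(h)$. Since each $D^k$ is a homomorphism, we have $D^{i}(h^{-1})D^{i}(g^{-1})=D^{i}((gh)^{-1})$ and $D^{j}(g)D^{j}(h)=D^{j}(gh)$. Hence
\begin{equation}
D^{i}(h^{-1})AD^{j}(h)=\sum_{g\in\mathcal{G}}D^{i}((gh)^{-1})BD^{j}(gh)=A,
\end{equation}
where the last equality is the rearrangement lemma: $g\mapsto gh$ is a bijection of $\mathcal{G}$. Multiplying on the left by $D^{i}(h)$ gives $AD^{j}(h)=D^{i}(h)A$ for every $h\in\mathcal{G}$.

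\textbf{Case $i\neq j$.} Here $A\in L(\mathcal{H}_j,\mathcal{H}_i)$ satisfies $D^{i}(h)A=AD^{j}(h)$, which is exactly the hypothesis of the second Schur lemma with $D_1=D^i$ and $D_2=D^j$. The lemma says that either $A=0$ or $D^i$ and $D^j$ are equivalent. The representations in the family are pairwise non-equivalent, so $A=0$, which matches $\delta_{i,j}=0$.

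\textbf{Case $i=j$.} Now $A\in L(\mathcal{H}_i)$ commutes with $D^i(h)$ for all $h$. By the first Schur lemma, $A=\lambda I_{\mathcal{H}_i}$ for some scalar $\lambda$, which depends only on $B$ and $i$; call it $\lambda_{B,i}$.

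\textbf{Main obstacle.} The only delicate point is that the first Schur lemma, as proved in the paper, uses the existence of an eigenvalue of $A$, which needs $\mathbb{K}$ algebraically closed. I would note that we work over $\mathbb{K}=\mathbb{C}$ here, as the paper implicitly does, so the lemma applies. Optionally, taking the trace of both sides gives $\lambda_{B,i}=\frac{|\mathcal{G}|}{\dim\mathcal{H}_i}Tr(B)$, but the statement does not require this.
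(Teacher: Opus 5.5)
Your proposal is correct and follows the paper's proof essentially step for step: first establish $AD^{j}(h)=D^{i}(h)A$ via the rearrangement lemma, then apply the second Schur lemma when $i\neq j$ and the first Schur lemma when $i=j$. Your caveat that the first Schur lemma needs $\mathbb{K}=\mathbb{C}$ (so that an eigenvalue exists) is a valid point the paper leaves implicit, and your trace formula $\lambda_{B,i}=\frac{|\mathcal{G}|}{\dim\mathcal{H}_i}Tr(B)$ is right.
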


\begin{proof} We can observe that, by the rearrangement lemma:
\begin{equation}AD^{j}(h)=\sum_{g\in \mathcal{G}}D^{i}(g^{-1})BD^{j}(gh)=D^{i}(h)\sum_{g\in \mathcal{G}}D^{i}((gh)^{-1})BD^{j}(gh)=D^{i}(h) A,\end{equation}
\noindent where $i\neq j$, by the second Schur lemma, $A=0$, and when $i=j$, by the first lemma exists $\lambda_{B,i}\in \mathbb{K}$ such that $A=\lambda_{B,i}I_{\mathcal{H}_i}$, thus, the statement is concluded.
\end{proof}

\newpage

\begin{obs} Considering some basis for $\mathcal{H}_i$ and $\mathcal{H}_j$, we can choose $B_{k,l}=\delta_{k,r}\, \delta_{l,s}$, then:
\begin{equation}\sum_{g\in \mathcal{G}} D_{k,r}^{i}(g^{-1})D_{s,l}^{j}(g)=\lambda_{r,s,i}\,\delta_{i,j}\,\delta_{k,l},\end{equation}
\noindent in this case $i=j$ and $k=l$:
\begin{equation}\sum_{g\in \mathcal{G}} D_{s,k}^{i}(g)  D_{k,r}^{i}(g^{-1})=\lambda_{r,s,i}.\end{equation}
\noindent Summing over $k$ we obtain that $\lambda_{r,s,i}=\delta_{r,s}(n_{\mathcal{G}}/n_i)$ where $n_\mathcal{G}$ is the order of $\mathcal{G}$ and $n_i$ the dimension of $D^i$. Therefore:
\begin{equation}\sum_{g\in \mathcal{G}} D_{k,r}^{i}(g^{-1})D_{s,l}^{j}(g)=\dfrac{n_\mathcal{G}}{n_i}\,\delta_{i,j}\,\delta_{k,l}\,\delta_{r,s}.\end{equation}
\noindent When the representation is chosen to be unitary and the basis is orthonormal, it follows that $D^{i}_{k,r}(g^{-1})=\overline{D^{i}_{r,k}(g)}$, therefore, they are called orthogonality relations.

\end{obs}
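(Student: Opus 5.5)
The observation is a direct consequence of Corollary \ref{13} once we specialize the operator $B$ to a matrix unit, so the plan is to make that substitution, read off matrix entries, and then fix the constant by a trace argument.

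\textbf{Step 1: specialize $B$.} Fix orthonormal (or merely arbitrary) bases of $\mathcal{H}_i$ and $\mathcal{H}_j$, and take $B=E_{r,s}$, the matrix unit with entries $B_{k,l}=\delta_{k,r}\delta_{l,s}$. Corollary \ref{13} then gives
\begin{equation}
\sum_{g\in\mathcal{G}} D^{i}(g^{-1})\,E_{r,s}\,D^{j}(g)=\lambda_{r,s,i}\,\delta_{i,j}\,I_{\mathcal{H}_i}.
\end{equation}
Taking the $(k,l)$ entry of the left-hand side, only the term $D^{i}_{k,r}(g^{-1})D^{j}_{s,l}(g)$ survives. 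This yields the first displayed identity, with the factor $\delta_{k,l}$ coming from the identity on the right.

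\textbf{Step 2: compute $\lambda_{r,s,i}$.} Set $i=j$ and $k=l$, and sum over $k=1,\dots,n_i$. On the right we get $n_i\,\lambda_{r,s,i}$. On the left, since the entries are scalars and commute, we rewrite the summand as $D^{i}_{s,k}(g)D^{i}_{k,r}(g^{-1})$, and the sum over $k$ becomes a matrix product:
\begin{equation}
\sum_{k}D^{i}_{s,k}(g)D^{i}_{k,r}(g^{-1})=\big(D^{i}(g)D^{i}(g^{-1})\big)_{s,r}=\big(D^{i}(e)\big)_{s,r}=\delta_{s,r},
\end{equation}
using that $D^i$ is a homomorphism. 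Summing over $g$ gives $n_{\mathcal{G}}\,\delta_{r,s}$. Dividing by $n_i$, which is legitimate since $\mathbb{K}$ is $\mathbb{R}$ or $\mathbb{C}$, gives $\lambda_{r,s,i}=\delta_{r,s}\,n_{\mathcal{G}}/n_i$. Substituting back into Step 1 produces the general relation with $\delta_{i,j}\delta_{k,l}\delta_{r,s}$.

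\textbf{Step 3: the unitary case.} By Proposition \ref{10} we may take $D^i$ unitary. In an orthonormal basis, $D^i(g^{-1})=D^i(g)^{-1}=D^i(g)^{*}$, so $D^{i}_{k,r}(g^{-1})=\overline{D^{i}_{r,k}(g)}$. This exhibits the identity as an orthogonality statement for the functions $g\mapsto D^i_{r,k}(g)$ in $l_2(\mathcal{G})$.

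The only delicate point, and the main obstacle, is the index bookkeeping in Step 2. One must recognize that summing over $k$ contracts the inner indices into $D^i(g)D^i(g^{-1})$ and not into the reversed product. One must also keep track of the fact that $\lambda$ was defined with $B=E_{r,s}$, so that the Kronecker delta lands on $(r,s)$ rather than on $(k,l)$.
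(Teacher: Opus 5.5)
Your proposal is correct and follows the paper's argument essentially step for step. You specialize $B$ in Corollary~\ref{13} to the matrix unit $E_{r,s}$, read off the $(k,l)$ entry, set $i=j$ and $k=l$, and sum over $k$ using $D^i(g)D^i(g^{-1})=D^i(e)=I_{\mathcal{H}_i}$ to get $n_i\lambda_{r,s,i}=n_{\mathcal{G}}\,\delta_{r,s}$, then finish with unitarity in an orthonormal basis; the only difference is that you spell out this homomorphism step, which the paper leaves implicit.
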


When studying the representations of a group, it is useful to define the quantity:
\begin{equation}\xi^{i}(g):=Tr(D^{i}(g)),\end{equation}
\noindent which is called the \textbf{character}.

\begin{obs}\label{xd}
    
Choosing $r=k$ and $s=l$, and summing over $k$ and 
$l$ in the orthogonality relations, we obtain that:
\begin{equation}\sum_{g\in \mathcal{G}} \xi^i(g^{-1})\xi^{j}(g)=n_\mathcal{G} \, \delta_{i,j}.\end{equation}
\noindent Furthermore, since the trace is invariant under similarity transformation, it follows that $\xi^{i}(g^{-1})=\overline{\xi^{i}(g)}$.

\end{obs}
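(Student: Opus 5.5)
The statement has two parts. The first is the character orthogonality relation $\sum_{g} \xi^i(g^{-1})\xi^j(g)=n_\mathcal{G}\,\delta_{i,j}$. The second is the identity $\xi^i(g^{-1})=\overline{\xi^i(g)}$. I will derive both from the matrix-element orthogonality relations in the preceding observation, together with Proposition \ref{10}.

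\textbf{Step 1 (orthogonality of characters).} In the relation
\begin{equation*}
\sum_{g\in \mathcal{G}} D_{k,r}^{i}(g^{-1})D_{s,l}^{j}(g)=\dfrac{n_\mathcal{G}}{n_i}\,\delta_{i,j}\,\delta_{k,l}\,\delta_{r,s},
\end{equation*}
I set $r=k$ and $s=l$. Then I sum over $k=1,\dots,n_i$ and $l=1,\dots,n_j$.

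The left side becomes $\sum_g \big(\sum_k D^i_{k,k}(g^{-1})\big)\big(\sum_l D^j_{l,l}(g)\big)=\sum_g \xi^i(g^{-1})\xi^j(g)$, by the definition of the character as a trace.

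The right side becomes $\frac{n_\mathcal{G}}{n_i}\delta_{i,j}\sum_{k,l}\delta_{k,l}^2$. This term is nonzero only when $i=j$, and then $\sum_{k,l}\delta_{k,l}=n_i$. The factor $n_i$ cancels and leaves $n_\mathcal{G}\delta_{i,j}$.

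The one point needing care is the case $i\neq j$, where the dimensions $n_i\neq n_j$ may differ. There the Kronecker factor $\delta_{i,j}$ already kills every term, so summing over the mismatched index ranges causes no problem.

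\textbf{Step 2 (inverse gives conjugate).} Similar representations have equal traces, since $Tr(A^{-1}DA)=Tr(D)$. By Proposition \ref{10}, I may therefore replace $D^i$ by an equivalent unitary representation without changing $\xi^i$.

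For a unitary $D^i$ we have $D^i(g^{-1})=D^i(g)^{-1}=D^i(g)^*$. Hence
\begin{equation*}
\xi^i(g^{-1})=Tr(D^i(g)^*)=\overline{Tr(D^i(g))}=\overline{\xi^i(g)}.
\end{equation*}

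An alternative route to Step 2 uses eigenvalues. Since $g^{n_\mathcal{G}}=e$, every eigenvalue of $D^i(g)$ is a root of unity. The eigenvalues of $D^i(g^{-1})$ are their inverses, which are their conjugates. This argument works over $\mathbb{C}$.

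I expect no real obstacle in either step. The only subtlety is the invariance of the trace under the similarity used in Proposition \ref{10}, and that invariance is exactly the remark the statement itself invokes.
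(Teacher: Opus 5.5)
Your proposal is correct and matches the paper's argument: setting $r=k$, $s=l$ in the matrix-element orthogonality relations and summing over $k,l$ gives the character relation. The identity $\xi^i(g^{-1})=\overline{\xi^i(g)}$ then follows by passing to an equivalent unitary representation (Proposition \ref{10}), since the trace is invariant under similarity. Your explicit treatment of the $i\neq j$ index ranges and your alternative eigenvalue argument are sound, though the paper's argument does not need them.
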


\newpage

\section{Group Frames}\label{group frames}
In this section we will denote $U(\mathcal{H})$ the set of unitary operators of $\mathcal{H}$ in $\mathcal{H}$ and $U(n)$ the unitary matrix of $n\times n$. We will now study the theory of the following types of frames:

\begin{defi}
    A frame $(\ket{f_{g}})_{g\in \mathcal{G}}$ in $\mathcal{H}$ indexed by $\mathcal{G}$ is called a \boldsymbol{\mathcal{G}}\textbf{-frame} if exists a unitary representation $D:\mathcal{G}\to U(\mathcal{H})$ such that:
\begin{equation}g\ket{f_h}:=\ket{f_{gh}}.\end{equation}

\end{defi}

\begin{obs}
    
Intuitively, $\mathcal{G}$-frames are those frames formed from the orbit of a unitary linear action of the group $\mathcal{G}$ on some vector $\ket{f_e}$. If the definition holds for a non-unitary action, then the frame will be similar to a $\mathcal{G}$-frame by proposition \ref{10}. The converse is also true.

\end{obs}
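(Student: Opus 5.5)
The plan is to make precise what "similar" means and then check both directions directly from Proposition \ref{10} and Proposition \ref{4}. Say that two sequences $(\ket{f_g})_{g\in\mathcal{G}}$ and $(\ket{\tilde f_g})_{g\in\mathcal{G}}$ in $\mathcal{H}$ are \emph{similar} if there is an invertible $T\in Gl(\mathcal{H})$ with $T\ket{f_g}=\ket{\tilde f_g}$ for every $g\in\mathcal{G}$. This is the non-unitary analogue of unitary equivalence. The claim then has two parts:
\begin{itemize}
\item if a frame $(\ket{f_g})$ satisfies $D(g)\ket{f_h}=\ket{f_{gh}}$ for some (not necessarily unitary) representation $D:\mathcal{G}\to Gl(\mathcal{H})$, then it is similar to a $\mathcal{G}$-frame;
\item conversely, any sequence similar to a $\mathcal{G}$-frame is a frame that is the orbit of some representation $D:\mathcal{G}\to Gl(\mathcal{H})$.
\end{itemize}

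For the first direction, I apply the construction in the proof of Proposition \ref{10}. Set $H=\sum_{g}D(g)D^*(g)$, which is positive definite. Then $\tilde D(g):=\sqrt{H}^{-1}D(g)\sqrt{H}$ is a unitary representation; it is a homomorphism because conjugation preserves products. I take $T=\sqrt{H}^{-1}$ and define $\ket{\tilde f_h}:=\sqrt{H}^{-1}\ket{f_h}$. A one-line computation shows that this is an orbit of $\tilde D$:
\begin{equation}
\tilde D(g)\ket{\tilde f_h}=\sqrt{H}^{-1}D(g)\sqrt{H}\sqrt{H}^{-1}\ket{f_h}=\sqrt{H}^{-1}\ket{f_{gh}}=\ket{\tilde f_{gh}}.
\end{equation}
It remains to check that $(\ket{\tilde f_h})$ is a frame. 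By Proposition \ref{4}, the $\ket{f_h}$ span $\mathcal{H}$, and an invertible operator maps a spanning set to a spanning set. So $(\ket{\tilde f_h})$ spans $\mathcal{H}$ and is a frame by Proposition \ref{4} again. Hence it is a $\mathcal{G}$-frame similar to the original one.

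For the converse, let $(\ket{f_g})$ be a $\mathcal{G}$-frame with unitary representation $D$, and let $T$ be invertible. Define $D'(g):=TD(g)T^{-1}$, which is a homomorphism into $Gl(\mathcal{H})$. Then
\begin{equation}
D'(g)T\ket{f_h}=TD(g)\ket{f_h}=T\ket{f_{gh}},
\end{equation}
so $(T\ket{f_h})$ is the orbit of $T\ket{f_e}$ under $D'$. It spans $\mathcal{H}$ by the same argument as above, so it is a frame by Proposition \ref{4}.

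There is no real obstacle here. The only point needing care is that Proposition \ref{10} was stated only as an equivalence of representations, so I must reuse the explicit intertwiner $\sqrt{H}$ from its proof to transport the vectors as well. I also note that tightness is \emph{not} preserved under similarity: for example, the orbit of a non-unitary action need not be tight. For this reason the statement should be read as concerning frames, not tight frames.
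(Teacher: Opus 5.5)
Your proof is correct and follows the paper's route, which simply cites Proposition \ref{10}. You supply the details the paper leaves implicit: you transport the orbit by the intertwiner, use Proposition \ref{4} to see that the image is still a frame, and handle the converse by conjugating $D$. One minor point: the definition of equivalent representations already gives an invertible intertwiner $A$, and any such $A$ transports the orbit, so you do not need the specific $\sqrt{H}$ from the proof of Proposition \ref{10}.
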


Below are some examples and non-examples of group frames:

\begin{ex}
\textbf{A basis} $(\ket{f_j})_{j\in J}$ in $\mathcal{H}$ can be indexed by the elements of any group $\mathcal{G}$ of order $\vert J\vert$. With this, it is possible to define $D:\mathcal{G}\to Gl(\mathcal{H})$ such that $D(g)\ket{f_{h}}:=\ket{f_{gh}}$, therefore the vectors are similar to a $\mathcal{G}$-frame.

\end{ex}

\begin{ex} \textbf{The three equiangular vectors in \boldsymbol{\mathbb{C}^2}} (example \ref{5}) can be indexed by the elements of the cyclic group $C_3=\langle a\rangle$, where $\ket{f_{a^{j-1}}}:=\ket{f_j}$. Considering the homomorphism $D:C_3\to U(2)$:

\begin{equation}D(a):=\begin{pmatrix}
    1 & 0\\
    0& w
\end{pmatrix},\end{equation}

\noindent it can be noted that the vectors form a $C_3$-frame.

\end{ex}

\begin{ex} The vectors $\{\ket{1},\ket{2},2\,\ket{1}\}$ in $\mathbb{R}^2$ form a \textbf{frame not similar to any \boldsymbol{\mathcal{G}} frame}. Indeed, if it were so, it would be a $C_3$-frame, and thus there would exist a representation with an element $g\in \mathcal{G}$ that satisfies $D(g)\ket{1}=\ket{2}$, then, $D(g)2\ket{1}=2\,\ket{2}$, which is a contradiction, since $2\,\ket{2}$ does not belong to the frame.

\end{ex}

\newpage

To determine if a frame is similar to a $\mathcal{G}$-frame, it is useful to define the symmetry group of the frame:

\begin{defi}
    Let $\Phi:=(\ket{f_j})_{j\in J}$ be a frame in $\mathcal{H}$, we define its \textbf{symmetry group} through the following set with the composition operation:
\begin{equation}Sym(\Phi):=\{\sigma\in S_{J} : \exists L_{\sigma}\in Gl(\mathcal{H}): \,\forall j\in J,\, L_{\sigma}\ket{f_j}=\ket{f_{\sigma(j)}}\}.\end{equation}
    
\end{defi}

\begin{prop}   Let $\Phi:=(\ket{f_j})_{j\in J}$ be a frame in $\mathcal{H}$, This is similar to a $\mathcal{G}$-frame for some group $\mathcal{G}$ if and only if there exists a transitive subgroup $K\leq Sym(\Phi)$ of order $\vert J\vert$.
    
\end{prop}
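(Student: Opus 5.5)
Both directions come down to translating between "a group acting regularly on the index set" and "a transitive subgroup of $Sym(\Phi)$ of order $\vert J\vert$." Throughout, similarity to a $\mathcal{G}$-frame means: after relabeling the index set by a bijection $J\to\mathcal{G}$, there is a representation $D:\mathcal{G}\to Gl(\mathcal{H})$ with $D(g)\ket{f_h}=\ket{f_{gh}}$. This suffices, because by proposition \ref{10} such a $D$ can be conjugated to a unitary representation, as in the remark after the definition of $\mathcal{G}$-frames.

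\emph{Forward direction.} Suppose $\Phi$ is similar to a $\mathcal{G}$-frame. Fix a bijection $\beta:J\to\mathcal{G}$ and a representation $D$ with $D(g)\ket{f_{\beta^{-1}(h)}}=\ket{f_{\beta^{-1}(gh)}}$. For each $g$, set $\sigma_g:=\beta^{-1}\circ L_g\circ\beta\in S_J$, where $L_g$ is left multiplication by $g$. Then $D(g)\in Gl(\mathcal{H})$ witnesses $\sigma_g\in Sym(\Phi)$. The map $g\mapsto\sigma_g$ is an injective homomorphism, since left multiplication is faithful. Its image $K$ therefore has order $\vert\mathcal{G}\vert=\vert J\vert$. $K$ is transitive because, for any $i,j$, the element $g=\beta(j)\beta(i)^{-1}$ satisfies $\sigma_g(i)=j$.

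\emph{Converse.} Let $K\leq Sym(\Phi)$ be transitive with $\vert K\vert=\vert J\vert$, and take $\mathcal{G}:=K$. Fix $j_0\in J$. The orbit map $\sigma\mapsto\sigma(j_0)$ is surjective by transitivity, hence bijective because the two sets have the same finite cardinality. Relabel $\ket{f_\sigma}:=\ket{f_{\sigma(j_0)}}$.

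Next I construct $D$. For each $\sigma\in K$ there is some $L_\sigma\in Gl(\mathcal{H})$ with $L_\sigma\ket{f_j}=\ket{f_{\sigma(j)}}$ for all $j$. This $L_\sigma$ is unique: by proposition \ref{4} the frame spans $\mathcal{H}$, and $L_\sigma$ is prescribed on a spanning set. Uniqueness gives the homomorphism property. Both $L_\sigma L_\tau$ and $L_{\sigma\tau}$ send $\ket{f_j}$ to $\ket{f_{\sigma\tau(j)}}$, so they agree on a spanning set and are therefore equal. Hence $D(\sigma):=L_\sigma$ is a representation. Finally, $D(\sigma)\ket{f_\tau}=\ket{f_{\sigma(\tau(j_0))}}=\ket{f_{\sigma\tau}}$, so $\Phi$ is a $K$-frame up to similarity, and proposition \ref{10} converts it to one with a unitary representation.

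The main obstacle is the well-definedness and multiplicativity of $\sigma\mapsto L_\sigma$. This is exactly where the spanning property of frames (proposition \ref{4}) is needed. A smaller point to handle with care is the bijectivity of the orbit map, which uses that $\vert K\vert=\vert J\vert$, i.e.\ that the action of $K$ on $J$ is regular.
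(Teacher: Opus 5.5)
Your proof is correct and follows the same route as the paper. For the converse you reindex by the orbit map of a base point and take $D(\sigma):=L_\sigma$; for the forward direction you transport left multiplication through the bijection $J\to\mathcal{G}$ and use $D(g)$ as the witness $L_{\sigma_g}$. You also justify steps the paper leaves implicit: the uniqueness and multiplicativity of $\sigma\mapsto L_\sigma$, via the spanning property of proposition \ref{4}, and the bijectivity of the orbit map from $\vert K\vert=\vert J\vert$.
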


\begin{proof}
    If exists $K\leq Sym(\Phi)$ transitive of order $\vert J\vert$, then there is a $j\in J$ such that $Kj=J$, thus, $\Phi=(\ket{f_{\sigma(j)}})_{\sigma \in K}$. Reindexing $\ket{f_{\sigma(j)}}=\ket{f_{\sigma}}$ and defining $D:K\to Gl(\mathcal{H})$ as $D(\sigma):=L_{\sigma(j)}$ it is shown that $\Phi$ is similar to a K-frame. On the other hand, if $\Phi$ is similar to a $\mathcal{G}$-frame, then exists a bijection $\phi:J\to G$ such that $\ket{f_j}=\ket{f_{\phi(j)}}$, which allows defining  $\sigma_{g}(j)=\phi^{-1}(g\phi(j))\in S_J$. We can note that the image $\sigma_{G}\cong G$ from $\sigma$, moreover, $\sigma_{G}\phi^{-1}(e)=J$, and defining $L_{\sigma_g}:=D(g)$ it is shown that $\sigma_{G}$ is a transitive subgroup of $Sym(\Phi)$ with order $\vert J\vert$.

\end{proof}

To calculate the elements of $Sym(\Phi)$ we can make use of the permutation operators:
\begin{equation}\left\{ \begin{array}{lcc}
             P_{\sigma}:l_2(J)\to l_2(J)\\
             P_{\sigma}\ket{j}=\ket{\sigma(j)}

             \end{array} \right. .\end{equation}

\begin{obs}\label{c}
    Note that $\sigma\in Sym(\Phi)$ if and only if exists $L_{\sigma}\in Gl(\mathcal{H})$ such that for any $j\in J$:
\begin{equation}L_{\sigma}V\ket{j}=L_{\sigma}\ket{f_j}=\ket{f_{\sigma(j)}}=VP_{\sigma}\ket{j},\end{equation}
\noindent that is, $L_{\sigma}V=VP_{\sigma}$.

\end{obs}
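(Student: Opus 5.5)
The statement to prove is Observation \ref{c}: $\sigma\in Sym(\Phi)$ if and only if there is $L_\sigma\in Gl(\mathcal{H})$ with $L_\sigma V=VP_\sigma$. The plan is to unwind both sides on the canonical basis $(\ket{j})_{j\in J}$ of $l_2(J)$ and compare them.

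The key step is the computation already displayed in the observation. From the definition of the synthesis operator, $V\ket{j}=\ket{f_j}$. From the definition of $P_\sigma$, $VP_\sigma\ket{j}=V\ket{\sigma(j)}=\ket{f_{\sigma(j)}}$.

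For the forward direction, take $\sigma\in Sym(\Phi)$. By definition of the symmetry group there is $L_\sigma\in Gl(\mathcal{H})$ with $L_\sigma\ket{f_j}=\ket{f_{\sigma(j)}}$ for all $j$. Then $L_\sigma V$ and $VP_\sigma$ are both linear maps $l_2(J)\to\mathcal{H}$. They agree on the canonical basis, so they are equal.

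For the converse, suppose $L_\sigma V=VP_\sigma$ with $L_\sigma$ invertible. Evaluating this identity at $\ket{j}$ gives $L_\sigma\ket{f_j}=\ket{f_{\sigma(j)}}$ for every $j\in J$. Since $\sigma\in S_J$ by hypothesis, this is exactly membership in $Sym(\Phi)$.

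There is no real obstacle here. The only points to state explicitly are that linear maps agreeing on a basis coincide, and that invertibility of $L_\sigma$ is carried along unchanged in both directions; frame surjectivity (Proposition \ref{4}) is not even needed. As a remark, one can add that when $V$ is surjective such an $L_\sigma$ is unique, since $L_\sigma V=L'_\sigma V$ forces $L_\sigma=L'_\sigma$. Concretely, $L_\sigma=VP_\sigma V^*S^{-1}$.
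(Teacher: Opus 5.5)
Your proposal is correct and matches the paper's argument. Both evaluate $L_\sigma V$ and $VP_\sigma$ on the canonical basis $\ket{j}$ of $l_2(J)$, using $V\ket{j}=\ket{f_j}$ and $P_\sigma\ket{j}=\ket{\sigma(j)}$. Your uniqueness remark and the formula $L_\sigma=VP_\sigma V^{*}S^{-1}$, which holds once such an $L_\sigma$ exists, are correct additions that the paper does not make.
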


\newpage

\begin{obs}
    Recall that if $\Phi$ is a frame, then 
$S$ is positive definite, so it is possible to define \textbf{the canonical tight frame associated with \boldsymbol{\Phi}}:
\begin{equation}\Phi_{c}:=(S^{-1/2}\ket{f_j})_{j\in J},\end{equation}
\noindent which is a normalized tight frame, since if $V_c$ is its synthesis operator:
\begin{equation}S_c=V_c V_c^{*}=S^{-1/2}VV^{*}S^{-1/2}=S^{-1/2}SS^{-1/2}=I_{\mathcal{H}}.\end{equation}

\end{obs}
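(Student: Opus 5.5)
The statement to justify is the claim in the last observation: if $\Phi$ is a frame, then $S$ is positive definite, $S^{-1/2}$ is well defined, and $\Phi_c=(S^{-1/2}\ket{f_j})_{j\in J}$ is a normalized tight frame.

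\textbf{Step 1: $S$ is positive definite.} By Observation \ref{o2.1.1}, $S=\sum_{j\in J}\ket{f_j}\bra{f_j}$. Hence $S$ is self-adjoint and
\begin{equation*}
\bra{f}S\ket{f}=\sum_{j\in J}|\braket{f_j}{f}|^2 .
\end{equation*}
The lower frame bound in (\ref{2.1}) then gives $\bra{f}S\ket{f}\geq A\,||f||^2>0$ for every $\ket{f}\neq 0$.

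\textbf{Step 2: $S^{-1/2}$ exists and is self-adjoint.} By the spectral theorem, write $S=\sum_k \lambda_k \ket{e_k}\bra{e_k}$ with all $\lambda_k\geq A>0$. Define
\begin{equation*}
S^{-1/2}:=\sum_k \lambda_k^{-1/2}\ket{e_k}\bra{e_k}.
\end{equation*}
This operator is self-adjoint and positive definite. It commutes with $S$ and satisfies $S^{-1/2}SS^{-1/2}=I_{\mathcal{H}}$.

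\textbf{Step 3: the synthesis operator of $\Phi_c$.} Since $\Phi_c$ consists of the vectors $S^{-1/2}\ket{f_j}$, its synthesis operator is $V_c=S^{-1/2}V$. Therefore
\begin{equation*}
S_c=V_cV_c^*=S^{-1/2}VV^*(S^{-1/2})^*=S^{-1/2}SS^{-1/2}=I_{\mathcal{H}},
\end{equation*}
where the middle equality uses that $S^{-1/2}$ is self-adjoint.

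\textbf{Step 4: conclusion.} By Proposition \ref{3}, $S_c=I_{\mathcal{H}}$ means $\Phi_c$ is a tight frame with constant $1$, that is, a normalized tight frame.

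The only point that needs care is Step 2: the square root must be taken via the spectral decomposition, so that it is self-adjoint and commutes with $S$. Everything else is routine.
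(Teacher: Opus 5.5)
Your proposal is correct and follows the paper's own argument: with $V_c=S^{-1/2}V$, it computes $S_c=V_cV_c^*=S^{-1/2}VV^*S^{-1/2}=I_{\mathcal{H}}$. You additionally spell out two points the paper only recalls: positive definiteness of $S$ from the lower frame bound, and the self-adjointness of the spectral square root $S^{-1/2}$.
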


\begin{obs}\label{a}
    Note that if $\Phi=(\ket{f_j})_{j\in J}$ and $\Psi$ are similar frames, then exists $Q\in Gl(\mathcal{H})$ such that $\Psi=(Q\ket{f_j})_{j\in J}$, thus, if $\sigma\in Sym(\Phi)$, defining: 
    \begin{equation}L^{\Psi}_{\sigma}:=QL^{\Phi}_{\sigma}Q^{-1}\in Gl(\mathcal{H}),\end{equation}
  \noindent then, for any $j\in J$, $L^{\Psi}_{\sigma}Q\ket{f_j}=Q\ket{f_{\sigma(j)}}$, that is $\sigma\in Sym(\Psi)$, theferore $Sym(\Phi)\subset Sym(\Psi)$. Interchanging $\Phi$ for $\Psi$ and repeating the argument, we can prove that $Sym(\Phi)=Sym(\Psi)$.

\end{obs}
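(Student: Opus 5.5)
The plan is to prove the two inclusions $Sym(\Phi)\subset Sym(\Psi)$ and $Sym(\Psi)\subset Sym(\Phi)$ directly from the definition of the symmetry group. Everything rests on conjugating the witnessing operator $L_\sigma$ by the similarity $Q$.

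First I fix notation. Similarity of $\Phi=(\ket{f_j})_{j\in J}$ and $\Psi$ means there is $Q\in Gl(\mathcal{H})$ with $\Psi=(Q\ket{f_j})_{j\in J}$, both indexed by the same set $J$. So $Sym(\Phi)$ and $Sym(\Psi)$ are both subsets of $S_J$, and comparing them makes sense.

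Let $\sigma\in Sym(\Phi)$, witnessed by $L^{\Phi}_\sigma\in Gl(\mathcal{H})$. I define
\[
L^{\Psi}_\sigma:=QL^{\Phi}_\sigma Q^{-1}.
\]
This is a product of invertible operators, so it lies in $Gl(\mathcal{H})$. For every $j\in J$,
\[
L^{\Psi}_\sigma Q\ket{f_j}=QL^{\Phi}_\sigma\ket{f_j}=Q\ket{f_{\sigma(j)}},
\]
which is exactly the defining condition for $\sigma\in Sym(\Psi)$. In the language of Observation \ref{c}, the same step reads as follows. From $L^{\Phi}_\sigma V=VP_\sigma$, multiplying on the left by $Q$ and inserting $Q^{-1}Q$ gives $(QL^{\Phi}_\sigma Q^{-1})(QV)=(QV)P_\sigma$. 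Since $QV$ is the synthesis operator of $\Psi$, this is the required identity. Hence $Sym(\Phi)\subset Sym(\Psi)$.

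For the reverse inclusion I use that similarity is symmetric: $\Phi=(Q^{-1}\ket{g_j})_{j\in J}$ with $Q^{-1}\in Gl(\mathcal{H})$. Running the same argument with $Q^{-1}$ in place of $Q$ gives $Sym(\Psi)\subset Sym(\Phi)$. Both symmetry groups carry the composition operation inherited from $S_J$, so equality as sets is equality as groups.

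I do not expect a genuine obstacle. The only point to watch is that membership in $Sym$ requires just the existence of some invertible $L_\sigma$, so I never need $L^{\Psi}_\sigma$ to be unique or canonical.
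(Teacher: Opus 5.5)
Your proposal is correct and matches the paper's argument. Both conjugate the witness to get $L^{\Psi}_\sigma=QL^{\Phi}_\sigma Q^{-1}$, check that it sends $Q\ket{f_j}$ to $Q\ket{f_{\sigma(j)}}$, and obtain the reverse inclusion by exchanging the roles of $\Phi$ and $\Psi$ (via $Q^{-1}$); your synthesis-operator restatement through Observation \ref{c} is a harmless addition.
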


\begin{obs}\label{b}
If $\Phi=(\ket{f_j})_{j\in J}$ is a tight frame with constant $A$, then, for $\sigma\in Sym(\Phi)$, it follows by observation \ref{c} that:
\begin{equation}L_{\sigma}L_{\sigma}^{*}=\dfrac{1}{A}\, L_{\sigma}VV^{*}L_{\sigma}^{*}=\dfrac{1}{A}\, VP_{\sigma}P_{\sigma}^{*}V^{*}=\dfrac{1}{A}\, VV^{*}=I_{\mathcal{H}}.\end{equation}

\end{obs}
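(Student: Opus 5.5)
The plan is to show that for a tight frame $\Phi$ with constant $A$, every $L_\sigma$ attached to $\sigma\in Sym(\Phi)$ is unitary. The idea is to write the identity $I_{\mathcal{H}}$ as $\tfrac{1}{A}VV^{*}$ and then use the intertwining relation of observation \ref{c} to move $L_\sigma$ past $V$. Three ingredients are needed.

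\begin{enumerate}
\item By proposition \ref{3}, tightness means exactly $S=VV^{*}=AI_{\mathcal{H}}$ with $A>0$. Hence $I_{\mathcal{H}}=\tfrac{1}{A}VV^{*}$, which I insert between $L_\sigma$ and $L_\sigma^{*}$.
\item By observation \ref{c}, $\sigma\in Sym(\Phi)$ gives $L_\sigma V=VP_\sigma$. Taking adjoints gives $V^{*}L_\sigma^{*}=P_\sigma^{*}V^{*}$.
\item $P_\sigma$ permutes the canonical orthonormal basis $\{\ket{j}\}$ of $l_2(J)$, so it is unitary and $P_\sigma P_\sigma^{*}=I_{l_2(J)}$.
\end{enumerate}

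Chaining these gives
\begin{equation}
L_\sigma L_\sigma^{*}=\tfrac{1}{A}\,L_\sigma VV^{*}L_\sigma^{*}=\tfrac{1}{A}\,VP_\sigma P_\sigma^{*}V^{*}=\tfrac{1}{A}\,VV^{*}=I_{\mathcal{H}}.
\end{equation}

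To conclude unitarity rather than just that $L_\sigma$ is a co-isometry, I use that $\mathcal{H}$ is finite-dimensional. Since $L_\sigma L_\sigma^{*}=I_{\mathcal{H}}$ and $L_\sigma$ is a square operator, $L_\sigma^{*}$ is its two-sided inverse, so $L_\sigma^{*}L_\sigma=I_{\mathcal{H}}$ as well.

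I will also note that $L_\sigma$ is uniquely determined by $\sigma$. By proposition \ref{4} the frame vectors span $\mathcal{H}$, and $L_\sigma$ is prescribed on each of them, so the statement is about a well-defined operator.

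There is no real obstacle here; the only points needing care are the adjoint step $V^{*}L_\sigma^{*}=P_\sigma^{*}V^{*}$ and the finite-dimensional upgrade from a right inverse to a two-sided inverse. Together with observation \ref{a}, this means a frame similar to a tight frame can have its symmetries realised by unitaries, by passing to the canonical tight frame.
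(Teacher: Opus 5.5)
Your proposal is correct and is the paper's argument: you insert $I_{\mathcal{H}}=\tfrac{1}{A}VV^{*}$ (from $S=AI_{\mathcal{H}}$), apply the intertwining relation $L_\sigma V=VP_\sigma$ and its adjoint, and use the unitarity of $P_\sigma$ to get $L_\sigma L_\sigma^{*}=I_{\mathcal{H}}$. Your extra remarks are valid additions that the paper leaves implicit: uniqueness of $L_\sigma$ via the spanning property, and the upgrade to $L_\sigma^{*}L_\sigma=I_{\mathcal{H}}$ (which also follows directly from $L_\sigma\in Gl(\mathcal{H})$, without appealing to finite dimension).
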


\begin{prop} Let $\Phi$ a frame in $\mathcal{H}$, then $\sigma\in Sym(\Phi)$ if and only if
$P_{\sigma}^{*} G_{\Phi_c} P_{\sigma}=G_{\Phi_c}$, where $G_{\Phi_c}$ is the Gramian of the canonical tight frame associated.

\end{prop}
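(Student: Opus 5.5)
First I would reduce the statement to tight frames. By observation \ref{a}, the frame $\Phi$ and its canonical tight frame $\Phi_c=(S^{-1/2}\ket{f_j})_{j\in J}$ are similar (take $Q=S^{-1/2}$), so $Sym(\Phi)=Sym(\Phi_c)$. It therefore suffices to prove that, for a normalized tight frame $\Psi$ with synthesis operator $V_c$ and Gramian $G:=G_{\Phi_c}=V_c^*V_c$, we have $\sigma\in Sym(\Psi)$ if and only if $P_\sigma^* G P_\sigma=G$.

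For the forward direction, take $\sigma\in Sym(\Phi_c)$. By observation \ref{c} there is $L_\sigma\in Gl(\mathcal{H})$ with $L_\sigma V_c=V_cP_\sigma$. Since $\Phi_c$ is tight with constant $1$, observation \ref{b} gives that $L_\sigma$ is unitary. Then
\begin{equation*}
P_\sigma^*GP_\sigma=(V_cP_\sigma)^*(V_cP_\sigma)=(L_\sigma V_c)^*(L_\sigma V_c)=V_c^*L_\sigma^*L_\sigma V_c=G .
\end{equation*}
Entrywise, this says $\braket{f^c_{\sigma(i)}}{f^c_{\sigma(j)}}=\braket{f^c_i}{f^c_j}$.

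For the converse, assume $P_\sigma^*GP_\sigma=G$. The entries of $P_\sigma^*GP_\sigma$ are $\braket{f^c_{\sigma(i)}}{f^c_{\sigma(j)}}$, so the hypothesis says the Gramian of $(\ket{f^c_{\sigma(j)}})_{j\in J}$ equals the Gramian of $(\ket{f^c_j})_{j\in J}$. The rearranged family is again a frame, since it spans the same space (proposition \ref{4}). Applying proposition \ref{hhh} to the two frames, which have equal Gramians (permutation the identity), gives a unitary $U$ with $U\ket{f^c_j}=\ket{f^c_{\sigma(j)}}$ for all $j$. Hence $\sigma\in Sym(\Phi_c)=Sym(\Phi)$ with $L_\sigma=U$.

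I expect the converse to be the delicate step: proposition \ref{hhh} is phrased loosely ("up to a permutation"), so its use should be made explicit. A direct alternative avoids it. Define $L:=V_cP_\sigma V_c^*$. Since $V_cV_c^*=I$, we get $LV_c=V_cP_\sigma V_c^*V_c=V_cP_\sigma G$. The identity $P_\sigma^*GP_\sigma=G$ means $P_\sigma$ commutes with $G$, so $V_cP_\sigma G=V_cGP_\sigma=V_cV_c^*V_cP_\sigma=V_cP_\sigma$. Finally, $LL^*=V_cP_\sigma GP_\sigma^*V_c^*=V_cGV_c^*=I$, so $L$ is invertible (indeed unitary), and observation \ref{c} then gives $\sigma\in Sym(\Phi_c)$.
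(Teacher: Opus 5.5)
Your proof is correct. The reduction to $\Phi_c$ via observation \ref{a} and the forward direction match the paper's argument. In the forward direction you get unitarity of $L_\sigma$ from observation \ref{b} and then compute $P_\sigma^*GP_\sigma=V_c^*L_\sigma^*L_\sigma V_c=G$.

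For the converse, the paper uses exactly your ``direct alternative''. It sets $L^c_\sigma:=V_cP_\sigma V_c^*$, checks $L^{c*}_\sigma L^c_\sigma=I$, and simply asserts $L^c_\sigma V_c=V_cP_\sigma$ without justification. You supply the missing step: the hypothesis makes $P_\sigma$ commute with $G$, so $V_cP_\sigma G=V_cGP_\sigma=V_cP_\sigma$.

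Your primary route through proposition \ref{hhh} is genuinely different. You read the hypothesis entrywise as saying that the rearranged family $(\ket{f^c_{\sigma(j)}})_{j\in J}$ has the same Gramian as $\Phi_c$, and then use the general fact that equal Gramians force unitary equivalence. This is more conceptual, and for the converse it does not need tightness. Any frame whose Gramian is invariant under $P_\sigma$ has $\sigma$ in its symmetry group, with a unitary $L_\sigma$; tightness is only needed for the forward direction.

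As you note, though, this route depends on the proof of proposition \ref{hhh} rather than its loosely worded statement. That proof builds $U$ with $U\ket{f_j}=\ket{g_{\tau(j)}}$ for the specific Gramian permutation $\tau$, which is the identity here. The paper's explicit construction avoids this dependence: it is self-contained and gives $L_\sigma$ in closed form.
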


\begin{proof}
    If $\sigma\in Sym(\Phi)$, then by the observations \ref{a} and \ref{b}, $\sigma\in Sym(\Phi_c)$ with $L^{c}_{\sigma}\in U(\mathcal{H})$, thus, by the observation \ref{c}:
\begin{equation}P_{\sigma}^{*}G_{\Phi_c} P_{\sigma}=(V_c P_{\sigma})^{*}(V_c P_{\sigma})=(L^{c}_{\sigma}V)^{*}(L^{c}_{\sigma}V)=V_c^{*} L^{c*}_{\sigma}L^{c}_{\sigma}V_c=G_{\Phi_c}.\end{equation}

\newpage

\noindent On the other hand, if $P_{\sigma}^{*}G_{\Phi_c} P_{\sigma}=G_{\Phi_c}$, then we define:
\begin{equation}L^{c}_{\sigma}:=V_{c}P_{\sigma}V_{c}^{*},\end{equation}
\noindent thus:
\begin{equation}L^{c*}_{\sigma}L^{c}_{\sigma}=V_c(P_{\sigma}^{*}V_{c}^{*}V_cP_{\sigma})V_{c}^{*}=V_cV_c^{*}V_cV_{c}^*=I_{\mathcal{H}},\end{equation}
\noindent and $L_{\sigma}V_c=V_c P_{\sigma}$, therefore, $\sigma\in Sym(\Phi)$.

\end{proof}
    
From Observation \ref{11} and Proposition \ref{12}, a unitary representation induces a decomposition of the space into a direct sum of invariant spaces. In particular, a $\mathcal{G}$-frame also induces this decomposition.

\begin{obs}
    Let $(g\ket{v})_{g\in \mathcal{G}}$ a $\mathcal{G}$-frame in $\mathcal{H}$, and:
\begin{equation}\mathcal{H}=\bigoplus_{i\leq n} W_i,\end{equation}

\noindent the decomposition in orthogonal irreducible invariant spaces. If $\ket{v}=\oplus_{i\leq n} \ket{w_i}$, then:
\begin{equation}(g\ket{v})_{g\in \mathcal{G}}=\bigoplus_{i\leq n}(g\ket{w_i})_{g\in \mathcal{G}},\end{equation}
\noindent where $(g\ket{w_i})_{g\in \mathcal{G}}$ are \textbf{irreducible \boldsymbol{\mathcal{G}}-frames in \boldsymbol{W_i}}.

\end{obs}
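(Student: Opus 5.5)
We need to justify two things: that the $\mathcal{G}$-frame splits as the stated direct sum of orbits, and that each summand $(g\ket{w_i})_{g\in\mathcal{G}}$ is an irreducible $\mathcal{G}$-frame in $W_i$. The argument combines Proposition \ref{12}, Observation \ref{11} and Proposition \ref{4}.

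\textbf{Existence of the decomposition.} Let $D$ be the unitary representation with $g\ket{f_h}=\ket{f_{gh}}$, so that $\ket{f_g}=D(g)\ket{v}$ with $\ket{v}=\ket{f_e}$. We argue by induction on $\dim\mathcal{H}$. If $D$ is irreducible there is nothing to do. Otherwise there is a nontrivial invariant subspace $W$. By Proposition \ref{12}, $W^{\perp}$ is also invariant. By Observation \ref{11}, $D=D_1\oplus D_2$, where $D_1$ and $D_2$ are unitary representations on spaces of strictly smaller dimension. Applying the induction hypothesis to each piece gives $\mathcal{H}=\bigoplus_{i\leq n}W_i$, with the $W_i$ mutually orthogonal, invariant, and irreducible. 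Let $\Pi_i$ be the orthogonal projection onto $W_i$. Invariance of $W_i$ and of $W_i^{\perp}$ gives $\Pi_iD(g)=D(g)\Pi_i$.

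\textbf{The orbit splits.} Write $\ket{w_i}:=\Pi_i\ket{v}$. Then
\[
g\ket{v}=\sum_i \Pi_iD(g)\ket{v}=\sum_i D(g)\ket{w_i}=\bigoplus_i g\ket{w_i},
\]
which is the claimed identity of sequences.

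\textbf{Each summand is an irreducible $\mathcal{G}$-frame.} The restriction $D|_{W_i}$ is a unitary, irreducible representation, and it satisfies $g(h\ket{w_i})=(gh)\ket{w_i}$. It remains to check that $(g\ket{w_i})_g$ is a frame in $W_i$, which by Proposition \ref{4} means it spans $W_i$. Since $\Pi_i$ is surjective onto $W_i$ and $(g\ket{v})_g$ spans $\mathcal{H}$, the vectors $\Pi_i(g\ket{v})=g\ket{w_i}$ span $\Pi_i\mathcal{H}=W_i$. In particular $\ket{w_i}\neq0$.

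An alternative route to spanning is that $\mathrm{span}\{g\ket{w_i}\}$ is a nonzero invariant subspace of the irreducible $W_i$, hence equals $W_i$. The projection argument above already covers nonzeroness, so it suffices on its own.

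\textbf{Main obstacle.} The only delicate step is the commutation $\Pi_iD(g)=D(g)\Pi_i$, since the whole splitting rests on it. It follows from unitarity via Proposition \ref{12}: $D(g)$ preserves both $W_i$ and $W_i^{\perp}$, so it commutes with the projection onto $W_i$.
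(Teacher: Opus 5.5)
Your proof is correct and takes essentially the paper's route. The paper justifies this observation only by pointing to Observation \ref{11} and Proposition \ref{12}: a unitary representation is completely reducible, so $\mathcal{H}$ splits orthogonally into invariant pieces and the orbit splits componentwise. Your induction, projection argument and spanning check via Proposition \ref{4} make explicit what the paper leaves implicit.

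One small remark: the splitting $g\ket{v}=\bigoplus_i g\ket{w_i}$ needs only linearity and invariance of each $W_i$. So the commutation $\Pi_iD(g)=D(g)\Pi_i$ is not really the delicate step, though it is convenient for showing that the vectors $g\ket{w_i}$ span $W_i$.
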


\begin{prop}
    Let $(g\ket{v})_{g\in \mathcal{G}}$ a irreducible $\mathcal{G}$-frame in $\mathcal{H}$, then is tight.
\end{prop}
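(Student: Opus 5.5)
The plan is to show that the frame operator $S=VV^{*}$ commutes with the representation and then apply the first Schur lemma. By Observation \ref{o2.1.1}, the frame operator of the $\mathcal{G}$-frame is
\begin{equation*}
S=\sum_{g\in \mathcal{G}} D(g)\ket{v}\bra{v}D(g)^{*}.
\end{equation*}

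\textbf{Step 1: $S$ commutes with every $D(h)$.} Fix $h\in\mathcal{G}$. Since $D$ is unitary, $D(h)^{*}=D(h^{-1})$. Using this together with the rearrangement lemma (the substitution $g\mapsto hg$ permutes $\mathcal{G}$), we get
\begin{equation*}
D(h)SD(h)^{*}=\sum_{g\in\mathcal{G}} D(hg)\ket{v}\bra{v}D(hg)^{*}=S .
\end{equation*}
Hence $[S,D(h)]=0$ for every $h\in\mathcal{G}$.

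\textbf{Step 2: apply Schur's first lemma.} The representation restricted to $\mathcal{H}$ is irreducible, which is exactly the meaning of an irreducible $\mathcal{G}$-frame. So Schur 1 gives $S=\lambda I_{\mathcal{H}}$ for some $\lambda\in\mathbb{K}$.

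\textbf{Step 3: show $\lambda>0$.} This is the only point needing care, namely that $\lambda$ is real and positive. Since $(g\ket{v})_{g\in\mathcal{G}}$ is a frame, Proposition \ref{4} (its proof) shows that $S$ is positive definite. Alternatively, take $\bra{v}S\ket{v}\geq |\braket{v}{v}|^{2}>0$ for $\ket{v}\neq 0$; and $\ket{v}\neq 0$ because the orbit spans $\mathcal{H}\neq\{0\}$. Either way $\lambda>0$.

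\textbf{Conclusion.} Proposition \ref{3} then yields that the frame is tight with constant $A=\lambda$. Taking traces as in Proposition \ref{3} identifies the constant explicitly:
\begin{equation*}
A=\frac{n_{\mathcal{G}}\,\|v\|^{2}}{d}.
\end{equation*}

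**Possible subtlety.** Schur's first lemma as proved above uses the existence of an eigenvalue of $S$. Over $\mathbb{C}$ this is automatic. Over $\mathbb{R}$ it still holds here, because $S$ is self-adjoint and therefore has a real eigenvalue; so the argument is valid in both cases.
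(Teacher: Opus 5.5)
Your proof is correct and is essentially the paper's argument. The paper invokes Corollary~\ref{13} with $B=\ket{v}\bra{v}$ (after reindexing $g\mapsto g^{-1}$ via unitarity); that corollary packages exactly your Step~1 commutation together with Schur's first lemma, and the paper then concludes with Proposition~\ref{3}.

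Two of your details are ones the paper leaves implicit: the positivity check $\bra{v}S\ket{v}\geq \|v\|^{4}>0$, and the remark that self-adjointness of $S$ supplies the eigenvalue Schur's lemma needs over $\mathbb{R}$.
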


\begin{proof}
    By the corolary \ref{13}:
 \begin{equation}0<S=\sum_{g\in \mathcal{G}}g\ket{v}\bra{v}g^{*}=\underbrace{\lambda_v}_{>0}\, I_\mathcal{H},\end{equation}
\noindent concluding thus, for the proposition \ref{3}, that the $\mathcal{G}$-frame is tight.
\end{proof}

   \newpage

   \begin{obs}\label{14}
       
   In a $\mathcal{G}$-tight frame $(g\ket{v})_{g\in \mathcal{G}}$ in $\mathcal{H}$ with constant $\lambda$, it follows by the proposition \ref{3} that:
\begin{equation}n_\mathcal{G}\,\vert\vert v\vert\vert^2=\sum_{g\in \mathcal{G}} \vert \vert gv\vert\vert^2=d\,\lambda,\end{equation}
\noindent that is to say, $\lambda=\dfrac{n_\mathcal{G}}{d}\,\vert\vert v\vert\vert^2$.

   \end{obs}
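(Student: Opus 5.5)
The identity $\lambda=\frac{n_\mathcal{G}}{d}\,\vert\vert v\vert\vert^2$ in Observation \ref{14} will come from the trace formula in Proposition \ref{3}, together with the fact that every element of a $\mathcal{G}$-frame has the same norm. Each step is short, so I will simply carry them out in order.

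First, I use that the frame is indexed by $\mathcal{G}$ and generated by a unitary representation $D:\mathcal{G}\to U(\mathcal{H})$. Each frame vector is $g\ket{v}=D(g)\ket{v}$. Since $D(g)$ is unitary,
\begin{equation*}
\vert\vert gv\vert\vert^2=\bra{v}D(g)^{*}D(g)\ket{v}=\braket{v}{v}=\vert\vert v\vert\vert^2 .
\end{equation*}
Summing over the $n_\mathcal{G}$ elements of the group, with each index $g$ appearing exactly once, gives
\begin{equation*}
\sum_{g\in\mathcal{G}}\vert\vert gv\vert\vert^2=n_\mathcal{G}\,\vert\vert v\vert\vert^2 .
\end{equation*}

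Second, I apply Proposition \ref{3} to the tight frame $(g\ket{v})_{g\in\mathcal{G}}$ with constant $\lambda$. This gives
\begin{equation*}
\sum_{g\in\mathcal{G}}\vert\vert gv\vert\vert^2=d\lambda .
\end{equation*}
The underlying reason is that $S=\lambda I_{\mathcal{H}}$, and $\operatorname{Tr}(S)=\operatorname{Tr}(V^*V)$ equals the sum of the squared norms.

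Equating the two expressions gives $n_\mathcal{G}\vert\vert v\vert\vert^2=d\lambda$. Since $d\geq 1$, dividing by $d$ yields the stated formula.

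The only point that needs care is the first step. It relies on the representation being unitary, which is part of the definition of a $\mathcal{G}$-frame. If one had only a representation similar to a unitary one, as in Proposition \ref{10}, the norms would not be preserved. So the formula applies to genuine $\mathcal{G}$-frames, not to frames that are merely similar to them.
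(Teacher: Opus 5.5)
Your argument is correct and is the same as the paper's: unitarity of $D(g)$ gives $\vert\vert gv\vert\vert=\vert\vert v\vert\vert$, so the sum equals $n_\mathcal{G}\vert\vert v\vert\vert^2$, and the trace identity of Proposition~\ref{3} identifies that sum with $d\lambda$. Your closing caveat is unnecessary, though: by Observation~\ref{b}, any operator $L_g=QD(g)Q^{-1}$ that permutes the vectors of a \emph{tight} frame is automatically unitary, so a tight frame that is merely similar to a $\mathcal{G}$-frame is itself one, and the formula still holds (with $\vert\vert Qv\vert\vert$ in place of $\vert\vert v\vert\vert$).
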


With this, we can reduce the study to irreducible $\mathcal{G}$-frames. However, what conditions must the components of the direct sum satisfy for the $\mathcal{G}$-frame to be tight?

\begin{prop}\label{y}
    Let $(g\ket{v})_{g\in \mathcal{G}}$ a $\mathcal{G}$-frame in $\mathcal{H}$ and:

    \begin{equation}\mathcal{H}=\bigoplus_{i\leq n} W_i,\end{equation}

    \noindent the decomposition in orthogonal irreducible invariant spaces. If $\ket{v}=\oplus_{i\leq n}\ket{w_i}$, then the $\mathcal{G}$-frame is tight if and only if for $i\neq j$:
\begin{equation}\dfrac{\vert \vert w_i\vert\vert^2}{\vert \vert w_j\vert\vert^2}=\dfrac{n_i}{n_j}\,\,\,\,\,\,\,\, ;\,\,\,\,\,\,\,\, \sum_{g\in \mathcal{G}}g\ket{w_i}\bra{w_j}g^{*}=0.\end{equation}

\end{prop}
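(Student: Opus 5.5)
The plan is to expand the frame operator along the decomposition $\mathcal{H}=\bigoplus_{i\leq n} W_i$ and read off its block structure. Write $P_i$ for the orthogonal projection onto $W_i$. Since $\ket{v}=\sum_i \ket{w_i}$ with $\ket{w_i}\in W_i$, and each $W_i$ is invariant under the unitary representation (Proposition \ref{12}), we have $g\ket{w_i}\in W_i$ for every $g\in\mathcal{G}$. Hence
\begin{equation*}
S=\sum_{g\in\mathcal{G}} g\ket{v}\bra{v}g^{*}=\sum_{i,j\leq n} S_{ij},\qquad S_{ij}:=\sum_{g\in\mathcal{G}} g\ket{w_i}\bra{w_j}g^{*}.
\end{equation*}
Each $S_{ij}$ maps $W_j$ into $W_i$ and annihilates $W_j^{\perp}$, so $S_{ij}=P_iSP_j$ is exactly the $(i,j)$ block of $S$.

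By Proposition \ref{3}, tightness is equivalent to $S=\lambda I_{\mathcal{H}}$ for some $\lambda>0$. In block form this means $S_{ij}=0$ for $i\neq j$ and $S_{ii}=\lambda I_{W_i}$ for every $i$ with a common $\lambda$. The off-diagonal conditions are precisely the second condition of the statement. So the remaining work is to show that the diagonal blocks are scalar multiples of the identity, and that their scalars coincide exactly when the norm ratio condition holds.

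For the diagonal blocks, $(g\ket{w_i})_{g\in\mathcal{G}}$ is an irreducible $\mathcal{G}$-frame in $W_i$. It spans $W_i$ because it is a frame, since $\ket{v}$ generates a frame and $P_i$ is surjective onto $W_i$. The preceding proposition then gives $S_{ii}=\lambda_i I_{W_i}$. Observation \ref{14}, applied inside $W_i$ with $d=n_i$, identifies the constant as $\lambda_i=\frac{n_\mathcal{G}}{n_i}\vert\vert w_i\vert\vert^2$. Therefore $\lambda_i=\lambda_j$ holds if and only if $\vert\vert w_i\vert\vert^2/n_i=\vert\vert w_j\vert\vert^2/n_j$, which rearranges to the ratio condition.

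Both directions follow at once. If the frame is tight, the block form forces the off-diagonal blocks to vanish and the $\lambda_i$ to be equal. Conversely, if both conditions hold, then $S=\lambda I_{\mathcal{H}}$ with $\lambda>0$.

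The main subtlety is justifying that each $\ket{w_i}\neq 0$, which is needed for the ratio to make sense and for Observation \ref{14} to apply. This holds because $\ket{v}$ generates a frame for all of $\mathcal{H}$: if some $\ket{w_i}=0$, the orbit would lie in $W_i^{\perp}$, contradicting Proposition \ref{4}. A second point to check is that the decomposition uses the same $n_i=\dim W_i$ as in the statement, even when some $W_i$ carry equivalent representations.
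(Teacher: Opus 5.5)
Your proposal is correct and follows essentially the same route as the paper. You split the frame operator into the blocks $\sum_{g\in\mathcal{G}} g\ket{w_i}\bra{w_j}g^{*}$, identify the diagonal ones as $\frac{n_\mathcal{G}}{n_i}\|w_i\|^2 I_{W_i}$ (Schur's lemma plus a trace count), and read off both conditions from the block form of $\lambda I_{\mathcal{H}}$.

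The paper compares each diagonal constant with the global constant $\frac{n_\mathcal{G}}{d}\|v\|^2$ of Observation~\ref{14} rather than pairwise, and your check that every $\ket{w_i}\neq 0$ makes explicit a point it leaves implicit. Your worry about equivalent summands is harmless: the block argument never uses inequivalence, and $n_i$ is simply $\dim W_i$.
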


\begin{proof}
    Note that, by the observation \ref{14}, $(g\ket{v})_{g\in \mathcal{G}}$ is a $\mathcal{G}$-tight-frame if and only if:
$$\dfrac{n_\mathcal{G}}{d}\, \vert\vert v\vert\vert^2 \, I_{\mathcal{H}}=\sum_{g\in \mathcal{G}} g\ket{v}\bra{v}g^{*}=\bigoplus_{i\leq n} \sum_{g\in \mathcal{G}} g \ket{w_i}\bra{w_i}g^{*}+\bigoplus_{i,j:i\neq j}\sum_{g\in \mathcal{G}} g\ket{w_i}\bra{w_j}g^{*}$$
\begin{equation}=\bigoplus_{i\leq n}\dfrac{n_\mathcal{G}}{n_i}\vert\vert w_i\vert\vert^2\, I_{W_i}+\bigoplus_{i,j:i\neq j}\sum_{g\in \mathcal{G}} g\ket{w_i}\bra{w_j}g^{*}.\end{equation}
\noindent Evaluating in $\ket{f_j}\in W_j$ for $j\leq n$, it is concluded by the uniqueness of the expressions in the direct sum, that for $i\neq j$:
\begin{equation}\dfrac{\vert\vert v\vert\vert^2}{d}=\dfrac{\vert\vert w_j\vert\vert^2}{n_j}\,\,\,\,\,\,\,\, ;\,\,\,\,\,\,\,\, \sum_{g\in \mathcal{G}}g\ket{w_i}\bra{w_j}g^{*}=0.\end{equation}

\end{proof}

\newpage

\section{Harmonic Frames}\label{harmonic}
In the previous section we introduced the concept of group frame along with its main properties. Due to their symmetry, these are our main study candidates for implementation in quantum computers. Among all the groups, we highlight those that are abelian due to the simplicity of their structure. This motivates the following definition:
\begin{defi}
    
Let $\mathcal{G}$ be an abelian group. We say that $(\ket{f_j})_{j\in J}$ is a \textbf{Harmonic frame} if it is a $\mathcal{G}$-tight frame.
\end{defi}

\begin{obs}
    By the corollary \ref{x}, we know that the irreducible representations of an abelian group are one-dimensional, then $D^{i}(g)=\xi^{i}(g)$. Moreover, they satisfy that $(\xi^{i}(g))^{n_\mathcal{G}}=1$, that is, are $n_\mathcal{G}$ roots of unity, therefore, \textbf{we will study the representations in $\mathbb{K}=\mathbb{C}$}.
    
\end{obs}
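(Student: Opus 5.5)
The observation has two parts: the identification $D^{i}(g)=\xi^{i}(g)$, and the claim that $\xi^{i}(g)$ is an $n_\mathcal{G}$-th root of unity. I would prove them in that order and then explain why we pass to $\mathbb{K}=\mathbb{C}$.

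\textbf{Step 1: one-dimensionality.} Corollary \ref{x} says that an irreducible representation $D^{i}$ of the abelian group $\mathcal{G}$ acts on a one-dimensional space $\mathcal{H}_i\cong\mathbb{K}$. So each $D^{i}(g)$ is an invertible $1\times 1$ matrix, that is, a nonzero scalar. The trace of a $1\times 1$ matrix is its only entry, so $D^{i}(g)=Tr(D^{i}(g))=\xi^{i}(g)$. This also makes $\xi^{i}$ itself a homomorphism $\mathcal{G}\to\mathbb{K}^{\times}$.

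\textbf{Step 2: roots of unity.} By Lagrange's theorem, the order of any $g\in\mathcal{G}$ divides $n_\mathcal{G}$, so $g^{n_\mathcal{G}}=e$. Since $\xi^{i}$ is multiplicative,
\begin{equation}
(\xi^{i}(g))^{n_\mathcal{G}}=\xi^{i}(g^{n_\mathcal{G}})=\xi^{i}(e)=1 .
\end{equation}
So every character value is an $n_\mathcal{G}$-th root of unity. This agrees with Proposition \ref{10}: we may take $D^{i}$ unitary, and then $\vert\xi^{i}(g)\vert=1$.

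\textbf{Step 3: choosing $\mathbb{K}=\mathbb{C}$.} This is the only delicate point, and it is a justification rather than a deduction. The proof of Schur's first lemma, on which Corollary \ref{x} rests, assumes that $A$ has an eigenvalue $\lambda\in\mathbb{K}$. This holds when $\mathbb{K}$ is algebraically closed, but can fail over $\mathbb{R}$. In that case the only roots of unity available are $\pm1$. For example, $C_3$ acting on $\mathbb{R}^2$ by rotation through $2\pi/3$ is irreducible over $\mathbb{R}$ but two-dimensional, because the value $\omega=e^{2\pi i/3}$ is not real. I would state this as the reason for working over $\mathbb{C}$. There every polynomial, in particular $x^{n_\mathcal{G}}-1$, splits, Schur's lemma holds, and Corollary \ref{x} applies without qualification. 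The main obstacle is therefore not Steps 1--2 but making clear that Corollary \ref{x} implicitly needs $\mathbb{K}=\mathbb{C}$, and that the observation is consistent only under this choice.
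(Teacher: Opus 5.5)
Your proposal is correct and follows the paper's reasoning: Corollary \ref{x} gives one-dimensionality, so $D^{i}(g)=\xi^{i}(g)$, and multiplicativity together with $g^{n_\mathcal{G}}=e$ (Lagrange, which the paper uses without stating) shows the values are $n_\mathcal{G}$-th roots of unity. Your Step 3 justifies the choice of field better than the paper's bare ``therefore'': the paper's proof of Schur's first lemma, and hence Corollary \ref{x}, silently assumes $A$ has an eigenvalue in $\mathbb{K}$, so $\mathbb{K}=\mathbb{C}$ is really needed before the observation's first sentence rather than following from it.
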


\begin{obs}
    
    By the Structure Theorem for Abelian Groups \cite{dummit2004abstract}, any finite abelian group can be decomposed as a direct product of cyclic groups of prime power orders, i.e:
  \begin{equation}
    \mathcal{G}\cong C_{p_0^{t_0}}\times\dots\times C_{p_{m-1}^{t_{m-1}}},\end{equation}

\noindent thus, the representations of $\mathcal{G}$ can be viewed as representations in the product.

\end{obs}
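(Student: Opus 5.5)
The statement to justify has two parts. First, every finite abelian group $\mathcal{G}$ is isomorphic to $C_{p_0^{t_0}}\times\dots\times C_{p_{m-1}^{t_{m-1}}}$. Second, the representations of $\mathcal{G}$ can therefore be described through this product. The plan is to take the first part directly from the structure theorem in \cite{dummit2004abstract}, without reproving it, and then transport representations along the resulting isomorphism.

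Let $\varphi:C_{p_0^{t_0}}\times\dots\times C_{p_{m-1}^{t_{m-1}}}\to\mathcal{G}$ be the isomorphism. Then for any representation $D$ of $\mathcal{G}$, the composition $D\circ\varphi$ is a representation of the product, and conversely $D'\circ\varphi^{-1}$ turns a representation $D'$ of the product into one of $\mathcal{G}$. These two maps are mutually inverse. Since they do not change the operators involved, they preserve unitarity, equivalence and invariant subspaces, and hence irreducibility. It therefore suffices to study representations of the product itself.

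Next I would make the irreducible representations explicit, since this is what later sections will need. By Corollary \ref{x} each irreducible representation $\xi$ is a homomorphism into $\mathbb{C}^{\times}$. Let $a_k$ be a generator of the $k$-th factor. The element $\xi(a_k)$ must be a $p_k^{t_k}$-th root of unity $\omega_k^{s_k}$, where $\omega_k=e^{2\pi i/p_k^{t_k}}$; this follows from the cyclic-group example. Because the factors commute and generate the group, $\xi$ is determined by these values:
\begin{equation}
\xi_{s}(a_0^{j_0}\cdots a_{m-1}^{j_{m-1}})=\prod_{k<m}\omega_k^{s_kj_k}.
\end{equation}
Conversely, each tuple $s=(s_0,\dots,s_{m-1})$ defines a homomorphism, because each factor map is a homomorphism and the product of commuting characters is again a homomorphism. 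Finally, distinct tuples give inequivalent representations. Two one-dimensional representations are equivalent only if they are equal, since $A^{-1}\lambda A=\lambda$. This yields exactly $n_\mathcal{G}$ inequivalent irreducibles, which is consistent with the orthogonality relations of Observation \ref{xd}.

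The only step that needs care is well-definedness: the exponents $j_k$ are defined only modulo $p_k^{t_k}$. This is handled by $\omega_k^{p_k^{t_k}}=1$. All other steps are routine checks of the homomorphism property.
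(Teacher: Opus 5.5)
Your proposal is correct and takes essentially the same approach as the paper: the paper simply invokes the structure theorem from \cite{dummit2004abstract} and reads representations of $\mathcal{G}$ through the isomorphism. Your explicit description of the irreducible characters as $\prod_k \omega_k^{s_k j_k}$ (relying, like the paper, on working over $\mathbb{C}$ so that Corollary \ref{x} applies) is what the paper establishes in the subsequent observations, namely the factorization $D(g_1,g_2)=D_1(g_1)D_2(g_2)$ and the isomorphism $\Psi$ of Observation \ref{21}.
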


\begin{obs}
     Note that given a representation in $D:\mathcal{G}_1\times \mathcal{G}_2\to Gl(\mathcal{H})$. If we define
     $D_1(g_1):=D(g_1,e_2)$ and $D_2(g_2):=D(e_1,g_2)$ with $e_1$ and $e_2$ the identities in $\mathcal{G}_1$ and $\mathcal{G}_2$ respectively, then:
\begin{equation}D(g_1,g_2)=D((g_1,1)(1,g_2))=D(g_1,1)D(1,g_2)=D_1(g_1)D_2(g_2). \end{equation}
\noindent In addition, $D_1,D_2$ are representation of $\mathcal{G}_1,\mathcal{G}_2$ in $\mathcal{H}$ respectively.
\end{obs}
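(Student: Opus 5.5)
The statement is an observation with two claims. First, $D(g_1,g_2)=D_1(g_1)D_2(g_2)$. Second, $D_1$ and $D_2$ are representations of $\mathcal{G}_1$ and $\mathcal{G}_2$ in $\mathcal{H}$. I would verify each directly from the definition of a representation as a group homomorphism into $Gl(\mathcal{H})$.

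\textbf{Factorization.} Work in the direct product $\mathcal{G}_1\times\mathcal{G}_2$, where the operation is componentwise. For any $(g_1,g_2)$ this gives
\begin{equation*}
(g_1,g_2)=(g_1,e_2)(e_1,g_2),
\end{equation*}
since $g_1e_1=g_1$ and $e_2g_2=g_2$. Applying the homomorphism property of $D$ then yields $D(g_1,g_2)=D(g_1,e_2)D(e_1,g_2)=D_1(g_1)D_2(g_2)$, which is the displayed identity. (The statement writes $1$ for the identities, which should read $e_1,e_2$.)

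\textbf{$D_1$ and $D_2$ are representations.} The maps $\iota_1:g_1\mapsto(g_1,e_2)$ and $\iota_2:g_2\mapsto(e_1,g_2)$ are group homomorphisms, because
\begin{equation*}
(g_1h_1,e_2)=(g_1,e_2)(h_1,e_2),
\end{equation*}
and similarly for the second factor. Hence $D_1=D\circ\iota_1$ and $D_2=D\circ\iota_2$ are compositions of homomorphisms, so they are homomorphisms into $Gl(\mathcal{H})$, that is, representations in $\mathcal{H}$.

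As a further remark, $(g_1,e_2)$ and $(e_1,g_2)$ commute in the product, so $D_1(g_1)$ and $D_2(g_2)$ commute as operators. This is worth recording, since it lets the factorization be iterated over all the cyclic factors given by the structure theorem.

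There is no real obstacle. The only point requiring care is checking the componentwise product identity and the fact that the inclusions $\iota_1,\iota_2$ are homomorphisms; everything else follows from the definitions.
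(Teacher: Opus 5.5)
Your proposal is correct and follows the paper's route: you factor $(g_1,g_2)=(g_1,e_2)(e_1,g_2)$ and apply the homomorphism property of $D$, and your composition $D_i=D\circ\iota_i$ makes explicit the claim that $D_1,D_2$ are representations, which the paper states without justification. Your added remark that $D_1(g_1)$ and $D_2(g_2)$ commute is correct and useful for iterating the factorization over the cyclic factors.
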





\newpage

\begin{obs}\label{21}
    With the above observations of this section, we can deduce that if we call $\hat{\mathcal{G}}$ the irreductible characters of the abelian group $\mathcal{G}$, then $\mathcal{G}\cong \hat{\mathcal{G}}$ via the isomorphism:
\begin{equation}\left\{ \begin{array}{lcc}
             \Psi:C_{n_0}\times \dots\times C_{n_{m-1}}\to \overbrace{C_{n_0}\times \dots \times C_{n_{m-1}}}\\
            \Psi(a_0^{r_0},\dots,a_{m-1}^{r_{m-1}})(a_0^{s_0},\dots,a_{m-1}^{s_{m-1}})=\displaystyle \prod_{j=0}^{m-1}\Psi_j(a_j^{r_j})(a_j^{s_j})

             \end{array} \right. ,\end{equation}
\noindent where $n_j:=p_j^{t_j}$ and $\Psi_j$ are the isomorphism between $C_{n_j}$ and $\hat{C}_{n_j}$, given by:
\begin{equation}\left\{ \begin{array}{lcc}
             \Psi_j:C_{n_j}\to  \hat{C}_{n_j}\\
            \Psi_j(a_j^{r_j})(a_j^{s_j})=w_j^{r_j\,s_j}

             \end{array} \right. ,\end{equation}
             
\noindent and $w_j=e^{2\pi i/n_j}$.
\end{obs}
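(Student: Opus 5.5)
We need to show that the map $\Psi$ in Observation \ref{21} is well defined, takes values in $\hat{\mathcal{G}}$, is a group homomorphism, and is bijective. We first handle a single cyclic factor $C_{n_j}$ and then pass to the product using the decomposition of representations of a direct product given in the preceding observation.

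\textbf{Cyclic factor.} By the example on one-dimensional representations of $C_n$, every irreducible character of $C_{n_j}$ has the form $a_j^{s}\mapsto \omega^{s}$, where $\omega=\xi(a_j)$. Since $\xi(a_j)^{n_j}=\xi(e)=1$, $\omega$ is an $n_j$-th root of unity, so $\omega=w_j^{r}$ for a unique $r\in\{0,\dots,n_j-1\}$.

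\begin{enumerate}
\item $\Psi_j(a_j^{r})(a_j^{s})=w_j^{rs}$ is well defined, because changing $r$ or $s$ modulo $n_j$ does not change $w_j^{rs}$.
\item $\Psi_j(a_j^r)$ is a homomorphism in $s$, hence a one-dimensional and therefore irreducible representation.
\item $\Psi_j$ is a homomorphism in $r$, since $w_j^{(r+r')s}=w_j^{rs}w_j^{r's}$, where characters are multiplied pointwise.
\item $\Psi_j$ is surjective by the classification of characters above.
\item $\Psi_j$ is injective: if $w_j^{r}=1$ then $n_j\mid r$.
\end{enumerate}

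\textbf{Product.} The structure theorem gives $\mathcal{G}\cong\prod_j C_{n_j}$. Given an irreducible character $\xi$ of the product, the preceding observation, iterated over the factors, gives $\xi(g_0,\dots,g_{m-1})=\prod_j\xi_j(g_j)$, where $\xi_j(g_j):=\xi(e,\dots,g_j,\dots,e)$. Each $\xi_j$ is a character of $C_{n_j}$, hence equals $\Psi_j(a_j^{r_j})$ for a unique $r_j$ by the cyclic case. Consequently:

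\begin{enumerate}
\item $\Psi$ is surjective onto $\hat{\mathcal{G}}$.
\item $\Psi(a_0^{r_0},\dots)$ really is a character, since a product of commuting one-dimensional homomorphisms is a homomorphism.
\item $\Psi$ is a homomorphism, because multiplication is componentwise in each factor.
\end{enumerate}

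For injectivity, suppose $\Psi(a_0^{r_0},\dots,a_{m-1}^{r_{m-1}})$ is the trivial character. Evaluating it at $(e,\dots,a_j,\dots,e)$ gives $w_j^{r_j}=1$, so $a_j^{r_j}=e$ for every $j$.

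Alternatively, injectivity together with the count $|\hat{\mathcal{G}}|\le n_{\mathcal{G}}$ suffices. This count follows from Observation \ref{xd}: irreducible characters are orthonormal vectors in $\mathbb{C}^{n_{\mathcal{G}}}$.

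\textbf{Main obstacle.} The delicate point is the bookkeeping showing that every irreducible character of the product factors through the coordinate restrictions $\xi_j$. This is exactly the content of the direct-product observation, applied inductively with one factor split off at a time. The remaining care is checking well-definedness modulo $n_j$, which is routine.
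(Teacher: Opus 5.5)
Your proof is correct and takes essentially the paper's approach. The paper only states that the isomorphism can be deduced from the preceding observations (irreducible representations of abelian groups are one-dimensional, one-dimensional representations of $C_n$ have the form $D(a^k)=\omega^k$, the structure theorem, and the factorization $D(g_1,g_2)=D_1(g_1)D_2(g_2)$ over a direct product), and you carry out exactly those verifications explicitly for each cyclic factor and then for the product.
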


\begin{obs}\label{20}
    By the proposition \ref{y}, the Harmonic frames can be written as:
\begin{equation}\bigoplus_{i\leq n} (\xi^{i}(g)\ket{w_i})_{g\in \mathcal{G}},\end{equation}

\noindent such that for $i\neq j$, the representations are not equivalent and the norms of $\ket{w_i}$ and $\ket{w_j}$ are equals. Normalizing the vectors and mapping $\ket{w_j}\to \ket{j}$, \textbf{the harmonic frames can be viewed as a selection of \boldsymbol{d} rows from the character table of the group \boldsymbol{\mathcal{G}}}, which is defined as $(\xi^{i}(g_j))_{i,j\leq n_\mathcal{G}}$ (by convention, we take $g_1=e$ and $\xi^{1}$ the trivial representation), 
moreover, the isomorphism between $\mathcal{G}$ and $\hat{\mathcal{G}}$ also allows \textbf{harmonic frames to be seen as a selection of \boldsymbol{d} columns from the character table, such that each one is associated with a sequence \boldsymbol{J} of different elements of $\mathcal{G}$}.

\end{obs}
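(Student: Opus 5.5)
The statement to justify is Observation \ref{20}. It has three parts: a harmonic frame decomposes as a direct sum of one-dimensional orbits, the summands carry pairwise inequivalent characters and components of equal norm, and after normalization the frame is a choice of $d$ rows (equivalently, via Observation \ref{21}, $d$ columns) of the character table. I would start from an abelian $\mathcal{G}$-tight frame $(g\ket{v})_{g\in\mathcal{G}}$ and decompose $\mathcal{H}=\bigoplus_{i\leq d}W_i$ into orthogonal irreducible invariant subspaces, using Propositions \ref{10} and \ref{12}. By Corollary \ref{x} each $W_i$ is one-dimensional and $D|_{W_i}(g)=\xi^i(g)$. Writing $\ket{v}=\oplus_i\ket{w_i}$ then gives $g\ket{v}=\oplus_i\xi^i(g)\ket{w_i}$, which is the displayed form.

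Next I would apply Proposition \ref{y}. Since all $n_i=1$, its first condition gives $\|w_i\|=\|w_j\|$. For the second condition, note that $g\ket{w_i}\bra{w_j}g^*=\xi^i(g)\overline{\xi^j(g)}\ket{w_i}\bra{w_j}$, so it reads
\begin{equation*}
\Big(\sum_{g}\xi^i(g)\overline{\xi^j(g)}\Big)\ket{w_i}\bra{w_j}=0 .
\end{equation*}
Every $w_i$ must be nonzero, since otherwise the frame does not span $W_i$, contradicting Proposition \ref{4}. The character sum must therefore vanish. By Observation \ref{xd} it equals $n_\mathcal{G}$ when $\xi^i=\xi^j$, so $\xi^i\neq\xi^j$ for $i\neq j$. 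Conversely, distinct characters make the sum vanish, so the two conditions together characterize tightness.

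Then I would normalize by choosing the orthonormal basis $\ket{j}=\ket{w_j}/\|w_j\|$ and scaling globally. In this basis the frame vector indexed by $g$ has coordinates $(\xi^{i_1}(g),\dots,\xi^{i_d}(g))$. The synthesis matrix is thus the submatrix of the character table $(\xi^i(g_j))$ formed by $d$ distinct rows. I expect the main obstacle to be the row/column transposition. For it I would invoke the isomorphism $\Psi$ of Observation \ref{21}, which gives $\xi^{i}=\Psi(h_i)$ with $\Psi(h)(g)=\Psi(g)(h)$. This symmetry holds because the pairing $\prod_j w_j^{r_js_j}$ is symmetric in $r$ and $s$. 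The character table is therefore symmetric after indexing characters by group elements. Selecting the rows $h_1,\dots,h_d$ is then the same as selecting the columns indexed by the sequence $J=(h_1,\dots,h_d)$ of distinct elements, and distinctness follows from $\Psi$ being a bijection.
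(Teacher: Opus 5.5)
Your proposal is correct and follows the same route as the paper. You decompose the frame into one-dimensional invariant subspaces via Corollary \ref{x} and read off equal norms and inequivalent characters from Proposition \ref{y}. You then normalize to get $d$ distinct rows of the character table, and pass to columns through the isomorphism $\mathcal{G}\cong\hat{\mathcal{G}}$ of Observation \ref{21}. Along the way you supply details the paper leaves implicit, namely that each $\ket{w_i}\neq 0$ by Proposition \ref{4}, and that the row/column swap rests on the symmetry $\Psi(h)(g)=\Psi(g)(h)$ of the chosen pairing.
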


\newpage

\section{Quantum states and multipartite separability}\label{state}

Throughout this section we will denote $L(\mathcal{H})$ as the linear operators of $\mathcal{H}$ in $\mathcal{H}$. With this we can define the following concept of physical interest.

\begin{defi}
    We say that $\rho \in L(\mathcal{H})$ is a \textbf{quantum state} if $\rho=\rho^{*}\geq 0$ and $Tr(\rho)=1$. A quantum state is called \textbf{pure} if its only representation as a convex sum of states is the trivial one (i.e, if $\rho=p_1\rho_1+\dots+p_m \rho_m$ then $\rho_i=\rho$). Otherwise, we say it is \textbf{mixed}.
\end{defi}

\textbf{From now on we will only mention quantum states simply as states}. An example is shown below to provide an intuition for this definition.

\begin{ex}
 Consider an experiment whose classical outcomes are given by $J$. We define the associated Hilbert space as $\mathcal{H}:=\l^2(J)$. With this, the representation of these outcomes as pure states in $\mathcal{H}$ is given by:
\begin{equation}\left\{ \begin{array}{lcc}
             E_{i,i}:\mathcal{H}\to  \mathcal{H}\\
            E_{i,i}\ket{j}:=\delta_{i,j}  \,\ket{j}

             \end{array} \right. ,\end{equation}
\noindent with $i,j\in J$. Taking this into account, the probability state $p:=(p_j)_{j\in J}$ associated with the outcomes J, can be seen as:
\begin{equation}\rho_p:=\sum_{j\in J}p_j \, E_{j,j},\end{equation}
\noindent this states are mixed.
\end{ex}

To identify pure states in a simpler way, the following characterization can be used.

\begin{prop}\label{pp}
    A operator $\rho\in L(\mathcal{H})$ is a pure state if and only if $\rho^2=\rho^{*}=\rho$.
\end{prop}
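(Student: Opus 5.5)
The plan is to work entirely with the spectral theorem, reading the statement for a state $\rho$ (so $Tr(\rho)=1$ is part of the hypothesis on both sides). Without the trace condition the converse fails: the identity on a $2$-dimensional space satisfies $\rho^2=\rho^{*}=\rho$ but is not a state. So I would first state explicitly that "pure state" presupposes the defining conditions of a state. The key intermediate claim is then that a state is pure if and only if it is a rank-one orthogonal projection $\ket{e}\bra{e}$ with $\vert\vert e\vert\vert=1$.

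For the forward direction, let $\rho$ be a pure state. Since $\rho=\rho^{*}\geq 0$, the spectral theorem gives $\rho=\sum_k \lambda_k \ket{e_k}\bra{e_k}$, where $(\ket{e_k})_k$ is an orthonormal basis, $\lambda_k\geq 0$, and $\sum_k\lambda_k=Tr(\rho)=1$. Each $\ket{e_k}\bra{e_k}$ is itself a state, so this is a convex decomposition of $\rho$. Discarding the terms with $\lambda_k=0$, purity forces $\ket{e_k}\bra{e_k}=\rho$ for every remaining $k$. Distinct basis vectors give distinct projectors, so only one eigenvalue is nonzero, and it equals $1$. Hence $\rho=\ket{e}\bra{e}$, and then $\rho^2=\ket{e}\braket{e}{e}\bra{e}=\rho=\rho^{*}$.

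For the converse, assume $\rho^2=\rho^{*}=\rho$ and $Tr(\rho)=1$. Then $\rho$ is an orthogonal projection, so its eigenvalues lie in $\{0,1\}$, and its rank equals its trace, which is $1$. Thus $\rho=\ket{e}\bra{e}$ with $\vert\vert e\vert\vert=1$. Now suppose $\rho=\sum_i p_i\rho_i$ with $p_i>0$, $\sum_i p_i=1$, and each $\rho_i$ a state. Take any $\ket{f}\perp\ket{e}$. Then
$$0=\bra{f}\rho\ket{f}=\sum_i p_i\bra{f}\rho_i\ket{f},$$
and every term is nonnegative, so $\bra{f}\rho_i\ket{f}=\vert\vert\sqrt{\rho_i}\,f\vert\vert^2=0$. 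This gives $\sqrt{\rho_i}\ket{f}=0$, hence $\rho_i\ket{f}=0$. Therefore $\ker\rho_i\supseteq\{e\}^{\perp}$. Since $\rho_i$ is self-adjoint, its range lies in $(\ker\rho_i)^{\perp}\subseteq span\{\ket{e}\}$, so $\rho_i=c_i\ket{e}\bra{e}$. The condition $Tr(\rho_i)=1$ forces $c_i=1$, so $\rho_i=\rho$ and $\rho$ is pure.

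The main obstacle is the step in the converse that passes from $\bra{f}\rho_i\ket{f}=0$ to $\rho_i\ket{f}=0$. I would handle it with the positive square root, exactly as $\sqrt{H}$ was used in proposition \ref{10}. The only other delicate point is the interpretational one above, namely that the trace-one condition must be built into the statement.
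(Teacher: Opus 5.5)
Your proof is correct and follows the same route as the paper. The forward direction reads the spectral decomposition as a convex sum of rank-one projectors. For the converse, positivity of each $\rho_i$ gives $\ker\rho\subseteq\ker\rho_i$, so each $\rho_i$ is a multiple of $\ket{e}\bra{e}$, and $Tr(\rho_i)=1$ then gives $\rho_i=\rho$. You are in fact more careful than the paper on two points: it uses $Tr(\rho)=1$ in the converse without stating it, and it does not discard the zero-eigenvalue terms in the forward direction.
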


    \begin{proof}
        If $\rho$ is a pure state, then for being a state $\rho=\rho^{*}$, thus, by spectral theorem exists a orthonormal basis of eigenvector $\ket{v_j}$ with eigenvalues $p_j$, thus:

        \begin{equation}\rho=\sum_{j=1}^{d} p_j \ket{v_j}\bra{v_j},\end{equation}

\noindent since $Tr(\rho)=1$, this is a convex sum of states, thus, for $i\neq j$, $\ket{v_i}\bra{v_i}=\ket{v_j}\bra{v_j}$, therefore, $\rho^2=\rho$. If $\rho^2=\rho^*=\rho$, by spectral theorem $\rho=\ket{v}\bra{v}$. Let a state convex sum representation $\rho=p_1\rho_1+\dots+p_m\rho_m$, if $\ket{w}\in Ker(\rho)$ then:

\begin{equation}\sum_{j\leq n} p_j \bra{w}\rho_i\ket{w}=0,\end{equation}

\noindent but by spectral decomposition in $\rho_i\geq 0$, this occurs only if $\ket{w}\in Ker(\rho_i)$, thus, by rank-nullity theorem, $Rank(\rho_i)=1$, then $\rho_i=\ket{v_i}\bra{v_i}$, and due to the inclusion of the kernels, $\{v\}^{\perp}\subset \{v_i\}^{\perp}$, so that, applying the orthogonal complement again, $\langle v_i\rangle\subset \langle v \rangle$, but $Tr(\rho_i)=Tr(\rho)=1$, then $\ket{v}=e^{i\theta}\ket{v_i}$, therefore $\rho=\rho_i$.
    \end{proof}

Some physical states in systems consisting of several subsystems can be identified by independent states in each subsystem, which is modeled mathematically as follows:
\begin{defi}
    Let $\mathcal{H}=\mathcal{H}_{A_1}\otimes\dots\otimes \mathcal{H}_{A_n}$ be the tensor product of Hilbert spaces describing a multipartite system. We will say that the state $\rho\in L(\mathcal{H})$ is \textbf{separable} if there exists a probability vector p and states $\rho_{j}^{A_i}\in L(\mathcal{H}_{A_j})$ such that:

\begin{equation}\rho=\sum_{j=1}^{m} p_j \, \bigotimes_{i\leq n} \rho_{j}^{A_i}.\end{equation}

\end{defi}

\begin{obs}\label{h}

If $\rho$ is a separable pure state, then for any j:

\begin{equation}\bigotimes_{i\leq n} \rho_{j}^{A_i}=\rho=\ket{v}\bra{v}.\end{equation}

\noindent Applying partial trace:

\begin{equation}
Tr_{A_{j\neq i}}^2(\rho)=(\rho_{j}^{A_i})^2=Tr_{A_{j\neq i}}(\rho^2)=Tr_{A_{j\neq i}}(\rho)=\rho_{j}^{A_i},\end{equation}

\noindent thus, $\rho_j^{A_i}$ is pure; hence, there exists $\ket{v_{A_i}}\in \mathcal{H}_{A_i}$ such that $\rho_j^{A_i}=\ket{v_{A_i}}\bra{v_{A_i}}$, therefore:

\begin{equation}(\bra{v}(\ket{v_{A_1}}\otimes\dots\otimes\ket{v_{A_n}}))\ket{v}=\ket{v_{A_1}}\otimes\dots \otimes \ket{v_{A_n}}:=\ket{v_{A_1},\dots,v_{A_{n}}},\end{equation}

\noindent in addition, since the norm is the vectors is 1, then $\vert\bra{v}(\ket{v_{A_1}}\otimes\dots\otimes\ket{v_{A_n}})\vert=1$.

\end{obs}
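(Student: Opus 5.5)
The plan is to exploit the definition of purity directly, and then use a rank argument rather than squaring under the partial trace.

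\textbf{Step 1.} Start from a separable decomposition
\[
\rho=\sum_{j=1}^m p_j \bigotimes_{i\leq n}\rho_j^{A_i}.
\]
Drop the terms with $p_j=0$. Each $\bigotimes_{i}\rho_j^{A_i}$ is a state: it is self-adjoint and positive because each factor is, and its trace is $\prod_i Tr(\rho_j^{A_i})=1$. Hence this is a convex decomposition of the pure state $\rho$ into states. By the definition of purity, every term equals $\rho$, so $\bigotimes_{i\leq n}\rho_j^{A_i}=\rho$ for each $j$. By Proposition \ref{pp}, $\rho=\ket{v}\bra{v}$ with $\vert\vert v\vert\vert=1$.

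\textbf{Step 2 (purity of the factors).} I would argue by rank instead of following the displayed partial-trace identity. That identity silently uses $Tr_{A_{k\neq i}}(\rho^2)=(\rho_j^{A_i})^2$, but in fact
\[
Tr_{A_{k\neq i}}(\rho^2)=(\rho_j^{A_i})^2\prod_{k\neq i}Tr\bigl((\rho_j^{A_k})^2\bigr),
\]
so an extra argument would be needed. The rank argument avoids this:
\begin{itemize}
\item The rank of a tensor product is the product of the ranks, so $1=Rank(\rho)=\prod_i Rank(\rho_j^{A_i})$.
\item Hence each $\rho_j^{A_i}$ has rank one.
\item A rank-one positive operator with trace one is, by the spectral theorem, $\ket{v_{A_i}}\bra{v_{A_i}}$ with $\vert\vert v_{A_i}\vert\vert=1$.
\item Proposition \ref{pp} then shows that $\rho_j^{A_i}$ is pure.
\end{itemize}
For consistency with the text, one can also note that $Tr_{A_{k\neq i}}(\rho)=\rho_j^{A_i}$, since the other factors have trace one.

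\textbf{Step 3 (the product vector).} Substituting into Step 1 gives
\[
\ket{v}\bra{v}=\ket{v_{A_1},\dots,v_{A_n}}\bra{v_{A_1},\dots,v_{A_n}}.
\]
Apply both sides to $\ket{v_{A_1},\dots,v_{A_n}}$, which has norm one. This yields
\[
\bigl(\bra{v}(\ket{v_{A_1}}\otimes\dots\otimes\ket{v_{A_n}})\bigr)\ket{v}=\ket{v_{A_1},\dots,v_{A_n}},
\]
which is the claimed identity. Taking norms of both sides, and using $\vert\vert v\vert\vert=1$ and unit norm of the product vector, gives $\vert\bra{v}(\ket{v_{A_1}}\otimes\dots\otimes\ket{v_{A_n}})\vert=1$. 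So $\ket{v}$ equals the product vector up to a global phase.

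\textbf{Main obstacle.} The only delicate point is Step 2. There I rely on multiplicativity of rank under tensor products, which follows from $Im(A\otimes B)=Im(A)\otimes Im(B)$, instead of the partial-trace squaring identity. Everything else is routine.
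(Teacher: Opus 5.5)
Your proof is correct. Steps 1 and 3 follow the paper's argument: the trivial convex decomposition forces $\bigotimes_{i}\rho_j^{A_i}=\rho=\ket{v}\bra{v}$, and you then evaluate this rank-one identity on the product vector and take norms. Discarding zero-weight terms is a small but legitimate tightening of the paper's ``for any $j$''.

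The genuine difference is Step 2. The paper gets purity of each factor from the chain $(Tr_{A_{k\neq i}}\rho)^2=(\rho_j^{A_i})^2=Tr_{A_{k\neq i}}(\rho^2)=Tr_{A_{k\neq i}}(\rho)=\rho_j^{A_i}$. Your objection to the middle equality is well founded. Since $\rho^2=\bigotimes_k(\rho_j^{A_k})^2$, one has $Tr_{A_{k\neq i}}(\rho^2)=(\rho_j^{A_i})^2\prod_{k\neq i}\gamma(\rho_j^{A_k})$. So that equality presupposes that the other factors are already pure, which is circular. Your replacement is multiplicativity of rank, combined with the fact that each factor has trace one and hence rank at least one. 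It is immediate and self-contained.

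The paper's route can also be repaired while keeping its purity-based viewpoint. One has $1=\gamma(\rho)=Tr(\rho^2)=\prod_k\gamma(\rho_j^{A_k})$. Each purity lies in $(0,1]$ (Observation \ref{obs 2.5.7}), so every factor has purity one and is pure by Observation \ref{p}. That version ties in directly with the purity criterion of Proposition \ref{ppp}. However, it relies on facts the text only establishes after this observation, whereas your rank argument needs nothing beyond elementary linear algebra.
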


\begin{obs}
    It is important to note that not all states are separable, for example, the Bell state in $\mathcal{H}=\mathbb{C}^2\otimes \mathbb{C}^{2}$:
\begin{equation}\ket{\Psi}=\dfrac{1}{\sqrt{2}}(\ket{00}+\ket{11}),\end{equation}
\noindent  it is a non-separable or \textbf{entangled state}.

\end{obs}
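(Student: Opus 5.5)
The plan is to argue by contradiction, using the characterization of pure states in Proposition \ref{pp} together with Observation \ref{h}. First, check that $\rho:=\ket{\Psi}\bra{\Psi}$ is a pure state. $\ket{\Psi}$ has norm one, since $\braket{\Psi}{\Psi}=\tfrac{1}{2}(1+1)=1$, so $\rho=\rho^{*}$ and $\rho^{2}=\ket{\Psi}\braket{\Psi}{\Psi}\bra{\Psi}=\rho$. By Proposition \ref{pp}, $\rho$ is pure. Now suppose, for contradiction, that $\rho$ is separable.

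By Observation \ref{h}, every term in the separable decomposition equals $\rho$. In particular $\rho=\rho^{A}\otimes\rho^{B}$, and the reduced operators $\rho^{A}=Tr_{B}(\rho)$ and $\rho^{B}=Tr_{A}(\rho)$ must themselves be pure, i.e. idempotent. I would then compute the partial trace explicitly. Expanding
\begin{equation}
\rho=\tfrac{1}{2}\left(\ket{00}\bra{00}+\ket{00}\bra{11}+\ket{11}\bra{00}+\ket{11}\bra{11}\right)
\end{equation}
and tracing out the second factor kills the cross terms, because $\braket{0}{1}=0$. This gives $Tr_{B}(\rho)=\tfrac{1}{2}I_{2}$. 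But $(\tfrac{1}{2}I_{2})^{2}=\tfrac{1}{4}I_{2}\neq\tfrac{1}{2}I_{2}$, so the reduced state is not pure, contradicting Observation \ref{h}. Hence $\ket{\Psi}$ is entangled.

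As a cross-check, or as an alternative route avoiding partial traces, I would use the last line of Observation \ref{h}: separability would give $\ket{\Psi}=e^{i\theta}\ket{a}\otimes\ket{b}$ for some $\ket{a},\ket{b}\in\mathbb{C}^{2}$. Writing $\ket{a}=a_0\ket{0}+a_1\ket{1}$ and $\ket{b}=b_0\ket{0}+b_1\ket{1}$, matching coefficients gives $a_0b_1=a_1b_0=0$ while $a_0b_0$ and $a_1b_1$ are both nonzero, which is impossible. Equivalently, the $2\times 2$ coefficient matrix of $\ket{\Psi}$ is $\tfrac{1}{\sqrt{2}}I_2$, of rank $2$, while a product vector has a rank-one coefficient matrix.

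The only delicate step is justifying that a separable pure state has pure marginals. Observation \ref{h} already provides this, so the remaining work is the routine partial-trace computation.
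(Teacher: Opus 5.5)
Your argument is correct and follows the route the paper sets up: the paper states this observation without proof, but Observation \ref{h} and Proposition \ref{prop 2.5.2} reduce separability of a pure state to idempotency of $Tr_B(\rho)$, and your computation $Tr_B(\ket{\Psi}\bra{\Psi})=\tfrac{1}{2}I_2$ shows that idempotency fails. The step you flagged deserves tightening, because the paper's derivation in Observation \ref{h} writes $(\rho^{A})^2=Tr_B(\rho^2)$, which tacitly assumes $Tr((\rho^{B})^2)=1$; you can either use that the rank of $\rho^{A}\otimes\rho^{B}$ is the product of the ranks, or avoid purity entirely by noting that $\rho=\rho^{A}\otimes\rho^{B}$ with $\rho^{A}=Tr_B(\rho)=\tfrac{1}{2}I_2$ and $\rho^{B}=Tr_A(\rho)=\tfrac{1}{2}I_2$ would force $\rho=\tfrac{1}{4}I_4\neq\ket{\Psi}\bra{\Psi}$.
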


\begin{prop}\label{prop 2.5.2}
    Let $\mathcal{H}=\mathcal{H}_A\otimes \mathcal{H}_B$ bipartite system. A pure state $\rho\in L(\mathcal{H})$is separable if and only if $Tr_{B}(\rho)^2=Tr_{B}(\rho)$
\end{prop}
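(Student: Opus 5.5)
The plan is to write the pure state as $\rho=\ket{v}\bra{v}$ with $\|v\|=1$, which is possible by Proposition \ref{pp}, and to work with a Schmidt decomposition of $\ket{v}$. Concretely, I expand $\ket{v}=\sum_{i,j} c_{ij}\ket{i}\otimes\ket{j}$ in orthonormal bases of $\mathcal{H}_A$ and $\mathcal{H}_B$. Applying the singular value decomposition to the coefficient matrix $C=(c_{ij})$ gives orthonormal sets $\ket{a_k}\in\mathcal{H}_A$, $\ket{b_k}\in\mathcal{H}_B$ and numbers $\lambda_k>0$ with
\begin{equation}
\ket{v}=\sum_{k=1}^{r}\lambda_k\,\ket{a_k}\otimes\ket{b_k},\qquad \sum_{k=1}^{r}\lambda_k^2=1.
\end{equation}
A direct computation of the partial trace then yields $Tr_B(\rho)=\sum_k \lambda_k^2\ket{a_k}\bra{a_k}$. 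This is a spectral decomposition with eigenvalues $\lambda_k^2$.

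For the direction ($\Rightarrow$), suppose $\rho$ is separable. Observation \ref{h} gives $\rho=\rho^{A}\otimes\rho^{B}$ with $\rho^{A}=\ket{v_A}\bra{v_A}$ and $\rho^{B}=\ket{v_B}\bra{v_B}$ pure. Since $Tr(\rho^B)=1$, we get $Tr_B(\rho)=\rho^{A}Tr(\rho^{B})=\rho^{A}$. By Proposition \ref{pp} this operator is idempotent, so $Tr_B(\rho)^2=Tr_B(\rho)$. Alternatively, one can use the Schmidt form directly: separability forces $r=1$.

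For the direction ($\Leftarrow$), suppose $Tr_B(\rho)^2=Tr_B(\rho)$. The spectral form above gives $\lambda_k^4=\lambda_k^2$, hence $\lambda_k=1$ for every $k$. Combined with $\sum_k\lambda_k^2=1$, this forces $r=1$. Therefore $\ket{v}=\ket{a_1}\otimes\ket{b_1}$ and
\begin{equation}
\rho=\ket{a_1}\bra{a_1}\otimes\ket{b_1}\bra{b_1},
\end{equation}
which is separable with $m=1$ and $p_1=1$.

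The main obstacle is justifying the Schmidt decomposition, since the paper has not stated it. I would derive it inside the proof from the SVD $C=U\Sigma W^{*}$, setting $\ket{a_k}=\sum_i U_{ik}\ket{i}$ and $\ket{b_k}=\sum_j \overline{W_{jk}}\ket{j}$. After that, I need to check the partial-trace formula, which follows from the orthonormality of the $\ket{b_k}$. Everything else is routine.
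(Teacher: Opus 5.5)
Your proposal is correct and follows essentially the same route as the paper. The forward direction uses Observation \ref{h} (the reductions of a separable pure state are pure). The converse uses the SVD-based Schmidt decomposition: there $Tr_B(\rho)^2=Tr_B(\rho)$ forces $\lambda_k^4=\lambda_k^2$, which together with $\sum_k\lambda_k^2=1$ leaves a single Schmidt coefficient equal to $1$, so your choice $\ket{b_k}=\sum_j\overline{W_{jk}}\ket{j}$ coincides with the paper's $\ket{\tilde{w_k}}=\sum_j V^{*}_{kj}\ket{w_j}$.
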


\begin{proof}
    From observation \ref{h}, we know that if $\rho$ is separable pure state, then its partial traces are also pure. On the other hand, since $\rho$ is pure, there exists $\ket{v}$ normalized, such that $\rho=\ket{v}\bra{v}$. Let $\{\ket{u_i}\}_{i\leq d_A}$ and  $\{\ket{w_j}\}_{j\leq d_B}$ orthonormal basis of $\mathcal{H}_A$ and $\mathcal{H}_B$ respectively, then:
\begin{equation}\ket{v}=\sum_{\substack{i \leq d_A \\ j \leq d_B}} v_{ij} \, \ket{u_i,w_j}.\end{equation}
 \noindent By singular value decomposition \cite{treil2017linear}, exists $U\in U(d_A)$, $V\in U(d_B)$ and $\Sigma\in \mathbb{R}_{0}^{+\, d_A\times d_B}$ rectangular diagonal such that:
\begin{equation}(v_{ij})_{ij}=U\Sigma V^{*}.\end{equation}
\noindent If $r:=Rank(\Sigma)$, then:
\begin{equation}\ket{v}=\sum_{\substack{i \leq d_A \\ j \leq d_B\\ k\leq r}} \sigma_k U_{ik}V^{*}_{kj} \ket{u_i,w_j}=\sum_{k\leq r}\sigma_k \underbrace{\left(\sum_{i\leq d_A}U_{ik} \, \ket{u_i}\right)}_{:= \ket{\tilde{u_k}}} \otimes \underbrace{\left(\sum_{j\leq d_B}V^{*}_{kj} \, \ket{w_j}\right)}_{:=\ket{\tilde{w_k}}}=\sum_{k\leq r}\sigma_k\,\ket{\tilde{u_k},\tilde{w_k}}.\end{equation}

\newpage

The previous form of the vector is known as the \textbf{Schmidt decomposition} \cite{pathak2013elements}. Note that the $\{\ket{\tilde{u_k}}\}_{k\leq r}$ and $\{\ket{\tilde{w_k}}\}_{k\leq r}$ form orthonormal sets, as they are images of unitary operators, therefore, by completing bases, it follows that there exists for each vector a basis that leaves it in its Schmidt form. With this, we can write the state as:
\begin{equation}\label{2.67}\ket{v}\bra{v}=\sum_{i,j\leq r} \sigma_i \sigma_j \, \ket{\tilde{u_i}}\bra{\tilde{u_j}}\otimes \ket{\tilde{v_i}}\bra{\tilde{v_j}},\end{equation}
\noindent then, by hypothesis:
\begin{equation}Tr_{B}(\rho)=\sum_{i\leq r} \sigma_{i}^{2} \ket{\tilde{u_i}}\bra{\tilde{u_i}}\stackrel{!}{=}Tr_{B}^2(\rho)=\sum_{i\leq r} \sigma_i^{4}  \ket{\tilde{u_i}}\bra{\tilde{u_i}}
,\end{equation}
\noindent but since $Tr(\rho)=1$, the above happens only if there exists a unique non-zero $\sigma_k=1$. Therefore:
 \begin{equation}\rho=\ket{\tilde{u_k}}\bra{\tilde{u_k}}\otimes \ket{\tilde{w_k}}\bra{\tilde{w_k}}.\end{equation}
\end{proof}

\begin{obs}
    If we denote $\rho_A:=Tr_B(\rho)$ and $\rho_B:=Tr_A(\rho)$ \textbf{the reductions of \boldsymbol{\rho}}, then in general, by Schmidt decomposition in the vectors of the spectral decomposition:
\begin{equation}\rho=\sum_{k\leq d} p_k \, \ket{v_k}\bra{v_k}=\sum_{k\leq d} p_k \, \sum_{i,j\leq r_k}\sigma_{i,k}\sigma_{j,k} \ket{u_i^k}\bra{u_j^k}\otimes  \ket{w_i^k}\bra{w_j^k}, \end{equation}
\noindent thus:
\begin{equation}\rho_A=\sum_{k\leq d}\sum_{i\leq r_k} p_k\sigma^2_{i,k} \ket{u_i^k}\bra{u_i^k}\,\,\,\,\, ; \,\,\,\,\, \rho_B=\sum_{k\leq d}\sum_{i\leq r_k} p_k \sigma^2_{i,k} \ket{w_i^k}\bra{w_i^k},\end{equation}
\noindent therefore, $\rho_A$, $\rho_B$ are states in $\mathcal{H}_A$, $\mathcal{H}_B$ respectively. By induction, this is true for multipartite states.

\end{obs}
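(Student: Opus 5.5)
The plan is to verify directly that $\rho_A$ and $\rho_B$ satisfy the three defining properties of a state: self-adjointness, positivity and unit trace. I will then extend the conclusion to $n$ parties by induction on the number of subsystems.

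\textbf{Step 1 (decomposition).} Since $\rho=\rho^{*}\geq 0$, the spectral theorem gives
\[
\rho=\sum_{k\leq d}p_k\ket{v_k}\bra{v_k},
\]
with $p_k\geq 0$, $\sum_k p_k=Tr(\rho)=1$, and $\ket{v_k}$ orthonormal. To each unit vector $\ket{v_k}$ I apply the Schmidt decomposition obtained in the proof of Proposition \ref{prop 2.5.2}:
\[
\ket{v_k}=\sum_{i\leq r_k}\sigma_{i,k}\ket{u_i^k,w_i^k},
\]
where $\sigma_{i,k}\geq 0$ and $\{\ket{u_i^k}\}_i$, $\{\ket{w_i^k}\}_i$ are orthonormal sets. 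Expanding $\ket{v_k}\bra{v_k}$ as in (\ref{2.67}) and using linearity of $\rho$ in the projectors yields the double sum displayed in the observation.

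\textbf{Step 2 (partial traces).} I take the partial trace $Tr_B$ term by term, using $Tr_B(X\otimes Y)=Tr(Y)\,X$ and $\braket{w_j^k}{w_i^k}=\delta_{ij}$. All off-diagonal terms vanish, so
\[
\rho_A=\sum_k\sum_i p_k\sigma_{i,k}^2\ket{u_i^k}\bra{u_i^k}.
\]
The computation for $\rho_B$ via $Tr_A$ is symmetric.

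\textbf{Step 3 (state properties).} This expression is a nonnegative combination of rank-one projectors, so it is self-adjoint and positive semidefinite. Its trace is
\[
\sum_k p_k\sum_i\sigma_{i,k}^2=\sum_k p_k\,\|v_k\|^2=\sum_k p_k=1,
\]
since the $\ket{v_k}$ are normalized. Hence $\rho_A$ and $\rho_B$ are states.

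\textbf{Step 4 (multipartite induction).} For $\mathcal{H}=\mathcal{H}_{A_1}\otimes\dots\otimes\mathcal{H}_{A_n}$, I group the factors as $\mathcal{H}_{A_1}\otimes(\mathcal{H}_{A_2}\otimes\dots\otimes\mathcal{H}_{A_n})$. The bipartite case then shows that both $Tr_{A_1}(\rho)$ and $Tr_{A_2\cdots A_n}(\rho)$ are states. Any reduction onto a subset of parties is an iterated partial trace, because partial traces over different factors commute and compose. Each step of that iteration is a bipartite reduction of a state, and so it yields a state. Induction on the number of traced-out factors therefore finishes the argument.

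The only delicate point is the bookkeeping in Step 2, namely checking that the cross terms $i\neq j$ really vanish under the partial trace. This relies on the orthonormality of the Schmidt vectors on the traced-out side, which holds because they are columns of the unitary $V$ in Proposition \ref{prop 2.5.2}. The rest of the argument is routine.
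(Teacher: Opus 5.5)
Your proposal is correct and follows the same route as the paper. Both arguments take the spectral decomposition of $\rho$, Schmidt-decompose each eigenvector $\ket{v_k}$, take a termwise partial trace in which the cross terms $i\neq j$ vanish by orthonormality of the Schmidt vectors, and finish with an induction over tensor factors for the multipartite case; the only difference is that you spell out what the paper leaves implicit, namely the trace identity $\sum_k p_k\sum_i\sigma_{i,k}^2=\sum_k p_k=1$ and how iterated bipartite reductions give the multipartite statement.
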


\newpage

\begin{obs}\label{p}
   Let's call $\gamma(\rho):=Tr(\rho^2)$ \textbf{the purity of \boldsymbol{\rho}}. If $\rho$ is pure then $\gamma(\rho)=Tr(\rho)=1$. On the other hand, if:
   \begin{equation}\gamma(\rho)=Tr\left(\sum_{k\leq d} p_k \ket{v_k}\bra{v_k}\right)^2=\sum_{k\leq d} p_k^2\stackrel{!}{=}1=\sum_{k\leq d} p_k,\end{equation}
\noindent   then exists a unique non-zero $p_j$ such that $p_j=1$, therefore, $\rho$ is pure.
\end{obs}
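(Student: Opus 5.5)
The plan is to prove both directions of the claim $\rho$ pure $\iff \gamma(\rho)=1$, always assuming $\rho$ is a state, so that $\rho=\rho^{*}\geq 0$ and $Tr(\rho)=1$. Both directions will go through the characterization of Proposition \ref{pp} together with the spectral theorem.

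For the direct implication, suppose $\rho$ is pure. Proposition \ref{pp} gives $\rho^2=\rho$. Taking traces yields $\gamma(\rho)=Tr(\rho^2)=Tr(\rho)=1$, so nothing more is needed.

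For the converse, I use the spectral theorem on the self-adjoint operator $\rho$. Write $\rho=\sum_{k\leq d}p_k\ket{v_k}\bra{v_k}$, where $\{\ket{v_k}\}$ is an orthonormal eigenbasis. Positivity and unit trace give $p_k\geq 0$ and $\sum_k p_k=1$, hence $0\leq p_k\leq 1$. By orthonormality,
\begin{equation}
\rho^2=\sum_{k\leq d}p_k^2\ket{v_k}\bra{v_k}, \qquad \gamma(\rho)=\sum_{k\leq d}p_k^2 .
\end{equation}
The key step is the elementary inequality $p_k^2\leq p_k$ on $[0,1]$, with equality if and only if $p_k\in\{0,1\}$. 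The hypothesis $\gamma(\rho)=1=\sum_k p_k$ then becomes
\begin{equation}
\sum_{k\leq d}\left(p_k-p_k^2\right)=0 .
\end{equation}
This is a sum of nonnegative terms, so each term vanishes and every $p_k\in\{0,1\}$. Since the $p_k$ sum to $1$, exactly one index $j$ has $p_j=1$ and all the others vanish. Therefore $\rho=\ket{v_j}\bra{v_j}$ with $\ket{v_j}$ a unit vector. This operator satisfies $\rho^2=\rho^{*}=\rho$, so Proposition \ref{pp} shows that $\rho$ is pure.

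The only delicate point is computing $\gamma(\rho)$ correctly. The square must be applied to $\rho$ before taking the trace, and it is orthonormality of the eigenvectors that turns $Tr(\rho^2)$ into $\sum_k p_k^2$. After that, the argument reduces to the inequality $p_k^2\leq p_k$ and the normalization $\sum_k p_k=1$.
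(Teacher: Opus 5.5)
Your argument is correct and follows the same route as the paper. Proposition \ref{pp} gives $\rho^2=\rho$, hence $\gamma(\rho)=Tr(\rho)=1$; conversely, the spectral decomposition turns $\gamma(\rho)=1=\sum_k p_k$ into a single eigenvalue equal to $1$, so $\rho=\ket{v_j}\bra{v_j}$ is pure, and your remark that $\sum_k (p_k-p_k^2)=0$ is a sum of nonnegative terms supplies the justification the paper leaves implicit when it asserts that a unique nonzero $p_j$ equals $1$.
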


\begin{obs}
    
If $\rho$ is pure, from (\ref{2.67}) it follows that 
$\gamma(\rho_A)=\gamma(\rho_B)$, and by observation \ref{p}, $\rho$ is separable if and only if these purities are 1.
\end{obs}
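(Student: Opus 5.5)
The observation has two claims. First, for a pure $\rho$, $\gamma(\rho_A)=\gamma(\rho_B)$. Second, $\rho$ is separable if and only if these purities equal $1$. I would take them in that order, working from the Schmidt form (\ref{2.67}) obtained in the proof of proposition \ref{prop 2.5.2}.

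\textbf{Equality of purities.} Write $\rho=\ket{v}\bra{v}$ with $\ket{v}=\sum_{k\leq r}\sigma_k\ket{\tilde{u_k},\tilde{w_k}}$, where $\{\ket{\tilde{u_k}}\}$ and $\{\ket{\tilde{w_k}}\}$ are orthonormal sets. Taking partial traces in (\ref{2.67}), orthonormality removes the off-diagonal terms and gives
\begin{equation*}
\rho_A=\sum_{k\leq r}\sigma_k^2\ket{\tilde{u_k}}\bra{\tilde{u_k}},\qquad \rho_B=\sum_{k\leq r}\sigma_k^2\ket{\tilde{w_k}}\bra{\tilde{w_k}}.
\end{equation*}
Squaring each of these and taking the trace, orthonormality again leaves only diagonal contributions, so $\gamma(\rho_A)=\sum_{k}\sigma_k^4=\gamma(\rho_B)$. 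This settles the first claim.

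\textbf{Separability criterion.} By observation \ref{p}, $\gamma(\rho_A)=1$ holds if and only if $\rho_A$ is pure, that is, $\rho_A^2=\rho_A$ by proposition \ref{pp}. Here I need that $\rho_A$ is a state, which the preceding observation on reductions provides. Proposition \ref{prop 2.5.2} then says $\rho$ is separable if and only if $\rho_A^2=\rho_A$. Chaining the two equivalences gives: $\rho$ is separable if and only if $\gamma(\rho_A)=1$, and by the first part this is the same as $\gamma(\rho_B)=1$.

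\textbf{Expected obstacle.} There is essentially none. The only point needing care is that observation \ref{p} assumes its argument is a state with a spectral decomposition, and this is exactly what the reduction observation supplies. Alternatively, one can bypass observation \ref{p}: since $\sum_k\sigma_k^2=1$, the equality $\sum_k\sigma_k^4=1$ forces a single nonzero $\sigma_k$, equal to $1$. Then $\ket{v}$ is a product vector, which gives separability directly.
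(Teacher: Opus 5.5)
Your proposal is correct and follows the same route as the paper. You read off $\gamma(\rho_A)=\gamma(\rho_B)=\sum_k\sigma_k^4$ from the Schmidt form (\ref{2.67}), then chain observation \ref{p} with proposition \ref{prop 2.5.2} to get that $\rho$ is separable iff these purities equal $1$; this simply spells out the steps the paper's one-line observation leaves implicit, namely that $\rho_A$ is a state and the appeal to proposition \ref{prop 2.5.2}.
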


\begin{prop}\label{ppp}
    Let $\mathcal{H}=\mathcal{H}_{A_1}\otimes\dots \otimes \mathcal{H}_{A_n}$ be a multipartite system and $\rho \in L(\mathcal{H})$ be a pure state. Then, $\rho$ is separable if and only if $\gamma(\rho_{A_1})=\gamma(\rho_{A_2})=\dots =\gamma(\rho_{A_{n-1}})=1$
\end{prop}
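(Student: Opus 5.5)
Throughout, $\rho_{A_k}$ denotes the reduction of $\rho$ to the single party $A_k$, i.e.\ $\rho$ with all other parties traced out. I would prove the statement by induction on the number of parties $n$, using Proposition \ref{prop 2.5.2} and the last two observations (purity characterizes purity, and for pure bipartite states $\gamma(\rho_A)=\gamma(\rho_B)$) as the base case $n=2$. For $n=2$ the condition is just $\gamma(\rho_{A_1})=1$. By Observation \ref{p} this is equivalent to $\rho_{A_1}^2=\rho_{A_1}$, and Proposition \ref{prop 2.5.2} then says exactly that $\rho$ is separable.

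For the direction ``separable $\Rightarrow$ purities equal to one'', Observation \ref{h} already does the work for arbitrary $n$. A separable pure state is a product $\ket{v_{A_1}}\bra{v_{A_1}}\otimes\dots\otimes\ket{v_{A_n}}\bra{v_{A_n}}$ of pure states. Taking partial traces, each reduction $\rho_{A_i}$ is pure, so $\gamma(\rho_{A_i})=1$ for every $i$, in particular for $i\leq n-1$.

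For the converse, group the system as a bipartition $\mathcal{H}_{A_1}\otimes\mathcal{H}_{B}$ with $\mathcal{H}_B:=\mathcal{H}_{A_2}\otimes\dots\otimes\mathcal{H}_{A_n}$. Since $\gamma(\rho_{A_1})=1$, the base case gives $\rho=\ket{u_1}\bra{u_1}\otimes\sigma$, where $\sigma=Tr_{A_1}(\rho)$ is a pure state on $\mathcal{H}_B$; it is pure because its Schmidt rank is one, or equivalently by Observation \ref{h}. Next I would check that the reductions of $\sigma$ to $A_2,\dots,A_{n-1}$ coincide with those of $\rho$. This holds because the partial trace of a product factorizes and $Tr(\ket{u_1}\bra{u_1})=1$, so $\sigma_{A_k}=\rho_{A_k}$ and hence $\gamma(\sigma_{A_k})=1$ for $2\leq k\leq n-1$. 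The induction hypothesis applied to the $(n-1)$-partite pure state $\sigma$ then gives that $\sigma$ is a product of pure states, and therefore so is $\rho$. Such a product is trivially separable, being a one-term convex sum.

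The main obstacle I expect is the bookkeeping around the grouped bipartition. I must make sure the base-case output really yields a pure $\sigma$ on the grouped space. I must also verify the partial-trace identity $Tr_{A_{j\neq k}}(\ket{u_1}\bra{u_1}\otimes\sigma)=Tr_{\text{rest}}(\sigma)$ carefully, using linearity on product operators. Both are routine, but they are where the indices must line up. A useful remark to include is that $\gamma(\rho_{A_n})=1$ need not be assumed: it follows automatically, since after $n-2$ steps the last bipartite factor $A_{n-1}|A_n$ is pure with equal reduced purities.
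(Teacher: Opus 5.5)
Your proposal is correct and follows essentially the same route as the paper. Both use induction on $n$ with the bipartite case (Proposition \ref{prop 2.5.2} plus the purity observations) as the base, handle the forward direction with Observation \ref{h}, and for the converse split $\mathcal{H}_{A_1}\otimes(\mathcal{H}_{A_2}\otimes\dots\otimes\mathcal{H}_{A_n})$ to write $\rho=\rho_{A_1}\otimes\rho'$, then apply the induction hypothesis to $\rho'$, whose reductions to $A_2,\dots,A_{n-1}$ coincide with those of $\rho$. Your version differs only in making the partial-trace bookkeeping and the purity of the second factor explicit, which the paper leaves implicit.
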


\begin{proof} By observation \ref{h}, we know that if $\rho$ is a pure separable state, then its reductions are pure, and therefore, their purities are $1$. On the other hand, by induction on the components of the system, we take the base case for a bipartite system (observation 2.5.5). Assuming the result holds for a system of $n-1$ parts, for a system of $n$ parts, we can treat it as a bipartite system with:
\begin{equation}\mathcal{H}=\mathcal{H}_{A_1}\otimes (\mathcal{H}_{A_2}\otimes \dots\otimes \mathcal{H}_{A_n}).\end{equation}
 \noindent Since $\gamma(\rho_{A_1})=1$, then $\rho=\rho_{A_1}\otimes \rho^{\prime}$, but $\gamma(\rho_{A_j})=\gamma(\rho^{\prime}_{A_j})$ for $j\neq 1$, thus, by inductive hypothesis $\rho^{\prime}$ is separable and therefore $\rho$ as well.

\end{proof}

\begin{obs}
\textbf{The purity of the reductions is a sufficient condition for the separability of mixed states}. Given the spectral decomposition $\rho = p_1 \ket{v_1}\bra{v_1} + \dots + p_d \ket{v_d}\bra{v_d}$,
if the reductions of \(\rho\) are pure, then analyzing the rank as in the proof of proposition \ref{pp}, it follows that the reductions of \(\ket{v_i}\bra{v_i}\) are also pure. Since this is a pure state, by proposition \ref{ppp}, this is equivalent to \(\ket{v_i}\bra{v_i}\) being separable, and therefore \(\rho\) is separable. However, \textbf{this is not a necessary condition}, as, for example, the state:
\begin{equation}\rho:=\dfrac{1}{2}\ket{00}\bra{00}+\dfrac{1}{2}\ket{11}\bra{11}=\dfrac{1}{2}\ket{0}\bra{0}\otimes \ket{0}\bra{0}+\dfrac{1}{2}\ket{1}\bra{1}\otimes \ket{1}\bra{1},\end{equation}
\noindent is a mixed separable state, with no pure reduction, since \(\rho_A = \rho_B = \frac{I}{2}\).
\end{obs}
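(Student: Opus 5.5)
The plan is to reduce the mixed case to the pure case through the spectral decomposition, and then appeal to Proposition \ref{ppp}. Write $\rho=\sum_{k} p_k \ket{v_k}\bra{v_k}$ with orthonormal $\ket{v_k}$. Keep only the terms with $p_k>0$, so the $p_k$ still sum to $1$. Assume every reduction $\rho_{A_i}$ is pure; by Proposition \ref{pp} we can write $\rho_{A_i}=\ket{a_i}\bra{a_i}$ with $\vert\vert a_i\vert\vert=1$.

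\textbf{Step 1: pure reductions of the components.} Let $\sigma^{(i)}_k$ denote the reduction of $\ket{v_k}\bra{v_k}$ to $\mathcal{H}_{A_i}$. By the observation on reductions, each $\sigma^{(i)}_k$ is a state. Since the partial trace is linear, $\rho_{A_i}=\sum_k p_k\,\sigma^{(i)}_k$. I then repeat the kernel argument from the proof of Proposition \ref{pp}. Take any $\ket{w}\perp\ket{a_i}$. Then
\begin{equation*}
0=\bra{w}\rho_{A_i}\ket{w}=\sum_k p_k\bra{w}\sigma^{(i)}_k\ket{w}.
\end{equation*}
Every summand is nonnegative and every $p_k>0$, so $\bra{w}\sigma^{(i)}_k\ket{w}=0$. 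Because $\sigma^{(i)}_k\geq 0$, this forces $\sigma^{(i)}_k\ket{w}=0$. Hence $\sigma^{(i)}_k$ is supported on the span of $\ket{a_i}$, and having trace $1$ gives $\sigma^{(i)}_k=\ket{a_i}\bra{a_i}$. So every component has the same pure reductions as $\rho$.

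\textbf{Step 2: separability of the components.} Each $\ket{v_k}\bra{v_k}$ is now a pure state whose reductions all have purity $1$. Proposition \ref{ppp} therefore gives that it is separable. By Observation \ref{h} it is then a product, and matching the factors with the reductions found in Step 1 gives $\ket{v_k}=e^{i\theta_k}\ket{a_1,\dots,a_n}$. Consequently $\rho$ is a convex sum of product states, hence separable.

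In fact more is true: the $\ket{v_k}$ are orthonormal yet all proportional to the same vector, so only one term survives. Thus $\rho=\ket{a_1,\dots,a_n}\bra{a_1,\dots,a_n}$ is actually a pure product state. This strengthening is worth recording, since it explains why the condition is so restrictive.

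\textbf{Non-necessity.} It suffices to check the displayed example. It is separable by its very form, as a convex combination of $\ket{0}\bra{0}\otimes\ket{0}\bra{0}$ and $\ket{1}\bra{1}\otimes\ket{1}\bra{1}$. Tracing out either factor gives $\tfrac12\ket{0}\bra{0}+\tfrac12\ket{1}\bra{1}=I/2$, whose purity is $\gamma(I/2)=1/2\neq 1$. By Observation \ref{p} neither reduction is pure.

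\textbf{Main obstacle.} The only delicate point is Step 1: passing from purity of the mixture's reduction to purity of each component's reduction. This is the extremality of pure states. The kernel and positivity argument above handles it directly, provided the zero-weight terms were discarded at the start.
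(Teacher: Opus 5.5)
Your proposal is correct and follows the paper's own argument. You pass from pure reductions of $\rho$ to pure reductions of each spectral component using the kernel and positivity argument of Proposition~\ref{pp}, then apply Proposition~\ref{ppp} to each component, and you check the same counterexample. Your two additions are correct refinements rather than a different route: you discard the zero-weight eigenvectors, which the paper glosses over and without which the claim about every $\ket{v_i}\bra{v_i}$ can fail; and you observe that the orthonormal components must all coincide, so that $\rho$ is in fact a pure product state.
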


\begin{obs}\label{obs 2.5.7}
    In observation \ref{p}, it was shown that the purity for any state is given by the sum of the squares of its spectrum. Since the spectrum is positive and sums to $1$, the purity must lie between $0$ and $1$. In the same observation, it was shown that $1$ is the maximum, which is achieved if and only if $\rho$ is pure. On the other hand, the global minimum can be obtained using Lagrange multipliers:
\begin{equation}\mathcal{L}(p_1,\dots,p_d,\lambda)=\sum_{k\leq d} p_k^2-\lambda \left(\sum_{k\leq d}p_k-1\right).\end{equation}
\noindent Optimizing, the following conditions are obtained:
\begin{equation}p_k=\dfrac{\lambda}{2}\,\,\,\,\,\, ; \,\,\,\,\, \sum_{k\leq d}p_k=1,\end{equation}
\noindent then $p_k=1/d$, Therefore, the minimum is achieved at \(\rho = \frac{I_{\mathcal{H}}}{d}\) with the value \(\gamma(\rho) = \frac{1}{d}\). This state is known as the \textbf{maximally mixed state}.

\end{obs}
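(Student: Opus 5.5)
The plan is to reduce everything to a statement about the spectrum of $\rho$, as in observation \ref{p}. By the spectral theorem, $\rho=\sum_{k\leq d}p_k\ket{v_k}\bra{v_k}$ with $\{\ket{v_k}\}$ orthonormal. Since $\rho\geq 0$ and $Tr(\rho)=1$, the vector $p=(p_k)_{k\leq d}$ lies in the probability simplex $\Delta_d=\{p\in\mathbb{R}^d: p_k\geq 0,\ \sum_k p_k=1\}$. Moreover $\gamma(\rho)=Tr(\rho^2)=\sum_{k\leq d}p_k^2$. So the whole statement becomes the extremal problem for $F(p)=\sum_k p_k^2$ on the compact set $\Delta_d$. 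By continuity and compactness, $F$ attains both a maximum and a minimum there.

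\textbf{Upper bound.} I would use that $0\leq p_k\leq 1$ gives $p_k^2\leq p_k$, hence $F(p)\leq\sum_k p_k=1$. Equality holds iff every $p_k\in\{0,1\}$, which forces exactly one $p_j=1$. By observation \ref{p} this is exactly the case where $\rho$ is pure. This also gives $\gamma(\rho)>0$ trivially.

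\textbf{Lower bound.} I would not rely only on the Lagrange multiplier computation in the statement. That computation finds the critical points of $\mathcal{L}$ in the interior, but by itself it does not show that the critical point is a minimum rather than a maximum or saddle. It also does not rule out the boundary faces of $\Delta_d$, where some $p_k=0$. I would give the direct argument by Cauchy--Schwarz:
\begin{equation}1=\left(\sum_{k\leq d}1\cdot p_k\right)^2\leq d\sum_{k\leq d}p_k^2=d\,\gamma(\rho),\end{equation}
so $\gamma(\rho)\geq 1/d$. Equality holds iff $p$ is parallel to $(1,\dots,1)$, i.e. $p_k=1/d$ for all $k$. Then $\rho=\frac{1}{d}\sum_k\ket{v_k}\bra{v_k}=\frac{1}{d}I_{\mathcal{H}}$, since the $\ket{v_k}$ form an orthonormal basis. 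Conversely, $\gamma(I_{\mathcal{H}}/d)=d\cdot\frac{1}{d^2}=\frac{1}{d}$.

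Alternatively, to stay close to the Lagrange approach, I would write $F(p)=\sum_k(p_k-\tfrac1d)^2+\tfrac1d$, using $\sum_k p_k=1$. This shows that the unique interior critical point is a strict global minimum, which settles the boundary issue at once.

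\textbf{Main obstacle.} The only delicate point is justifying that the stationary point from the Lagrangian is the global minimum over the closed simplex. The Cauchy--Schwarz inequality, or the completed-square identity above, handles this cleanly. The remaining steps are routine once the problem is reduced to the spectrum.
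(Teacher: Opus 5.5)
Your proof is correct, and for the lower bound it takes a different route from the paper. The paper sets up the Lagrangian $\mathcal{L}$, solves the stationarity conditions $p_k=\lambda/2$, $\sum_k p_k=1$, and declares the resulting point $p_k=1/d$ to be the global minimum. It never checks that this stationary point is a minimum, and it says nothing about the faces of the simplex where some $p_k=0$.

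You replace this with $1=(\sum_k p_k)^2\leq d\sum_k p_k^2$, or equivalently the identity $\sum_k p_k^2=\sum_k(p_k-\tfrac1d)^2+\tfrac1d$. Either gives $\gamma(\rho)\geq 1/d$ on the whole simplex at once. Through the equality case it also gives uniqueness of the minimizer $\rho=I_{\mathcal{H}}/d$, which the paper only asserts. For the upper bound, you make explicit the step $p_k^2\leq p_k$ that observation \ref{p} leaves implicit when it passes from $\sum_k p_k^2=1=\sum_k p_k$ to a single nonzero eigenvalue.

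The Lagrange approach has the merit of finding the candidate $p_k=1/d$ without guessing it. It becomes rigorous with one extra remark: $\sum_k p_k^2$ is strictly convex, so its unique critical point on the affine hyperplane $\sum_k p_k=1$ is the global minimum there. Since that point lies in the simplex, it is also the minimum over the simplex, so the boundary faces are not really a separate issue. Your completed-square identity is exactly that remark made concrete, which makes your version the self-contained one.
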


\begin{obs}\label{obs 2.5.8}
    
Given that in pure states it is possible to characterize separability based on the purity of the reductions, this can be used as a measure of how entangled the state is. If the purities of the reductions are minimal, we will say that the state is \textbf{maximally entangled}. 

\end{obs}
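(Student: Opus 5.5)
The statement is essentially a definition, so what needs justifying is that it is meaningful. For a pure state $\rho=\ket{v}\bra{v}$ on $\mathcal{H}_A\otimes\mathcal{H}_B$, I would show that the purities of the reductions share a common minimum value and identify the states that attain it. The claim to prove is
\begin{equation}
\gamma(\rho_A)=\gamma(\rho_B)\geq \dfrac{1}{\min(d_A,d_B)},
\end{equation}
with equality if and only if the Schmidt decomposition of $\ket{v}$ has $r=\min(d_A,d_B)$ terms, all with $\sigma_k=1/\sqrt{r}$. Then ``maximally entangled'' means exactly this equality case.

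First I would apply the Schmidt decomposition from the proof of Proposition \ref{prop 2.5.2}, $\ket{v}=\sum_{k\leq r}\sigma_k\ket{\tilde{u}_k,\tilde{w}_k}$. Equation (\ref{2.67}) then gives $\rho_A=\sum_k\sigma_k^2\ket{\tilde{u}_k}\bra{\tilde{u}_k}$ and $\rho_B=\sum_k\sigma_k^2\ket{\tilde{w}_k}\bra{\tilde{w}_k}$. Hence both reductions have the same nonzero spectrum $(\sigma_k^2)_{k\leq r}$, and $\gamma(\rho_A)=\gamma(\rho_B)=\sum_{k\leq r}\sigma_k^4$. This recovers the equality already noted in the observation after Observation \ref{p}, so the minimisation is a single problem rather than two.

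Next I would note that the rank $r$ of the coefficient matrix $(v_{ij})$ satisfies $r\leq \min(d_A,d_B)=:m$. The numbers $p_k=\sigma_k^2$ form a probability vector of length at most $m$, since $\sum_k\sigma_k^2=\|v\|^2=1$. I would then repeat the argument of Observation \ref{obs 2.5.7} with $d$ replaced by $m$, padding with zeros if $r<m$. The Lagrange computation, or directly Cauchy--Schwarz $1=(\sum_k p_k)^2\leq m\sum_k p_k^2$, gives $\sum_k p_k^2\geq 1/m$, with equality if and only if all $p_k=1/m$, which forces $r=m$.

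Finally I would show the bound is attained. Taking orthonormal bases and $\ket{v}=\frac{1}{\sqrt{m}}\sum_{k\leq m}\ket{u_k,w_k}$ produces the minimum. For $d_A=d_B=2$ this is the Bell state, which confirms the definition is consistent with the earlier example.

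The main subtle point, more a caveat than a hard obstacle, is that the minimum over reductions is $1/m$ rather than $1/d_A$. When $d_A\neq d_B$ the larger factor can never carry a maximally mixed reduction, because its reduction has rank at most $m$. I would make this explicit, since it is exactly the kind of dimension restriction needed later for maximal entanglement of harmonic frames.
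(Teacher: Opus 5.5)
Your argument is correct, and it goes beyond what the paper does. The paper gives no proof for Observation~\ref{obs 2.5.8}. It treats the statement as a definition in which ``minimal'' tacitly refers to the global bound of Observation~\ref{obs 2.5.7}: purity is at least $1/\dim$, attained only by the maximally mixed state. This is combined with the preceding remark that $\gamma(\rho_A)=\gamma(\rho_B)$ for pure $\rho$. You instead use the Schmidt rank bound $r\leq m=\min(d_A,d_B)$ together with Cauchy--Schwarz. This gives the sharp, attained bound $\gamma(\rho_A)=\gamma(\rho_B)\geq 1/m$, with equality exactly for Schmidt rank $m$ and equal coefficients.

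The paper's appeal is shorter, but it leaves unclear which minimum is meant. For the larger factor, the maximally mixed value $1/\max(d_A,d_B)$ is never reached when $d_A\neq d_B$, because that reduction has rank at most $m$. Your version makes the definition well posed and identifies the extremal states.

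Your caveat also matters downstream. In Section~\ref{bipartite} the paper demands that the reduced purity equal $1/d$ with $d=d_1d_2$. This lies strictly below your bound $1/\min(d_1,d_2)$ whenever $d>1$, so it is never attainable. With the correct target, the all-cosines-equal-to-$-1$ value $\frac{2(d_1+d_2-1)-d}{d}$ equals $1/2$ for $d_1=d_2=2$, so that condition does yield maximally entangled states there. For $d_1=d_2=3$ the same value is $1/9<1/3$, so the condition cannot be satisfied at all. This is the reverse of the paper's conclusion that only $d_1=d_2=1$ and $d_1=d_2=3$ are possible.
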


\newpage
\section{Measures}\label{measures}

In this section we will introduce another definition of physical interest, which can be interpreted as a notion of generalized measurement in a system:

\begin{defi}
We say that the sequence of linear operators $(E_j)_{j\in J}$ is a Positive Operator-Valued Measure \textbf{(POVM)} if it satisfies $E_j=E_j^*\geq 0$ and:
\begin{equation}\sum_{j\in J} E_j=I_{\mathcal{H}}.\end{equation}
\noindent Additionally, given a state \(\rho\), then we denote:
\begin{equation}p_{j}(\rho):=\langle\rho,E_j\rangle,\end{equation}
\noindent where $\langle\rho,E_{j}\rangle=Tr(\rho^{*} \,E_j)$ is the Frobenius inner product in $L(\mathcal{H})$

\end{defi}
\begin{ex}
    Let $(\ket{f_j})_{j \in J} \subset \mathcal{H}$ be a \textbf{normalized tight frame} by observation \ref{o2.1.1}. Then $E_j:=\ket{f_j}\bra{f_j}$ form a \textbf{rank 1 POVM} with $p_j(\rho)=\bra{f_j}\rho\ket{f_j}$.
\end{ex}

\begin{ex}
An important example of a POVM is a Projective Valued Measure \textbf{(PVM)} which we denote as $\Pi_{j}$, which satisfies the additional condition:
\begin{equation}
\Pi_j \Pi_k = \delta_{j,k} \Pi_j.
\end{equation}


\end{ex}

\begin{obs}
    By spectral decomposition, the elements of a POVM can be written as: 
\begin{equation}
E_{j} = \sum_{k \leq n_j} p_{j,k} \ket{v_{j,k}}\bra{v_{j,k}}.
\end{equation}
\noindent Defining \(\ket{f_{j,k}} := \sqrt{p_{j,k}} \,\ket{v_{j,k}}\), we observe that for:
\begin{equation} J_c = \bigcup_{j \in J} \{(j,k)\}_{k \leq n_j},\end{equation}
\noindent \textbf{there is an associated canonical POVM of rank 1}:
\begin{equation}
E_{j,k} = \ket{f_{j,k}}\bra{f_{j,k}},
\end{equation}
\noindent corresponding to the normalized tight frame \((\ket{f_{j}})_{j \in J_c}\).
\end{obs}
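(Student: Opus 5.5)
The observation asserts that, for $J_c=\bigcup_{j\in J}\{(j,k)\}_{k\leq n_j}$, the rank-one operators $E_{j,k}=\ket{f_{j,k}}\bra{f_{j,k}}$ with $\ket{f_{j,k}}=\sqrt{p_{j,k}}\ket{v_{j,k}}$ form a POVM, and that $(\ket{f_{j,k}})_{(j,k)\in J_c}$ is a normalized tight frame. The plan is to verify the POVM axioms directly and then translate the completeness relation into the frame statement through the frame operator (observation \ref{o2.1.1}) and proposition \ref{3}.

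First, each $E_j$ is self-adjoint and positive semidefinite. By the spectral theorem, $E_j$ has an orthonormal eigenbasis with eigenvalues $p_{j,k}\geq 0$. We restrict $k$ to the $n_j$ nonzero eigenvalues, or simply allow zero vectors; this is harmless because zero vectors do not affect frame sums. Then $\sqrt{p_{j,k}}$ is a well-defined nonnegative real, and
\begin{equation}
\ket{f_{j,k}}\bra{f_{j,k}}=p_{j,k}\ket{v_{j,k}}\bra{v_{j,k}}.
\end{equation}
Each $E_{j,k}$ is therefore self-adjoint, positive semidefinite and of rank at most one. Moreover,
\begin{equation}
\sum_{(j,k)\in J_c}E_{j,k}=\sum_{j\in J}\sum_{k\leq n_j}p_{j,k}\ket{v_{j,k}}\bra{v_{j,k}}=\sum_{j\in J}E_j=I_{\mathcal{H}},
\end{equation}
so $(E_{j,k})_{(j,k)\in J_c}$ is a POVM.

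Second, by observation \ref{o2.1.1}, the left-hand side of the last identity is exactly the frame operator $S=VV^*$ of the sequence $(\ket{f_{j,k}})_{(j,k)\in J_c}$. Hence $S=I_{\mathcal{H}}$. Since $S$ is the identity, the vectors span $\mathcal{H}$, so by proposition \ref{4} the sequence is a frame. By proposition \ref{3}, $S=I_{\mathcal{H}}$ means the frame is tight with constant $A=1$, that is, normalized. One should also note that the statement's indexing ``$(\ket{f_j})_{j\in J_c}$'' refers to the doubly indexed family $\ket{f_{j,k}}$.

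No step presents a real obstacle. The only point needing care is the bookkeeping of zero eigenvalues and of possibly repeated eigenvectors across different $j$. Both are harmless: the frame is a sequence rather than a set, so repeated vectors are allowed, and zero vectors contribute nothing to $S$.
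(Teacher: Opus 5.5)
Your proof is correct and follows the same route the paper takes implicitly. The paper decomposes each $E_j$ spectrally, sums the rank-one pieces to recover $\sum_j E_j=I_{\mathcal{H}}$, and identifies that sum with the frame operator via observation \ref{o2.1.1}, so that proposition \ref{3} gives a normalized tight frame. Your remarks on discarding zero eigenvalues (so each $E_{j,k}$ is genuinely rank one) and on reading $(\ket{f_j})_{j\in J_c}$ as the doubly indexed family $\ket{f_{j,k}}$ make explicit what the paper leaves unstated.
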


POVMs can be connected to PVMs through the following result:

\begin{prop}[Naimark]\label{prop 2.6.1}
    Let \((E_j)_{j \in J}\) be a finite POVM in \(\mathcal{H}_A\). Then, there exists a Hilbert space \(\mathcal{H}_B\) such that, for every pure state \(\rho_B\) on \(\mathcal{H}_B\), there exists a PVM \((\pi_j)_{j \in J}\) in \(\mathcal{H}_B\) and a unitary operator \(U\in L(\mathcal{H}_A \otimes \mathcal{H}_B)\), so that the operators \(\Pi_{j} = U^*(I_{\mathcal{H}_A} \otimes \pi_j)U\) form a PVM in \(\mathcal{H}_A \otimes \mathcal{H}_B\) that satisfies:

\begin{equation}\langle \rho_{A}\otimes \rho_{B} ,\Pi_j\rangle=\langle\rho_A,E_j \rangle,\end{equation}

\noindent for any $\rho_{A}\in L(\mathcal{H}_A)$.

\end{prop}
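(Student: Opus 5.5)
The plan is to reduce the statement to the rank-one case and then build the Naimark dilation explicitly from a normalized tight frame. By the last observation of the previous section, the POVM $(E_j)_{j\in J}$ refines into a canonical rank-one POVM $E_{j,k}=\ket{f_{j,k}}\bra{f_{j,k}}$, indexed by $J_c$, where $(\ket{f_{j,k}})_{(j,k)\in J_c}$ is a normalized tight frame in $\mathcal{H}_A$. For this frame the synthesis operator satisfies $VV^*=I_{\mathcal{H}_A}$. Equivalently, the analysis operator $V^*:\mathcal{H}_A\to l_2(J_c)$ is an isometry with $\bra{(j,k)}V^*\ket{\psi}=\braket{f_{j,k}}{\psi}$.

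First I choose the ancilla space. I take $\mathcal{H}_B:=l_2(J_c)$, or any space of dimension at least $\vert J_c\vert$, so that $\mathcal{H}_A\otimes\mathcal{H}_B$ contains a copy of $l_2(J_c)$. I fix a pure state $\rho_B=\ket{b}\bra{b}$. The map $W:\ket{\psi}\otimes\ket{b}\mapsto \ket{a_0}\otimes V^*\ket{\psi}$, with $\ket{a_0}$ a fixed unit vector of $\mathcal{H}_A$, is an isometry from the subspace $\mathcal{H}_A\otimes\ket{b}$ into $\mathcal{H}_A\otimes\mathcal{H}_B$, because $VV^*=I$. I extend $W$ to a unitary $U$ on all of $\mathcal{H}_A\otimes\mathcal{H}_B$ by completing orthonormal bases. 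This works because an isometry between subspaces of equal dimension always extends, and the orthogonal complements also have equal dimension.

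Next I define the projective measurement. The natural PVM is $\pi_{j}:=\sum_{k\le n_j}\ket{(j,k)}\bra{(j,k)}$ on $\mathcal{H}_B$. These are mutually orthogonal projections, and they sum to the identity once any leftover basis vectors of $\mathcal{H}_B$ are assigned to one fixed outcome. Then $\Pi_j:=U^*(I_{\mathcal{H}_A}\otimes\pi_j)U$ is a PVM, since conjugation by a unitary preserves orthogonality and completeness. The key computation, for a pure $\rho_A=\ket{\psi}\bra{\psi}$, is
\begin{equation*}
\langle\rho_A\otimes\rho_B,\Pi_j\rangle=\Vert (I\otimes\pi_j)U(\ket{\psi}\otimes\ket{b})\Vert^2=\sum_{k\le n_j}\vert\braket{f_{j,k}}{\psi}\vert^2=\bra{\psi}E_j\ket{\psi}.
\end{equation*}
Both sides are linear in $\rho_A$, so the identity extends to all $\rho_A$ by the spectral decomposition.

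The main subtlety, and the step I expect to be the main obstacle, is the quantifier structure and the size of $\mathcal{H}_B$. The PVM $\pi_j$ is fixed, while $U$ depends on $\ket{b}$. Changing $\ket{b}$ only requires composing $U$ with $I\otimes R$ for a unitary $R$ sending the new pure state to the old one, which is why "for every pure state $\rho_B$" is harmless. To be safe I will choose $\dim\mathcal{H}_B\ge\vert J_c\vert$, where $\vert J_c\vert\le d\,\vert J\vert$, and verify carefully that the unitary extension exists even when $\dim\mathcal{H}_A\otimes\mathcal{H}_B$ strictly exceeds $\vert J_c\vert$.
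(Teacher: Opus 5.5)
Your proof is correct, but it takes a different route from the paper's. The paper never refines the POVM into rank-one pieces. It sets $\mathcal{H}_B=l^2(J)$ and uses the isometry $A=\sum_{j\in J}\sqrt{E_j}\otimes\ket{j}$, so that $A^*A=\sum_j E_j=I_{\mathcal{H}_A}$ and $A^*(I_{\mathcal{H}_A}\otimes\ket{j}\bra{j})A=E_j$. It then extends $U(I_{\mathcal{H}_A}\otimes\ket{u})=A$ to a unitary by completing orthonormal bases, and takes rank-one projections $\pi_j=\ket{j}\bra{j}$.

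You instead pass to the canonical rank-one POVM. You use the analysis operator $V^*$ of the normalized tight frame $(\ket{f_{j,k}})_{(j,k)\in J_c}$ as the isometry, with the system register reset to $\ket{a_0}$, and then coarse-grain with $\pi_j=\sum_k\ket{(j,k)}\bra{(j,k)}$.

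The paper's version is more economical. It needs no diagonalization, and its ancilla has dimension $|J|$ rather than $|J_c|$, which can be as large as $d|J|$. Yours is the frame-theoretic form of Naimark's theorem, driven entirely by $VV^*=I_{\mathcal{H}_A}$. It costs nothing extra when the POVM is already rank one, since then $J_c=J$; this is the situation for the paper's later $C_N$-frames.

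The step you flag as the main obstacle is not one. An isometry from the $d$-dimensional subspace $\mathcal{H}_A\otimes\ket{b}$ onto a $d$-dimensional subspace always extends to a unitary, because both orthogonal complements have dimension $d(\dim\mathcal{H}_B-1)$. This is exactly the paper's argument, and the paper likewise lets $U$ depend on the pure state while $\pi_j$ stays fixed.

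Finally, the statement covers every $\rho_A\in L(\mathcal{H}_A)$, not only Hermitian ones. So it is cleaner to replace your extension ``by spectral decomposition'' with the operator identity $(I_{\mathcal{H}_A}\otimes\bra{b})U^*(I_{\mathcal{H}_A}\otimes\pi_j)U(I_{\mathcal{H}_A}\otimes\ket{b})=V\pi_jV^*=E_j$. This gives the claim for all $\rho_A$ at once by cyclicity of the trace, as in the paper.
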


\begin{proof}
   Consider \(\mathcal{H}_B = l^2(J)\), then,
 since \(E_j=E_j^{*} \geq 0\), we can define:
\begin{equation}A:=\sum_{j\in J} \sqrt{E_j}\otimes \ket{j},\end{equation}
\noindent then $A^{*}A=I_{\mathcal{H}_{A}}$, and $A^{*} (I_{\mathcal{H}}\otimes E_{j,j})A=E_j$. If \(\rho_B := \ket{u}\bra{u}\), then we can find a unitary operator \(U \in L(\mathcal{H}_A \otimes \mathcal{H}_B)\) such that:  
\begin{equation} U(I_{\mathcal{H}_{A}}\otimes \ket{u})=A,\end{equation}
\noindent for this, it suffices to define \(U\) on an orthonormal basis \(\{\ket{v_k}\}_k\) of \(\mathcal{H}_A\) as:  
\begin{equation}
U(\ket{v_a} \otimes \ket{u}) = A\ket{v_k}.
\end{equation}
Next, we take an orthonormal basis of \(\{\ket{u}\}^{\perp}\), say \(\{\ket{u_j}\}_j\), and define \(U(\ket{v_k} \otimes \ket{u_j})\) such that it forms an orthonormal basis of \(\{A\ket{v_k}\}_k^{\perp}\).  

\newpage

The above defines the PVM \(\Pi_{j} = U^*(I_{\mathcal{H}_A} \otimes \pi_j)U\) with $\pi_j=E_{j,j}$ and by the cyclicity of the trace:
$$\langle \rho_{A}\otimes \rho_{B} ,\Pi_j\rangle=\langle (I_{\mathcal{H}_A}\otimes \ket{u})\rho_{A} (I_{\mathcal{H}_A}\otimes \bra{u}),U^*(I_{\mathcal{H}_A} \otimes \pi_j)U\rangle$$
\begin{equation}=\langle \rho_A,A^{*}(I_{\mathcal{H_A}}\otimes E_{j,j})A)\rangle=\langle\rho_A,E_j \rangle.\end{equation}

\end{proof}

\begin{obs}
    If $\rho_{B}:=\ket{u}\bra{u}$, given that for every \(\rho_A \in L(\mathcal{H}_A)\), it holds that:
    \begin{equation}\langle\rho_A,(I_{\mathcal{H}_A}\otimes \bra{u})U^{*}(I_{\mathcal{H}_A}\otimes \pi_j)U(I_{\mathcal{H}_A}\otimes \ket{u})\rangle=\langle \rho_A,E_j\rangle,\end{equation}
\noindent then:
\begin{equation}E_j=(I_{\mathcal{H}_A}\otimes \bra{u})U^{*}(I_{\mathcal{H}_A}\otimes \pi_j)U(I_{\mathcal{H}_A}\otimes \ket{u}):=\bra{u}U^{*}(I_{\mathcal{H}_A}\otimes \pi_j)U\ket{u}.\end{equation}

\end{obs}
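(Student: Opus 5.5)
The plan is to turn the identity from Proposition \ref{prop 2.6.1} into the vanishing of a single Frobenius inner product, and then use that this inner product is non-degenerate on $L(\mathcal{H}_A)$.

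\textbf{Step 1 (rewrite the left-hand side).} Write $\rho_B=\ket{u}\bra{u}$ and define
\begin{equation*}
W_j:=(I_{\mathcal{H}_A}\otimes \bra{u})\,U^{*}(I_{\mathcal{H}_A}\otimes \pi_j)U\,(I_{\mathcal{H}_A}\otimes \ket{u})\in L(\mathcal{H}_A).
\end{equation*}
First observe that $\rho_A\otimes\ket{u}\bra{u}=(I_{\mathcal{H}_A}\otimes\ket{u})\,\rho_A\,(I_{\mathcal{H}_A}\otimes\bra{u})$, where $I_{\mathcal{H}_A}\otimes\ket{u}:\mathcal{H}_A\to\mathcal{H}_A\otimes\mathcal{H}_B$ is an isometry. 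Then apply the cyclic property of the trace to move the two outer factors onto $\Pi_j=U^{*}(I_{\mathcal{H}_A}\otimes\pi_j)U$. This is exactly the manipulation already carried out at the end of the proof of Proposition \ref{prop 2.6.1}, and it gives $\langle\rho_A\otimes\rho_B,\Pi_j\rangle=\langle\rho_A,W_j\rangle$ for every $\rho_A$. Combined with the conclusion of Naimark's theorem, this yields $\langle\rho_A,W_j-E_j\rangle=0$ for all $\rho_A$.

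\textbf{Step 2 (non-degeneracy).} Set $X:=W_j-E_j$. If the identity is available for every $\rho_A\in L(\mathcal{H}_A)$, as the proposition states, I simply choose $\rho_A=X$. This gives $\mathrm{Tr}(X^{*}X)=\|X\|_F^2=0$, hence $X=0$ and $E_j=W_j$, which is the claimed formula.

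\textbf{Step 3 (the possible obstacle).} The only delicate point arises if one reads the identity as holding only for states $\rho_A$. In that case I would show that states span $L(\mathcal{H}_A)$ over $\mathbb{C}$, which I expect to be the main (though mild) obstacle. Since $\rho_A^{*}=\rho_A$ for a state, the condition reads $\mathrm{Tr}(\rho_A X)=0$, and this is complex-linear in $\rho_A$.
\begin{itemize}
\item By the spectral theorem, every Hermitian $H$ is a real linear combination of rank-one projectors $\ket{v}\bra{v}$ with $\|v\|=1$, and each such projector is a pure state.
\item Every operator decomposes as $Y=H_1+iH_2$ with $H_1,H_2$ Hermitian.
\end{itemize}
Hence $\mathrm{Tr}(YX)=0$ for all $Y\in L(\mathcal{H}_A)$. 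Taking $Y=X^{*}$ then gives $\|X\|_F^2=0$ as in Step 2, so again $E_j=W_j=\bra{u}U^{*}(I_{\mathcal{H}_A}\otimes\pi_j)U\ket{u}$.
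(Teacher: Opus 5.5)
Your proposal is correct and follows the same reasoning the paper leaves implicit. The paper's observation simply invokes non-degeneracy of the Frobenius inner product: since the identity holds for every $\rho_A\in L(\mathcal{H}_A)$, taking $\rho_A=W_j-E_j$ forces $E_j=W_j$, exactly as in your Step 2. Your Step 1 (re-deriving the hypothesis from Naimark) and Step 3 (spanning $L(\mathcal{H}_A)$ by pure states) are valid but not needed under the statement as written.
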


\newpage



\chapter{Thesis results}
\section{Bipartite separability in Harmonic Frames}\label{bipartite}
In this section we will analyze the bipartite separability condition for the states associated with the Harmonic frames. Let us first note that, by observation \ref{20}, a $\mathcal{G}$-Harmonic frame of dimension
$d$ is identified with a sequence $J$ of different elements of $\mathcal{G}$ whose length is $d$. Expressing $\mathcal{G}$ as a direct product of cyclic groups, through the isomorphism stated in observation \ref{21}, the frame $\Phi_J=(\xi\vert_{J})_{\xi \in \hat{G}}$ can be written as:
\begin{equation}\Phi_J=(\Psi(a_1^{r_0},\dots,a_m^{r_{m-1}}))\vert_{J})_{\substack{0\leq r_0\leq n_0-1\\ \dots \\ 0\leq r_{m-1}\leq n_{m-1}-1}}
=(\displaystyle \bigcirc_{j=0}^{m-1}\Psi_{j}(a_j^{r_j})\vert_{J_j})_{\substack{0\leq r_0\leq n_0-1\\ \dots \\ 0\leq r_{m-1}\leq n_{m-1}-1}},\end{equation}
\noindent where $\circ$ represents the Hadamard product, and $J_j$ is the projection of $J$ onto the $j$-th factor of the direct product. Since $J_j=(a_j^{s_{i,j}})_{0\leq i\leq d-1}$, the frame can be characterized from the matrix $S=(s_{i,j})$ which satisfies the restriction that $J$ has distinct elements only if no two rows are identical. Thus, the states associated with the frame can be written as:
$$\ket{v^{S}_{(r_j)_{j}}}:=\dfrac{1}{\sqrt{d}}\bigcirc_{j=0}^{m-1}\Psi_{j}(a_j^{r_j})\vert_{J_j}=\dfrac{1}{\sqrt{d}}\bigcirc_{j=0}^{m-1}(w_j^{r_j \,s_{i,j}})_{0\leq i\leq d-1}$$
\begin{equation}=\dfrac{1}{\sqrt{d}}\left(\prod_{j=0}^{m-1} w_{j}^{r_j \, s_{i,j}}\right)_{0\leq i\leq d-1}=\dfrac{1}{\sqrt{d}}\sum_{ 0\leq i\leq d-1} \prod_{j=0}^{m-1}  w_{j}^{r_j \, s_{i,j}} \, \ket{i},\end{equation}

\newpage


\noindent whose density matrix is of the form:
\begin{equation}\rho^{S}_{(r_j)_j}:=\ket{v^{S}_{(r_j)_{j}}}\bra{v^{S}_{(r_j)_{j}}}=\dfrac{1}{d}\sum_{ 0\leq i,k\leq d-1}\prod_{j=0}^{m-1}  w_j^{r_j \,(s_{i,j}-s_{k,j})}\, \ket{i}\bra{k}.\end{equation}
\noindent If $d=d_1\,d_2$ then $\mathbb{C}^{d\times d}\cong \mathbb{C}^{d_1\times d_1}\otimes \mathbb{C}^{d_2\times d_2}$ via  the Kronecker product $\ket{i}\to \ket{i_1,i_2}$, where $i_l\in \{0,\dots,d_l-1\}$ and $i=i_1 d_2+i_2$. With this, the state is written as:
\begin{equation}\label{3.4}\rho^{S}_{(r_j)_j}=\dfrac{1}{d}\sum_{\substack{0\leq i_1,k_1\leq d_1-1 \\ 0\leq i_2,k_2\leq d_2-1}}\prod_{j=0}^{m-1}  w_j^{r_j \,(s_{(i_1d_2+i_2),j}-s_{(k_1d_2+k_2),j})}\, \ket{i_1}\bra{k_1}\otimes \ket{i_2}\bra{k_2},\end{equation}
\noindent in this way, the reduction to $\mathbb{C}^{d_1\times d_1}$ is:
\begin{equation}\label{3.5}(\rho^{S}_{(r_j)_j})_{1}=\dfrac{1}{d}\sum_{\substack{0\leq i_1,k_1\leq d_1-1 \\ 0\leq i_2\leq d_2-1}}  \prod_{j=0}^{m-1}  w_j^{r_j \,(s_{(i_1d_2+i_2),j}-s_{(k_1d_2+i_2),j})}\, \ket{i_1}\bra{k_1},\end{equation}
\noindent then, its square is:
$$(\rho^{S}_{(r_j)_j})^2_{1}=\dfrac{1}{d^2}\sum_{\substack{0\leq i_1,k_1,\tilde{i_1},\tilde{k_1}\leq d_1-1 \\ 0\leq i_2,\tilde{i_2}\leq d_2-1}}  \prod_{j=0}^{m-1}  w_j^{r_j \,((s_{(i_1d_2+i_2),j}-s_{(k_1d_2+i_2),j})+(s_{(\tilde{i_1}d_2+\tilde{i_2}),j}-s_{(\tilde{k_1}d_2+\tilde{i_2}),j}))}\, \ket{i_1}\langle k_1 \vert \tilde{i_1} \rangle\bra{\tilde{k_1}}$$
\begin{equation}=\dfrac{1}{d^2}\sum_{\substack{0\leq i_1,k_1,\tilde{k_1}\leq d_1-1 \\ 0\leq i_2,\tilde{i_2}\leq d_2-1}}  \prod_{j=0}^{m-1}  w_j^{r_j \,((s_{(i_1d_2+i_2),j}-s_{(k_1d_2+i_2),j})+(s_{(k_1 d_2+\tilde{i_2}),j}-s_{(\tilde{k_1}d_2+\tilde{i_2}),j}))}\, \ket{i_1}\bra{\tilde{k_1}},\end{equation}
\noindent therefore, the purity is given by:
$$\gamma((\rho^{S}_{(r_j)_j})_{1})=\dfrac{1}{d^2}\sum_{\substack{0\leq i_1,k_1\leq d_1-1 \\ 0\leq i_2,\tilde{i_2}\leq d_2-1}}  \prod_{j=0}^{m-1}  w_j^{r_j \,((s_{(i_1d_2+i_2),j}-s_{(k_1d_2+i_2),j})+(s_{(k_1 d_2+\tilde{i_2}),j}-s_{(i_1d_2+\tilde{i_2}),j}))}$$
$$=\dfrac{1}{d^2}\left(\sum_{\substack{0\leq i_1,k_1\leq d_1-1 \\ 0\leq i_2=\tilde{i_2}\leq d_2-1}}1+\sum_{\substack{0\leq i_1=k_1\leq d_1-1 \\ 0\leq i_2\neq \tilde{i_2}\leq d_2-1}}1+\sum_{\substack{0\leq i_1\neq k_1\leq d_1-1 \\ 0\leq i_2\neq\tilde{i_2}\leq d_2-1}}  \prod_{j=0}^{m-1}  w_j^{r_j ((s_{(i_1d_2+i_2),j}-s_{(k_1d_2+i_2),j})+(s_{(k_1 d_2+\tilde{i_2}),j}-s_{(i_1d_2+\tilde{i_2}),j}))} \!\right)$$
\newpage
$$=\dfrac{1}{d^2}(d_1^2d_2+d_1d_2(d_2-1)+\dots$$ 
\begin{equation}\label{3.7}\left.\dots \, 4\sum_{\substack{0\leq i_1<k_1\leq d_1-1 \\ 0\leq i_2<\tilde{i_2}\leq d_2-1}}Cos\left(2\pi\left(\sum_{j=0}^{m-1} \dfrac{r_j\,((s_{(i_1d_2+i_2),j}-s_{(k_1d_2+i_2),j})+(s_{(k_1 d_2+\tilde{i_2}),j}-s_{(i_1d_2+\tilde{i_2}),j}))}{n_j}\right)\right)\right).\end{equation}
\noindent Let us note that the maximum value is achieved if and only if the cosines take the value of 1. Moreover, in such a case, we have that:
\begin{equation}\gamma((\rho^{S}_{(r_j)_j})_{1})=\dfrac{1}{d^2}\left(dd_1+d(d_2-1)+4\dfrac{d_1(d_1-1)d_2(d_2-1)}{4}\right)=\dfrac{1}{d}(d_1+d_2-1+(d_1-1)(d_2-1))=1,\end{equation}
\noindent thus, by the proposition \ref{prop 2.5.2}, we conclude the following propositions:

\begin{prop}\label{prop 3.1.1}
The Harmonic frames of dimension $d$ and $N=n_0\dots n_{m-1}$ elements can be characterized by a matrix $S=(s_{i,j})$ of $d\times m$ such that the rows are different and $s_{i,j}\in\{0,\dots,n_j-1\}$. Moreover, the states associated with the frame are of the form:    
\begin{equation}\ket{v^{S}_{(r_j)_{j}}}=\dfrac{1}{\sqrt{d}}\sum_{ 0\leq i\leq d-1} \prod_{j=0}^{m-1} w_{j}^{r_j \, s_{i,j}} \, \ket{i},\end{equation}
\noindent where $(r_j)_j$ are sequences such that $r_j\in\{0,\dots, n_j-1\}$. \textbf{For short, we will call this frame as \boldsymbol{(S,(n_j)_{j})}-Harmonic frame}.
\end{prop}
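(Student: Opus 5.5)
The plan is to build directly on Observations \ref{20} and \ref{21}. By Observation \ref{20}, a $\mathcal{G}$-harmonic frame in dimension $d$ is, up to normalization and unitary equivalence, a choice of $d$ columns of the character table of $\mathcal{G}$. Equivalently, it is a sequence $J=(g_0,\dots,g_{d-1})$ of $d$ distinct elements of $\mathcal{G}$, and the frame vectors are $(\xi(g_i))_{0\le i\le d-1}$ as $\xi$ runs over $\hat{\mathcal{G}}$.

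\textbf{Step 1: encode the group elements by integer exponents.} By the structure theorem, fix an isomorphism $\mathcal{G}\cong C_{n_0}\times\dots\times C_{n_{m-1}}$ with $n_j=p_j^{t_j}$, so that $N=n_{\mathcal{G}}=n_0\cdots n_{m-1}$. Each $g_i$ is then a tuple $(a_0^{s_{i,0}},\dots,a_{m-1}^{s_{i,m-1}})$ with unique exponents $s_{i,j}\in\{0,\dots,n_j-1\}$. This defines the $d\times m$ matrix $S=(s_{i,j})$. The elements of $J$ are distinct exactly when the rows of $S$ are pairwise different. Conversely, any such $S$ determines a sequence $J$ of distinct elements, hence a harmonic frame. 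This gives the claimed characterization.

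\textbf{Step 2: write out the characters.} By Observation \ref{21}, every character has the form $\Psi(a_0^{r_0},\dots,a_{m-1}^{r_{m-1}})$ with $r_j\in\{0,\dots,n_j-1\}$, and these account for all $N$ characters bijectively. Its value at $g_i$ is
\[
\prod_{j=0}^{m-1}\Psi_j(a_j^{r_j})(a_j^{s_{i,j}})=\prod_{j=0}^{m-1}w_j^{r_j s_{i,j}}.
\]
Restricting to $J$ and taking the Hadamard product over the factors gives the coordinate vector. Dividing by $\sqrt{d}$ normalizes it, since each entry is a root of unity and so has modulus $1$. The result is exactly $\ket{v^S_{(r_j)_j}}$.

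\textbf{Step 3: check tightness.} This step is also supplied by Proposition \ref{y}. The one-dimensional subrepresentations $g\mapsto \prod_j w_j^{r_j s_{i,j}}$ for different rows $i$ are pairwise inequivalent, because distinct elements $g_i\ne g_k$ give distinct characters of $\hat{\mathcal{G}}$ under the duality of Observation \ref{21}. The components all have equal norm, so the norm condition of Proposition \ref{y} holds. The cross terms $\sum_g \xi(g)\overline{\xi'(g)}$ vanish by the orthogonality relations of Observation \ref{xd}.

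The main obstacle is this last point. One must state carefully that harmonic frames correspond to distinct elements, not merely distinct rows, and that the character table is symmetric under the identification $\mathcal{G}\cong\hat{\mathcal{G}}$. With Observation \ref{21} this reduces to the symmetry $w_j^{r_js_j}=w_j^{s_jr_j}$.
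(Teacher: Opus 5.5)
Your proposal is correct and takes essentially the same route as the paper. It identifies the harmonic frame with a sequence $J$ of distinct group elements via Observation~\ref{20}, uses the isomorphism of Observation~\ref{21} to write each character as a Hadamard product of cyclic characters, and records the exponents of $J$ in $S$, whose rows are distinct exactly when the elements of $J$ are; your Step~3 additionally re-verifies tightness through Proposition~\ref{y} and the orthogonality relations, which the paper simply inherits (together with the converse direction) from Observation~\ref{20}.
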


\begin{prop}\label{prop 3.1.2} Given a $(S,(n_j)_{j})$-Harmonic frame of dimension $d=d_1d_2$, then the state $\ket{v^{S}_{(r_j)_j}}$ is 
$(d_1,d_2)$-separable if and only if it is satisfied that:
\begin{equation}\label{3.10}\sum_{j=0}^{m-1} \dfrac{r_j\,((s_{(i_1d_2+i_2),j}-s_{(k_1d_2+i_2),j})+(s_{(k_1 d_2+\tilde{i_2}),j}-s_{(i_1d_2+\tilde{i_2}),j}))}{n_j}\in \mathbb{Z} \,\,\,\, : \, \substack{0\leq i_1<k_1\leq d_1-1 \\ 0 \leq i_2< \tilde{i_2}\leq d_2-1}.\end{equation}
\end{prop}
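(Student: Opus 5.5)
The plan is to read the separability criterion directly off the purity computation (\ref{3.7}), combined with Proposition \ref{prop 2.5.2}. The state $\rho^{S}_{(r_j)_j}$ is pure, since it is the projector onto the normalized vector $\ket{v^{S}_{(r_j)_j}}$. By Proposition \ref{prop 2.5.2} together with observation \ref{p}, it is $(d_1,d_2)$-separable if and only if its reduction $(\rho^{S}_{(r_j)_j})_1$ has purity $1$. So the whole task is to characterize when the expression (\ref{3.7}) equals $1$.

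\textbf{Step 1: rewrite the off-diagonal sum.} In (\ref{3.7}), the terms with $i_1\neq k_1$ and $i_2\neq\tilde{i_2}$ come in conjugate pairs. Swapping $i_1\leftrightarrow k_1$ negates the exponent, and so does swapping $i_2\leftrightarrow\tilde{i_2}$. Hence the sum over ordered pairs equals the real part of the sum, and each unordered pair $\{i_1,k_1\},\{i_2,\tilde{i_2}\}$ contributes four times the cosine of its angle. This is the factor $4$ appearing in (\ref{3.7}). I will double-check this symmetry step carefully, since it is the only place where the combinatorics can go wrong.

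\textbf{Step 2: bound each cosine.} Each cosine is at most $1$. Substituting the value $1$ for every cosine gives the total $1$; this is the computation already displayed after (\ref{3.7}). The purity is therefore at most $1$, with equality if and only if every cosine equals $1$. This is consistent with the general bound $\gamma\le 1$ of observation \ref{obs 2.5.7}.

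\textbf{Step 3: translate the equality condition.} A cosine $\mathrm{Cos}(2\pi x)$ equals $1$ exactly when $x\in\mathbb{Z}$. Requiring this for every index range $0\le i_1<k_1\le d_1-1$ and $0\le i_2<\tilde{i_2}\le d_2-1$ is precisely condition (\ref{3.10}). The proof then chains three equivalences: separable $\iff$ $\gamma=1$ $\iff$ all cosines equal $1$ $\iff$ (\ref{3.10}).

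The main obstacle is Step 1. I need to confirm that the third sum in the purity really reduces to ordered pairs with $i_1\neq k_1$ and $i_2\neq\tilde{i_2}$, and that the terms with $i_1=k_1$ or $i_2=\tilde{i_2}$ each contribute exactly $1$. The latter holds because their exponents cancel identically. Once that bookkeeping is checked, the rest is immediate.
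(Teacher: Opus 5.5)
Your proposal is correct and follows essentially the same argument as the paper: compute the purity of the reduction as in (\ref{3.7}), observe that it equals $1$ exactly when every cosine equals $1$ (i.e.\ when each angle lies in $\mathbb{Z}$, which is condition (\ref{3.10})), and conclude with Proposition \ref{prop 2.5.2}. You also check explicitly that swapping $i_1\leftrightarrow k_1$ or $i_2\leftrightarrow\tilde{i_2}$ negates the exponent, which produces the factor $4$ in front of the cosines; the paper leaves this step implicit, and your check is correct.
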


\newpage

\noindent On the other hand, let us note that the minimum value in (\ref{3.7}) occurs when all the cosines are equal to -1, which happens when:
\begin{equation}\sum_{j=0}^{m-1} \dfrac{r_j\,((s_{(i_1d_2+i_2),j}-s_{(k_1d_2+i_2),j})+(s_{(k_1 d_2+\tilde{i_2}),j}-s_{(i_1d_2+\tilde{i_2}),j}))}{n_j}\in \mathbb{Z}+\dfrac{1}{2} \,\,\,\, : \, \substack{0\leq i_1<k_1\leq d_1-1 \\ 0 \leq i_2< \tilde{i_2}\leq d_2-1}.\end{equation}
\noindent but in this case:
\begin{equation}\gamma((\rho^{J}_{(r_j)_j})_{1})=\dfrac{1}{d}(d_1+d_2-1-(d_1-1)(d_2-1))=\dfrac{2(d_1+d_2-1)-d}{d}.\end{equation}
\noindent Due to the symmetry under exchanging \( d_1 \) and \( d_2 \), we can assume wlog that \( d_2 = d_1 + k \) with \( k \geq 0 \). Therefore, \textbf{for the state to be maximally entangled, it must also satisfy} (remember observation \ref{obs 2.5.7} and \ref{obs 2.5.8}):
\begin{equation}2(d_1+(d_1+k)-1)-d_1(d_1+k)=1\Rightarrow d_1^{2}-d_1(4-k)-(2k-3)=0, \end{equation}

\noindent thus:

\begin{equation}d_{1\pm}=\dfrac{(4-k)\pm \sqrt{4+k^2}}{2}.\end{equation}

\noindent \textbf{The equation has integer solutions for \boldsymbol{k=0}, in which case \boldsymbol{d_{1-}=d_{2-}=1} and \boldsymbol{d_{1+}=d_{2+}=3} }. Otherwise, let us note that by interpreting the function as a positive real variable, the derivative is given by:

\begin{equation}\dfrac{d}{dk} \, d_{1\pm}=\dfrac{1}{2}\left(-1\pm \dfrac{k}{\sqrt{4+k^2}}\right),\end{equation}

\noindent then, for \( k \geq 1 \), \( d_{1_{\pm}} \) are strictly decreasing, thus, \( d_{1_{-}}(k) < 1 \), so that cannot be a positive integer. In the same way \( d_{1_{+}}(k) < 3 \), however, \( \lim_{k \to \infty} d_{1_{+}}(k) = 2 \), so \( d_{1_{+}}(k) \in (2,3) \), and therefore, it is not a positive integer either.\\

\newpage

Back to the proposition \ref{prop 3.1.2}, it can be noticed that if the condition (\ref{3.10}) is met, then it is also satisfied when \(i_1 > k_1\) (multiplying (\ref{3.10}) by \(-1\)). With this, we can rewrite the reductions in a simpler way, since:
 $$\prod_{j=0}^{m-1} w_j^{r_j \,(s_{(i_1d_2+i_2),j}-s_{(k_1d_2+i_2),j})}=Exp\left(2\pi i\left(\sum_{j=0}^{m-1} \dfrac{r_j\,(s_{(i_1d_2+i_2),j}-s_{(k_1d_2+i_2),j})}{n_j}\right)\right)$$
 \begin{equation}\label{3.16}=Exp\left(2\pi i\left(\sum_{j=0}^{m-1} \dfrac{r_j\,(s_{(i_1d_2+\tilde{i_2}),j}-s_{(k_1 d_2+\tilde{i_2}),j})}{n_j}\right)\right)=\prod_{j=0}^{m-1} w_j^{r_j \,(s_{(i_1d_2+\tilde{i_2}),j}-s_{(k_1 d_2+\tilde{i_2}),j})},\end{equation}
\noindent then, by taking $\tilde{i_2}=0$, the equation (\ref{3.5}) becomes:
\begin{equation}\label{3.17}(\rho^{S}_{(r_j)_j})_{1}=\dfrac{1}{d_1}\sum_{0 \leq i_1,k_1\leq d_1-1}  \prod_{j=0}^{m-1} w_j^{r_j \,(s_{i_1d_2,j}-s_{k_1d_2,j})} \,\ket{i_1}\bra{k_1}.\end{equation}
\noindent It can also be written (\ref{3.16}) as:
\begin{equation}\prod_{j=0}^{m-1} w_j^{r_j \,(s_{(i_1d_2+i_2),j}-s_{(i_1d_2+\tilde{i_2}),j})}=\prod_{j=0}^{m-1} w_j^{r_j \,(s_{(k_1d_2+i_2),j}-s_{(k_1d_2+\tilde{i_2}),j})},\end{equation}
\noindent thus, choosing $k_1=0$, from (\ref{3.4}) it can be obtained that the reduction to $\mathbb{C}^{d_2\times d_2}$ is:
$$(\rho^{S}_{(r_j)_j})_2=\dfrac{1}{d}\sum_{\substack{0\leq i_1\leq d_1-1 \\ 0\leq i_2,k_2\leq d_2-1}} \prod_{j=0}^{m-1} w_j^{r_j \,(s_{(i_1d_2+i_2),j}-s_{(i_1d_2+k_2),j})}\, \ket{i_2}\bra{k_2},$$
\begin{equation}\label{3.19}=\dfrac{1}{d_2}\sum_{0 \leq i_2,k_2\leq d_2-1} \prod_{j=0}^{m-1} w_j^{r_j \,(s_{i_2,j}-s_{k_2,j})}\, \ket{i_2}\bra{k_2},\end{equation}
\noindent and by observation \ref{h}, the state is written as $\rho^{S}_{(r_j)_j}=(\rho^{S}_{(r_j)_j})_{1}\otimes (\rho^{S}_{(r_j)_j})_{2}$.

\newpage

\section{Multipartite Entanglement in Harmonic Frames}\label{multipartite}

Suppose now that $d = d_1 d_2 \dots d_n$, we define:
\begin{equation}d_l^{\prime}=\prod_{k=l }^{n}d_{k}. \end{equation}
\noindent By induction $\mathbb{C}^{d\times d}\cong \mathbb{C}^{d_1\times d_1}\otimes \dots \otimes \mathbb{C}^{d_n\times d_n}$ via the Kronecker product, and by proposition \ref{ppp}, \( \rho^{S}_{(r_j)_j} \) will be separable if and only if the purities of its reductions to the systems \( C^{d_l \times d_l} \) for \( l \in \{1, \dots, n-1\} \) are equal to 1. 
We will show by induction the following proposition. 

\begin{prop}\label{prop 3.2.1}

Given a $(S,(n_j)_{j})$-Harmonic frame of dimension $d=d_1d_2\dots d_n$, then the state $\ket{v^{S}_{(r_j)_j}}$ is 
$(d_1,d_2,\dots,d_n)$-separable if and only if it is satisfied that:
\begin{equation}\label{3.21}\sum_{j=0}^{m-1} \dfrac{r_j\,((s_{(i_ld^{\prime}_{l+1}+\tilde{i}_{l+1}),j}-s_{(k_ld^{\prime}_{l+1}+\tilde{i}_{l+1}),j})+(s_{(k_l d^{\prime}_{l+1}+\tilde{k}_{l+1}),j}-s_{(i_ld^{\prime}_{l+1}+\tilde{k}_{l+1}),j}))}{n_j}\in \mathbb{Z} \,\,\,\, : \, \substack{0\leq i_l<k_l\leq d_l-1 \\ 0 \leq \tilde{i}_{l+1}< \tilde{k}_{l+1}\leq d_{l+1}^{\prime}-1\\ 1 \leq l\leq n-1},\end{equation}
\noindent and, in this case:
\begin{equation}(\rho^{S}_{(r_j)_j})_{l}=\dfrac{1}{d_l}\sum_{0 \leq i_l,k_l\leq d_l-1} \prod_{j=0}^{m-1} w_j^{r_j \,(s_{i_ld_{l+1}^{\prime},j}-s_{k_l d_{l+1}^{\prime},j})}\, \ket{i_l}\bra{k_l},\end{equation}
\noindent where $d_{n+1}^{\prime}:=1$ (empty product convention).
\end{prop}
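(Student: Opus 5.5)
We argue by induction on the number of factors $n$; the base case $n=2$ is exactly Proposition \ref{prop 3.1.2} together with the formulas (\ref{3.17}) and (\ref{3.19}) for the two reductions. For the inductive step we view $\mathbb{C}^{d}$ as the bipartite system $\mathbb{C}^{d_1}\otimes\mathbb{C}^{d^{\prime}_2}$, with $i=i_1d^{\prime}_2+\tilde{i}_2$ and $\tilde{i}_2\in\{0,\dots,d_2^{\prime}-1\}$.

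\textbf{Step 1.} Apply Proposition \ref{prop 3.1.2} with $(d_1,d_2)$ replaced by $(d_1,d^{\prime}_2)$. The state $\ket{v^{S}_{(r_j)_j}}$ is $(d_1,d^{\prime}_2)$-separable if and only if condition (\ref{3.21}) holds for $l=1$. In that case (\ref{3.17}) gives the reduction $(\rho^{S}_{(r_j)_j})_1$ in the claimed form with $d^{\prime}_{2}$ in place of $d_2$.

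The argument after Proposition \ref{prop 3.1.2} then yields $\rho^{S}_{(r_j)_j}=(\rho^{S}_{(r_j)_j})_1\otimes\rho^{\prime}$. Here $\rho^{\prime}$ is given by (\ref{3.19}): it is the pure state of $\ket{v^{S^{\prime}}_{(r_j)_j}}$, where $S^{\prime}$ is the $d^{\prime}_2\times m$ submatrix formed by the first $d^{\prime}_2$ rows of $S$. These rows are distinct, so by Proposition \ref{prop 3.1.1} $\rho^{\prime}$ is again a state of an $(S^{\prime},(n_j)_j)$-harmonic frame, now of dimension $d^{\prime}_2=d_2\cdots d_n$.

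\textbf{Step 2.} Combine with multipartite separability. By Proposition \ref{ppp} (purity of the reductions), $\rho^{S}_{(r_j)_j}$ is $(d_1,\dots,d_n)$-separable if and only if two things hold: it is $(d_1,d^{\prime}_2)$-separable, and, given the resulting factorization, $\rho^{\prime}$ is $(d_2,\dots,d_n)$-separable. For the reverse implication we use $\gamma(\rho_{A_l})=\gamma(\rho^{\prime}_{A_l})$ for $l\ge 2$, exactly as in the proof of Proposition \ref{ppp}. The inductive hypothesis applied to $S^{\prime}$ gives the conditions (\ref{3.21}) for $l=2,\dots,n-1$, with rows indexed inside $S^{\prime}$. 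It also gives the reductions $(\rho^{\prime})_l$ in the stated form with $s_{i_ld^{\prime}_{l+1},j}$. Since $(\rho^{S}_{(r_j)_j})_l=(\rho^{\prime})_l$ for $l\geq 2$, this yields the stated reductions.

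\textbf{Step 3 (main obstacle).} The inductive hypothesis yields conditions on $S^{\prime}$, namely rows $0,\dots,d^{\prime}_2-1$ of $S$. The statement of (\ref{3.21}) for $l\ge2$ instead involves indices $i_ld^{\prime}_{l+1}+\tilde{i}_{l+1}\le d^{\prime}_l-1$. Since $d^{\prime}_l\le d^{\prime}_2$, these are again rows among the first $d^{\prime}_2$, so the two formulations match directly.

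One further point must be checked: the $l=1$ condition should involve full indices ranging over all of $\{0,\dots,d-1\}$. This is exactly what Proposition \ref{prop 3.1.2} supplies, because $\tilde{i}_2,\tilde{k}_2$ run over $\{0,\dots,d^{\prime}_2-1\}$.

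Finally, the ``if and only if'' must carry through. Here we use the fact that the factorization $\rho=\rho_1\otimes\rho^{\prime}$ holds precisely when the $l=1$ condition holds. Without that condition the state is already entangled across the $d_1\,|\,d^{\prime}_2$ cut, so it is not fully separable, which makes the equivalence hold in both directions.
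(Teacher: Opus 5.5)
Your proposal is correct and follows the same induction as the paper. You split $\mathbb{C}^{d}\cong\mathbb{C}^{d_1}\otimes\mathbb{C}^{d_2^{\prime}}$, use Proposition \ref{prop 3.1.2} and (\ref{3.17}) for the $l=1$ condition and reduction, recognize the $d_2^{\prime}$-reduction (\ref{3.19}) as a harmonic state built from the first $d_2^{\prime}$ rows of $S$, and apply the inductive hypothesis together with $(\rho^{S}_{(r_j)_j})_l=((\rho^{S}_{(r_j)_j})_{2^{\prime}})_l$ for $l\geq 2$; your index-range bookkeeping in Step 3 and the remark that failing the $l=1$ condition already rules out full separability just make explicit what the paper leaves implicit.
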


\begin{proof} Indeed, we take as the base case \(n=2\), which was treated previously. Assuming it holds for \(n-1\) products, then for \(n\) products, taking \(d = d_1 d_2^{\prime}\), we have by proposition \ref{prop 3.1.2} that \(\gamma((\rho^{S}_{(r_j)_j})_{1}) = 1\) if and only if:
\begin{equation}\sum_{j=0}^{m-1} \dfrac{r_j\,((s_{(i_1d^{\prime}_{2}+\tilde{i}_2),j}-s_{(k_1d^{\prime}_{2}+\tilde{i}_2),j})+(s_{(k_1d^{\prime}_{2}+\tilde{k}_2),j}-s_{(i_1d^{\prime}_{2}+\tilde{k}_2),j}))}{n_j}\in \mathbb{Z} \,\,\,\, : \, \substack{0\leq i_1<k_1\leq d_1-1 \\ 0 \leq \tilde{i}_2< \tilde{k}_2\leq d_{2}^{\prime}-1},\end{equation}
\noindent furthermore, we know that of equations (\ref{3.17}) and (\ref{10}) that:
\begin{equation}(\rho^{S}_{(r_j)_j})_{1}=\dfrac{1}{d_1}\sum_{0 \leq i_1,k_1\leq d_1-1}   \prod_{j=0}^{m-1} w_j^{r_j \,(s_{i_1d_2^{\prime},j}-s_{k_1d_2^{\prime},j})} \,\ket{i_1}\bra{k_1},\end{equation}
\begin{equation}(\rho^{S}_{(r_j)_j})_{2^{\prime}}=\dfrac{1}{d_2^{\prime}}\sum_{0 \leq \tilde{i}_2,\tilde{k}_2\leq d_2^{\prime}-1} \prod_{j=0}^{m-1} w_j^{r_j \,(s_{\tilde{i}_2,j}-s_{\tilde{k}_2,j})}\, \ket{\tilde{i}_2}\bra{\tilde{k}_2},\end{equation}

\noindent but $(\rho^{S}_{(r_j)_j})_{2^{\prime}}$ has the same form as $\rho^{S}_{(r_j)_j}$ except for replacing $d$ with $d_2^{\prime}$. Since this resides in $\mathbb{C}^{d_2^{\prime}\times d_2^{\prime}}\cong \mathbb{C}^{d_2\times d_2}\otimes \dots\otimes \mathbb{C}^{d_n\times d_n}$, then by the inductive hypothesis, it is separable if and only if:
\begin{equation}\sum_{j=0}^{m-1} \dfrac{r_j\,((s_{(i_ld^{\prime}_{l+1}+\tilde{i}_{l+1}),j}-s_{(k_ld^{\prime}_{l+1}+\tilde{i}_{l+1}),j})+(s_{(k_l d^{\prime}_{l+1}+\tilde{k}_{l+1}),j}-s_{(i_ld^{\prime}_{l+1}+\tilde{k}_{l+1}),j}))}{n_j}\in \mathbb{Z} \,\,\,\, : \, \substack{0\leq i_l<k_l\leq d_l-1 \\ 0 \leq \tilde{i}_{l+1}< \tilde{k}_{l+1}\leq d_{l+1}^{\prime}-1\\ 2 \leq l\leq n-1},\end{equation}
\noindent and since $(\rho^{S}_{(r_j)_j})_{l}=((\rho^{S}_{(r_j)_j})_{2^{\prime}})_{l}$ for $l\geq 2$, it follows that:
\begin{equation}(\rho^{S}_{(r_j)_j})_{l}=\dfrac{1}{d_l}\sum_{0 \leq i_l,k_l\leq d_l-1} \prod_{j=0}^{m-1} w_j^{r_j \,(s_{i_ld_{l+1}^{\prime},j}-s_{k_l d_{l+1}^{\prime},j})}\, \ket{i_l}\bra{k_l}.\end{equation}
\end{proof}

\begin{obs}\label{obs 3.2.1}
Let's define: 
\begin{equation}\ket{v^{S}_{(r_j)}}_{l}=\dfrac{1}{\sqrt{d_l}}\sum_{0 \leq i_l\leq d_l-1} \prod_{j=0}^{m-1} w_j^{r_j \,(s_{i_ld_{l+1}^{\prime},j}+\theta_j)}\, \ket{i_l},\end{equation}
\noindent note that $(\rho^{S}_{(r_j)_j})_{l}=\ket{v^{S}_{(r_j)}}_{l}\bra{v^{J}_{(r_j)}}_{l}$, thus, by observation \ref{h} there exist $\theta_j$ such that:
\begin{equation}\ket{v^{S}_{(r_j)_j}}=\ket{(v^{S}_{(r_j)_j})_{1},\dots, (v^{S}_{(r_j)_j})_{n}},\end{equation}

\newpage
\noindent then, writing in Kronecker's product base:
$$\ket{v^{S}_{(r_j)_j}}=\dfrac{1}{\sqrt{d}}\sum_{ 0\leq i\leq d-1} \prod_{j=0}^{m-1} w_{j}^{r_j \, s_{i,j}} \, \ket{i}=\dfrac{1}{\sqrt{d}}\sum_{\substack{0\leq i_1\leq d_1-1\\ \dots \\ 0\leq i_n\leq d_{n}-1}}\prod_{j=0}^{m-1}w_{j}^{r_j s_{(i_1d_{2}^{\prime}+i_2d_{3}^{\prime}+\dots+i_{n}),j}} \ket{i_1,\dots,i_n}$$
\begin{equation}=\dfrac{1}{\sqrt{d}}\sum_{\substack{0\leq i_1\leq d_1-1\\ \dots \\ 0\leq i_n\leq d_{n}-1}} \prod_{j=0}^{m-1} w_{j}^{r_j(s_{i_1 d_{2}^{\prime
},j}+s_{i_{2}d_{3}^{\prime},j}+\dots+s_{i_n,j}+n\theta_j)}\ket{i_1,\dots,i_n},\end{equation}

\noindent thus, it is enough to take:
\begin{equation}\theta_j=\left(\dfrac{1-n}{n}\right) s_{0,j},\end{equation}
\noindent indeed, in this case:
$$\dfrac{1}{\sqrt{d}}\sum_{\substack{0\leq i_1\leq d_1-1\\ \dots \\ 0\leq i_n\leq d_{n}-1}} \prod_{j\leq m} w_{j}^{r_j(s_{i_1 d_{2}^{\prime
},j}+s_{i_{2}d_{3}^{\prime},j}+\dots+s_{i_n,j}+n\theta_j)}\ket{i_1,\dots,i_n}$$
$$=\dfrac{1}{\sqrt{d}}\sum_{\substack{0\leq i_1\leq d_1-1\\ \dots \\ 0\leq i_n\leq d_{n}-1}} \prod_{j=0}^{m-1} w_{j}^{r_j((s_{i_1 d_{2}^{\prime
},j}-s_{0,j})+(s_{i_2 d_{3}^{\prime
},j}-s_{0,j})\dots+(s_{i_{n-1} d_{n}^{\prime
},j}-s_{0,j})+s_{i_n,j})}\ket{i_1,\dots,i_n}$$
\begin{equation}=\dfrac{1}{\sqrt{d}}\sum_{\substack{0\leq i_1\leq d_1-1\\ \dots \\ 0\leq i_n\leq d_{n}-1}} \prod_{j=0}^{m-1} w_{j}^{r_j((s_{i_1 d_{2}^{\prime
},j}-s_{0,j})} \prod_{j=0}^{m-1} w_{j}^{r_j((s_{i_2 d_{3}^{\prime
},j}-s_{0,j})}\dots \prod_{j=0}^{m-1} w_{j}^{r_j s_{i_{n},j}}\ket{i_1,\dots,i_n},\end{equation}

\noindent but by the separability conditions (\ref{3.21}), we have that:
\begin{equation}\prod_{j=0}^{m-1} w_j^{r_j \,(s_{(i_ld^{\prime}_{l+1}+\tilde{i}_{l+1}),j}-s_{(k_ld^{\prime}_{l+1}+\tilde{i}_{l+1}),j})}=\prod_{j=0}^{m-1} w_j^{r_j \,(s_{(i_ld^{\prime}_{l+1}+\tilde{k}_{l+1}),j}-s_{(k_ld^{\prime}_{l+1}+\tilde{k}_{l+1}),j})} \,\,\,\, : \, \substack{0\leq i_l,k_l\leq d_l-1 \\ 0 \leq \tilde{i}_{l+1}, \tilde{k}_{l+1}\leq d_{l+1}^{\prime}-1\\ 1 \leq l\leq n-1}.\end{equation}
\noindent Choosing $k_l=\tilde{i}_{l+1}=0$ and:
\begin{equation}\tilde{k}_{l+1}=\sum_{r=l}^{n-1}i_{r+1}d^{\prime}_{r+2}\,\,\,\,\,\,\,\, \,\,\,\,\,\,\,\,k_{l}(i_r)_{r\geq l}:=\sum_{r=l}^{n}i_{r}d^{\prime}_{r+1},\end{equation}

\newpage
\noindent then, for all $l\leq n$ and $0\leq i_r\leq d_l-1$:
\begin{equation}\label{3.35}\prod_{j=0}^{m-1} w_j^{r_j \,(s_{i_ld^{\prime}_{l+1},j}-s_{0,j})}=\prod_{j=0}^{m-1} w_j^{r_j \,(s_{k_{l}(i_r)_{r\geq l},j}-s_{k_{l+1}(i_r)_{r\geq l+1},j})},\end{equation}

\noindent where $k_{n+1}(i_r)_{r\geq n+1}=0$ (empty sum convention). Applying telescoping summation to the exponents, we are left with the following:

$$\dfrac{1}{\sqrt{d}}\sum_{\substack{0\leq i_1\leq d_1-1\\ \dots \\ 0\leq i_n\leq d_{n}-1}} \prod_{j=0}^{m-1} w_{j}^{r_j(s_{i_1 d_{2}^{\prime
},j}+s_{i_{2}d_{3}^{\prime},j}+\dots+s_{i_n,j}+n\theta_j)}\ket{i_1,\dots,i_n}$$
$$=\dfrac{1}{\sqrt{d}}\sum_{\substack{0\leq i_1\leq d_1-1\\ \dots \\ 0\leq i_n\leq d_{n}-1}} \prod_{j=0}^{m-1} w_j^{r_j \,(s_{k_{1}(i_r)_{r\geq 1},j}-s_{i_{n},j})} \prod_{j=0}^{m-1} w_{j}^{r_j s_{i_{n},j}}\ket{i_1,\dots,i_n}$$
\begin{equation}=\dfrac{1}{\sqrt{d}}\sum_{\substack{0\leq i_1\leq d_1-1\\ \dots \\ 0\leq i_n\leq d_{n}-1}}\prod_{j=0}^{m-1} w_{j}^{r_j s_{(i_1d_{2}^{\prime}+i_2d_{3}^{\prime}+\dots+i_{n}),j}} \ket{i_1,\dots,i_n}=\ket{v^{S}_{(r_j)_j}},\end{equation}

\noindent therefore, we can write the factors as:
\begin{equation}\label{3.37}\ket{v^{S}_{(r_j)}}_{l}=\dfrac{1}{\sqrt{d_l}}\sum_{0 \leq i_l\leq d_l-1} \prod_{j=0}^{m-1} w_j^{r_j \,(s_{id_{l+1}^{\prime},j}+(\frac{1-n}{n})s_{0,j})}\, \ket{i_l}.\end{equation}

\begin{obs}\label{obs 3.2.2}

 With the above observation, and taking into account that conditions (\ref{3.35}) (for all $i_r$ and $l$) imply by themselves the separability of $\ket{v^{S}_{(r_j)_j}}$, then it is separable if and only:
\begin{equation}\prod_{j=0}^{m-1} w_j^{r_j \,(s_{i_ld^{\prime}_{l+1},j}-s_{0,j})}=\prod_{j=0}^{m-1} w_j^{r_j \,(s_{k_{l}(i_r)_{r\geq l},j}-s_{k_{l+1}(i_r)_{r\geq l+1},j})} \,\,\,\, : \, \substack{0\leq i_r\leq d_r-1 \\  1 \leq l\leq n},\end{equation}

\noindent or in other words:
\begin{equation}\sum_{j=0}^{m-1} \dfrac{r_j\,((s_{i_ld^{\prime}_{l+1},j}-s_{0,j})+(s_{k_{l+1}(i_r)_{r\geq l+1},j}-s_{k_{l}(i_r)_{r\geq l},j}))}{n_j}\in \mathbb{Z} \,\,\,\, : \, \substack{0\leq i_r\leq d_r-1 \\  1 \leq l\leq n}.\end{equation}

\end{obs}

\newpage

\end{obs}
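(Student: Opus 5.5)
The statement has two directions, and I would handle them separately. The forward direction (separable $\Rightarrow$ (\ref{3.35})) is essentially done in observation \ref{obs 3.2.1}. Assume $\ket{v^{S}_{(r_j)_j}}$ is $(d_1,\dots,d_n)$-separable. Then proposition \ref{prop 3.2.1} gives the integrality conditions (\ref{3.21}). Exponentiating them gives the equality of products of roots of unity displayed just before (\ref{3.35}), and this equality holds for all $i_l,k_l$ and $\tilde{i}_{l+1},\tilde{k}_{l+1}$. The reason is that (\ref{3.21}) is antisymmetric under $i_l\leftrightarrow k_l$ and under $\tilde{i}_{l+1}\leftrightarrow\tilde{k}_{l+1}$, and is trivial when either pair coincides. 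Specializing to $k_l=\tilde{i}_{l+1}=0$ and $\tilde{k}_{l+1}=k_{l+1}(i_r)_{r\geq l+1}$ yields (\ref{3.35}). For this specialization I would use the identity $i_l d'_{l+1}+k_{l+1}(i_r)_{r\geq l+1}=k_l(i_r)_{r\geq l}$. The case $l=n$ holds trivially: there $k_n=i_n$ and $k_{n+1}=0$, so both sides equal $\prod_j w_j^{r_j(s_{i_n,j}-s_{0,j})}$.

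For the converse, I would assume (\ref{3.35}) for all $i_r$ and all $l$, and show directly that the amplitudes of $\ket{v^{S}_{(r_j)_j}}$ in the Kronecker basis factorize. Write the coefficient of $\ket{i_1,\dots,i_n}$ as $\frac{1}{\sqrt d}\prod_j w_j^{r_j s_{k_1(i_r)_{r\geq1},j}}$. Insert the telescoping decomposition
\begin{equation*}
s_{k_1,j}=\sum_{l=1}^{n-1}\bigl(s_{k_l,j}-s_{k_{l+1},j}\bigr)+s_{k_n,j},\qquad k_n=i_n .
\end{equation*}
Then apply (\ref{3.35}) to each bracket with $l\leq n-1$, replacing $\prod_j w_j^{r_j(s_{k_l,j}-s_{k_{l+1},j})}$ by $\prod_j w_j^{r_j(s_{i_l d'_{l+1},j}-s_{0,j})}$, which depends only on $i_l$. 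The coefficient then becomes a product $\prod_{l=1}^{n} c_l(i_l)$ with $c_l(i_l)=\prod_j w_j^{r_j(s_{i_ld'_{l+1},j}-s_{0,j})}$ for $l<n$ and $c_n(i_n)=\prod_j w_j^{r_j s_{i_n,j}}$. Hence $\ket{v^{S}_{(r_j)_j}}=\bigotimes_l\bigl(d_l^{-1/2}\sum_{i_l}c_l(i_l)\ket{i_l}\bigr)$ is a product vector, and therefore separable. Every $c_l$ has modulus one, so each factor is automatically normalized. This reproduces the factors (\ref{3.37}) up to the global phase $\theta_j$.

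Finally, the two displayed forms of the condition are equivalent because $\prod_j w_j^{a_j}=\exp\bigl(2\pi i\sum_j a_j/n_j\bigr)$ equals $1$ exactly when $\sum_j a_j/n_j\in\mathbb{Z}$. Moving everything to one side of (\ref{3.35}) gives the integrality form.

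The main obstacle is index bookkeeping rather than any conceptual step. Specifically, one has to verify that $k_l(i_r)_{r\geq l}=i_ld'_{l+1}+k_{l+1}(i_r)_{r\geq l+1}$ is exactly the row index $i_1d'_2+\dots+i_n$ appearing in (\ref{3.4}) at $l=1$, with the convention $d'_{n+1}=1$. One also has to check that the specialization of (\ref{3.21}) stays within the admissible ranges $0\leq\tilde{k}_{l+1}\leq d'_{l+1}-1$. I would check these mixed-radix identities first, and only then carry out the telescoping.
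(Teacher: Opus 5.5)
Your proposal is correct and follows essentially the same route as the paper. Both the forward direction and the converse match the paper's steps. For the forward direction, you extend (\ref{3.21}) to all index pairs and specialize at $k_l=\tilde{i}_{l+1}=0$, $\tilde{k}_{l+1}=k_{l+1}(i_r)_{r\geq l+1}$ to get (\ref{3.35}); your antisymmetry argument makes the extension step more explicit than the paper does. For the converse, you telescope $s_{k_1,j}$ so that the amplitudes factor into a product vector. The only difference is cosmetic: you put the leftover phase on the last factor, whereas the paper spreads it evenly through $\theta_j=\frac{1-n}{n}s_{0,j}$, and both choices give the same tensor product.
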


\section{Designing \boldsymbol{C_N}-Frames}\label{Cn frames}

From observation 2.4.4, we have that the character table of the group \( C_N \) given by $(w_{N}^{ij})_{0\leq i,j\leq n-1}$ (with $w_N=e^{2\pi i/N}$). Normalizing, we obtain the Fourier matrix:

\begin{equation}
F_N = \frac{1}{\sqrt{N}} \begin{pmatrix}
1 & 1 & 1 & \cdots & 1 \\
1 & \omega_N & \omega_N^2 & \cdots & \omega_N^{N-1} \\
1 & \omega_N^2 & \omega_N^4 & \cdots & \omega_N^{2(N-1)} \\
\vdots & \vdots & \vdots & \ddots & \vdots \\
1 & \omega_N^{N-1} & \omega_N^{2(N-1)} & \cdots & \omega_N^{(N-1)(N-1)}
\end{pmatrix}.
\end{equation}  

\noindent By the orthogonality relations discussed in observation \ref{xd}, this matrix is unitary, so it can be useful for implementing the normalized \( C_N \) frames as POVMs via Naimark's theorem. Indeed, consider a \( C_N \) frame of dimension \( d \), since $m=1$, then by the proposition \ref{prop 3.1.1},  we can write $S=(s_i)_{0\leq i\leq d-1}$, where the \( s_i \)'s are distinct in $\{0,\dots, N-1\}$, thus, taking \( C \in \{\max\limits_{0\leq i\leq d-1} s_id, \ldots, Nd-1\} \), we can define the permutation \( \sigma_{S}(iN)=C-s_id \). Consider the permutation operator \( P_{\sigma_{S}} \) then, we define: 
\begin{equation}U_{S}:=F_{dN}P_{\sigma_{S}} \in \mathbb{C}^{dN\times dN}\cong \mathbb{C}^{d\times d}\otimes \mathbb{C}^{N\times N}.\end{equation}

\noindent Note that in the computational basis
$$U_{S}=\dfrac{1}{\sqrt{dN}}\sum_{\substack{0\leq i,j\leq dN-1}}w_{dN}^{ij}\, \ket{i}\bra{\sigma_{S}^{-1}(j)}=\dfrac{1}{\sqrt{dN}}\sum_{\substack{0\leq i,j\leq dN-1}}w_{dN}^{i\sigma_{S}(j)}\, \ket{i}\bra{j}$$
\begin{equation}=\dfrac{1}{\sqrt{dN}}\sum_{\substack{0\leq i_1,j_1\leq d-1 \\ 0\leq i_2,j_2\leq N-1}}w_{dN}^{(i_1N+i_2)\sigma_{S}(j_1N+j_2)}\ket{i_1}\bra{j_1}\otimes \ket{i_2}\bra{j_2},\end{equation}

\newpage
\noindent then:
$$U_{S}^{*}(I_{d}\otimes \ket{k}\bra{k})U_{S}$$
$$=\dfrac{1}{dN}\sum_{\substack{0\leq i_1,j_1,\tilde{i}_1,\tilde{j}_1\leq d-1 \\ 0\leq i_2,j_2,\tilde{i}_2,\tilde{j}_2\leq N-1}}w_{Nd}^{(\tilde{i}_1N+\tilde{i}_2)\sigma_{S}(\tilde{j}_1 N+\tilde{j}_2)-(i_1N+i_2)\sigma_{S}(j_1N+j_2)}(\ket{j_1}\bra{i_1}\otimes \ket{j_2}\bra{i_2})(I_{d}\otimes \ket{k}\bra{k})(\ket{\tilde{i}_1}\bra{\tilde{j}_1}\otimes \ket{\tilde{i}_2}\bra{\tilde{j}_2})$$
\begin{equation}=\dfrac{1}{dN}\sum_{\substack{0\leq i_1,j_1,\tilde{j}_1\leq d-1 \\ 0\leq j_2,\tilde{j}_2\leq N-1}}w_{Nd}^{(i_1N+k)\sigma_{S}(\tilde{j}_1N+\tilde{j}_2)-(i_1N+k)\sigma_{S}(j_1N+j_2)}\ket{j_1}\bra{\tilde{j}_1}\otimes\ket{j_2}\bra{\tilde{j}_2}.\end{equation}

\noindent Therefore:
$$\bra{0}U^{*}_{S}(I_{d}\otimes \ket{k}\bra{k})U_{S}\ket{0}=\dfrac{1}{dN}\sum_{0\leq i_1,j_1,\tilde{j}_1\leq d-1} w_{Nd}^{(i_1N+k)(\sigma_{S}(\tilde{j}_1N)-\sigma_{S}(j_1 N))}\ket{j_1}\bra{\tilde{j}_1}$$
$$=\dfrac{1}{dN}\sum_{0\leq i_1,j_1,\tilde{j}_1\leq d-1} w_{Nd}^{(i_1N+k)((C-d s_{\tilde{j}_1}-(C-d s_{j_1}))}\ket{j_1}\bra{\tilde{j}_1}=\dfrac{1}{N}\sum_{0\leq j_1,\tilde{j}_1\leq d-1} w_{N}^{k (s_{j_1}- s_{\tilde{j}_1})}\ket{j_1}\bra{\tilde{j}_1}$$
\begin{equation}\label{3.44}=\dfrac{d}{N}\ket{v^{S}_{k}}\bra{v^{S}_k},\end{equation}

\noindent which is the element $k$ of the POVM associated to the normalization of the $C_N$-frame, since, recalling the proposition \ref{3}, we have that the frame constant $A$ is:
\begin{equation}A=\dfrac{1}{d}\sum_{0\leq k\leq N-1} \vert\vert v_{k}^{S}\vert\vert^2=N/d .\end{equation}

\begin{obs}
    Note that, from (\ref{3.44}) it can be observed that the condition on the permutation $\sigma$ can be further relaxed, since it will form the desired POVM if and only if:

\begin{equation}\dfrac{1}{d}\sum_{0\leq i\leq d-1}w_{Nd}^{(iN+k)(\sigma(\tilde{j}N)-\sigma(jN))}= w_{N}^{k (s_{j}- s_{\tilde{j}})} \,\,\,\, : \, \substack{0\leq j,\tilde{j}\leq d-1 \\ 0 \leq k\leq N-1},\end{equation}

\newpage

\noindent but this implies that:
\begin{equation}\left|\sum_{0\leq i\leq d-1}w_{Nd}^{iN(\sigma(\tilde{j}N)-\sigma(jN))}\right|=d,\end{equation}
\noindent which occurs only if $d\vert\sigma(\tilde{j}N)-\sigma(jN)$, then the condition becomes:
\begin{equation}w_{N}^{k(\sigma(\tilde{j}N)-\sigma(jN))/d}= w_{N}^{k (s_{j}- s_{\tilde{j}})} \,\,\,\, : \, \substack{0\leq j,\tilde{j}\leq d-1 \\ 0 \leq k\leq N-1},\end{equation}
\noindent what we have left in:
\begin{equation}\dfrac{1}{N}\left(\dfrac{\sigma(\tilde{j}N)-\sigma(jN)}{d}+(s_{\tilde{j}}-s_{j})\right)\in \mathbb{Z} \,\,\,\, : \, 0\leq j,\tilde{j}\leq d-1,\end{equation}
\noindent but:
\begin{equation}\left(\dfrac{\sigma(\tilde{j}N)-\sigma(jN)}{d}+(s_{\tilde{j}}-s_{j})\right)\in \{-2(N-1),\dots,2(N-1)\},\end{equation}
\noindent then:
\begin{equation}\left(\dfrac{\sigma(\tilde{j}N)-\sigma(jN)}{d}+(s_{\tilde{j}}-s_{j})\right)\in \{-N,0,N\} \,\,\,\, : \, 0\leq j,\tilde{j}\leq d-1,\end{equation}
\noindent thus:
\begin{equation}\sigma(\tilde{j}N)-\sigma(jN)=d(N_{\tilde{j},j}+(s_j-s_{\tilde{j}})) \,\,\,\, : N_{\tilde{j},j}\in \{-N,0,N\}.\end{equation}
\noindent With this $\sigma(\tilde{j}N)=\sigma(0)+d(N_{\tilde{j},0}+(s_0-s_{\tilde{j}}))$, and therefore $N_{\tilde{j},j}=N_{\tilde{j},0}-N_{j,0}$, so, renaming $N_j:=N_{j,0}$, the permutation can be written as:
\begin{equation}\sigma(jN)=\sigma(0)+d(N_{j}-(s_j-s_{0})) \,\,\,\, : N_{j}\in \{-N,0,N\},\end{equation}
\noindent where $N_{0}=0$. Wlog, we can choose a strictly increasing $s_i$,  which rules out that $N_j=-N$, because $\sigma(jN)\in \{0,\dots,Nd-1\}$, then $N_j\in\{0,N\}$. Note that in this case $\sigma$ is already injective in multiplies of N, since if $\sigma(\tilde{j}N)=\sigma(jN)$, then $(N_{\tilde{j}}-N_j)=(s_{\tilde{j}}-s_j)$, but if wlog $\tilde{j}>j$ then, $0<N_{\tilde{j}}-N_j<N$, which is a contradiction. The last thing necessary for it to be a valid permutation is that it is in the corresponding range of values, that is to say:

\begin{equation}d(s_j-s_{0})\leq \sigma(0)+dN_{j}\leq d(N+s_j-s_0)-1 \,\,\,\, : 0\leq j\leq d-1,\end{equation}

\noindent thus, to choose some $N_j=N$ it must be satisfied that $\sigma(0)\leq d(s_j-s_0)-1$, and if $N_j=0$, then $\sigma(0)\geq d(s_j-s_0)$, therefore, if m is the smallest element such that $N_m=N$ and M is the largest such that $N_M=0$, then:

\begin{equation}d(s_M-s_0)\leq \sigma(0)\leq d(s_{m}-s_0)-1.\end{equation}

\end{obs}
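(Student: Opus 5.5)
The plan is to redo the computation leading to (\ref{3.44}) with an arbitrary permutation $\sigma$ in place of $\sigma_S$. Then I would read off, entry by entry, when the compressed operator $\bra{0}U^{*}(I_d\otimes\ket{k}\bra{k})U\ket{0}$ equals $\tfrac{d}{N}\ket{v^S_k}\bra{v^S_k}$.

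\textbf{Step 1: reduce to the displayed entrywise condition.} Nothing in the derivation before (\ref{3.44}) used the specific form of $\sigma_S$ until the last line. So the $(j,\tilde j)$ entry of the compressed operator is
\begin{equation*}
\frac{1}{N}\cdot\frac{1}{d}\sum_{0\le i\le d-1} w_{Nd}^{(iN+k)(\sigma(\tilde jN)-\sigma(jN))}.
\end{equation*}
The corresponding entry of $\tfrac{d}{N}\ket{v^S_k}\bra{v^S_k}$ is $\tfrac1N w_N^{k(s_j-s_{\tilde j})}$. Equating these entries for all $j,\tilde j,k$ gives exactly the first displayed equivalence, with no further input.

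\textbf{Step 2: extract the divisibility and modular conditions.} Write $\Delta:=\sigma(\tilde jN)-\sigma(jN)$ and factor out $w_{Nd}^{k\Delta}$. Taking absolute values leaves $\bigl|\sum_i w_d^{i\Delta}\bigr|=d$. By the triangle inequality this forces all terms to be equal, that is $w_d^{\Delta}=1$, so $d\mid\Delta$. The geometric sum then equals $d$, and the condition becomes $w_N^{k\Delta/d}=w_N^{k(s_j-s_{\tilde j})}$ for all $k$. Equivalently (taking $k=1$), $\tfrac{\Delta}{d}+s_{\tilde j}-s_j\equiv 0 \pmod N$.

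Since $\sigma$ takes values in $\{0,\dots,Nd-1\}$, we have $|\Delta|/d\le N-1$. Also $|s_{\tilde j}-s_j|\le N-1$. So this integer lies in $\{-2(N-1),\dots,2(N-1)\}$, hence in $\{-N,0,N\}$, which gives $N_{\tilde j,j}$.

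Comparing each pair with the anchor $j=0$ gives the cocycle relation $N_{\tilde j,j}=N_{\tilde j,0}-N_{j,0}$. From this the closed form $\sigma(jN)=\sigma(0)+d(N_j-(s_j-s_0))$ follows, with $N_0=0$.

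\textbf{Step 3: admissible $N_j$ and the range of $\sigma(0)$.} Next I would justify the \emph{wlog} that $s_j$ is strictly increasing. Reordering the rows of $S$ only relabels the computational basis of the $\mathbb{C}^d$ factor, and this conjugates every POVM element by the same permutation matrix, so nothing essential changes. With $s$ increasing:
\begin{itemize}
\item $N_j=-N$ would make $\sigma(jN)<0$, so only $N_j\in\{0,N\}$ survive.
\item Injectivity on multiples of $N$ follows from $0<s_{\tilde j}-s_j<N$ for $\tilde j>j$.
\item The membership $\sigma(jN)\in\{0,\dots,Nd-1\}$ splits into the two one-sided constraints on $\sigma(0)$, according to whether $N_j=0$ or $N_j=N$.
\end{itemize}
Taking the extremal indices $M$ and $m$ gives the final interval $d(s_M-s_0)\le\sigma(0)\le d(s_m-s_0)-1$.

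For the converse, any such assignment on the multiples of $N$ extends to a permutation of $\{0,\dots,Nd-1\}$ by bijecting the remaining points arbitrarily. Running Steps 1–2 backwards then shows it yields the POVM.

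\textbf{Main obstacle.} I expect the delicate points to be the range bound in Step 2 and the \emph{wlog} reordering in Step 3. For the range bound, I must check carefully that $|\Delta|/d\le N-1$, so that $\pm 2N$ is genuinely excluded. For the reordering, I must make sure that permuting rows of $S$ is compatible with keeping the ancilla state $\ket{0}$ and the form $F_{dN}P_\sigma$. I would handle the latter by absorbing the row permutation into $\sigma$ itself.
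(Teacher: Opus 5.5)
Your proposal is correct and follows the paper's own derivation essentially step for step. It uses the entrywise comparison read off from (\ref{3.44}), the modulus argument forcing $d\mid\Delta$, the range bound that confines $\Delta/d+s_{\tilde j}-s_j$ to $\{-N,0,N\}$, the anchoring at $j=0$ to get the closed form for $\sigma(jN)$, and the case split on $N_j\in\{0,N\}$ under the increasing-$s$ normalization. Your explicit justification of the wlog reordering (absorbing the row permutation into $\sigma$) and of the converse (extending the assignment on multiples of $N$ to a full permutation) are small additions that the paper leaves implicit.
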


\begin{obs}
The above observation gives the forms of permutations that build the POVM in order, however, it is still valid to build the POVM by permuting the elements. This increases even more the number of possible permutations, because in this case $\sigma$  will design the POVM, if and only if there exists a permutation $\sigma^{\prime}$ such that:

\begin{equation}\dfrac{1}{d}\sum_{0\leq i\leq d-1}w_{Nd}^{(iN+k)(\sigma(\tilde{j}N)-\sigma(jN))}= w_{N}^{\sigma^{\prime}(k) (s_{j}- s_{\tilde{j}})} \,\,\,\, : \, \substack{0\leq j,\tilde{j}\leq d-1 \\ 0 \leq k\leq N-1}.\end{equation}

\noindent Proceeding as before, we have again that $d\vert\sigma(\tilde{j}N)-\sigma(jN)$, and the condition becomes:  

\begin{equation}\dfrac{1}{N}\left(k\left(\dfrac{\sigma(\tilde{j}N)-\sigma(jN)}{d}\right)+\sigma^{\prime}(k)(s_{\tilde{j}}-s_{j})\right)\in \mathbb{Z}\,\,\,\, : \, \substack{0\leq j<\tilde{j}\leq d-1 \\ 0 \leq k\leq N-1},\end{equation}

\noindent where restrictions were eliminated by noting that the conditions are trivially satisfied for $j=\tilde{j}$, and that if they are satisfied for $j<\tilde{j}$, then they will also be satisfied for $j>\tilde{j}$ via factoring a $-1$.
    
\end{obs}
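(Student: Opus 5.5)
The plan is to repeat the computation leading to (\ref{3.44}), but for an arbitrary permutation $\sigma$ and without assuming the POVM elements come out in their natural order. The same manipulation of $U=F_{dN}P_{\sigma}$ used before (\ref{3.44}) gives
\begin{equation*}
\bra{0}U^{*}(I_d\otimes\ket{k}\bra{k})U\ket{0}=\dfrac{1}{dN}\sum_{0\leq i,j,\tilde{j}\leq d-1}w_{Nd}^{(iN+k)(\sigma(\tilde{j}N)-\sigma(jN))}\ket{j}\bra{\tilde{j}}.
\end{equation*}
Saying that $\sigma$ designs the POVM up to relabelling means there is a permutation $\sigma'$ of $\{0,\dots,N-1\}$ with this operator equal to $\frac{d}{N}\ket{v^{S}_{\sigma'(k)}}\bra{v^{S}_{\sigma'(k)}}$ for every $k$. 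Here $\ket{v^{S}_{r}}$ is the frame vector of proposition \ref{prop 3.1.1} with $m=1$. Both sides are $d\times d$ matrices in the same basis, so equality means entrywise equality. The $(j,\tilde{j})$ entry of the right-hand side is $\frac{1}{N}w_N^{\sigma'(k)(s_j-s_{\tilde{j}})}$. Comparing entries gives exactly the displayed condition with $\frac{1}{d}\sum_i(\cdots)$ on the left. Each step is reversible, so this is an equivalence.

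Next I extract the divisibility. Write $\Delta_{j,\tilde{j}}:=\sigma(\tilde{j}N)-\sigma(jN)$ and factor out $w_{Nd}^{k\Delta}$, which leaves $\frac{1}{d}\sum_{i}w_d^{i\Delta}$. The right-hand side has modulus $1$, so this average of $d$ unit complex numbers must have modulus $1$. By the equality case of the triangle inequality, all the terms $w_d^{i\Delta}$ must be equal, hence $w_d^{\Delta}=1$, i.e.\ $d\mid\Delta_{j,\tilde{j}}$. Conversely, if $d\mid\Delta$ the geometric sum is exactly $d$. I expect this triangle-inequality step to be the only non-mechanical point. It is what justifies replacing the averaged sum by a single root of unity, and it has to be checked in both directions.

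Once $d\mid\Delta$, the left side equals $w_{Nd}^{k\Delta}=w_N^{k\Delta/d}$. The condition then becomes $w_N^{k\Delta/d}=w_N^{\sigma'(k)(s_j-s_{\tilde{j}})}$, which is the integrality statement $\frac{1}{N}\big(k\,\Delta/d+\sigma'(k)(s_{\tilde{j}}-s_j)\big)\in\mathbb{Z}$. Finally I reduce the index range. For $j=\tilde{j}$ both $\Delta$ and $s_{\tilde{j}}-s_j$ vanish, so the condition holds trivially. Swapping $j$ and $\tilde{j}$ negates both terms, so the integrality is preserved. Hence it suffices to impose the condition for $0\leq j<\tilde{j}\leq d-1$, which is the stated characterization.
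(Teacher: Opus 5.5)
Your proposal is correct and follows the paper's own route. You compute the compressed Naimark operator $\bra{0}U^{*}(I_d\otimes\ket{k}\bra{k})U\ket{0}$ as in (\ref{3.44}), match it entrywise against $\frac{d}{N}\ket{v^{S}_{\sigma'(k)}}\bra{v^{S}_{\sigma'(k)}}$, use the modulus of the averaged root-of-unity sum to force $d\mid\sigma(\tilde{j}N)-\sigma(jN)$, and reduce to a congruence mod $N$ that is symmetric under $j\leftrightarrow\tilde{j}$. Your explicit triangle-inequality argument, checked in both directions, just fills in the step the paper states as ``which occurs only if''.
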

From what has been analyzed in this section, we can conclude with the following proposition:

\newpage

\begin{prop}\label{prop 3.3.1}
    Given a $C_N$ frame of dimension d characterized by $S=(s_i)_{0\leq i\leq d-1}$, then, the associated POVM given by $E_{k}=\frac{d}{N}\ket{v_{k}^{S}}\bra{v_{k}^{S}}$ it can be implemented via Naimark's theorem by applying the matrix $U_{S}=F_{dN}P_{\sigma_{S}}$, where $F_{dN}$ is the Fourier matrix of dimension $dN$ and $P_{\sigma_{S}}$ the permutation operator associated with $\sigma_{S}$, which in case of satisfying:
\begin{equation}d\vert\sigma_{S}(\tilde{j}N)-\sigma_{S}(jN)\,\,\,\,\,\, \wedge \,\,\,\,\, N\bigg\vert\left(k\left(\dfrac{\sigma_S(\tilde{j}N)-\sigma_S(jN)}{d}\right)+\sigma^{\prime}(k)(s_{\tilde{j}}-s_{j})\right)\,\,\,\, : \, \substack{0\leq j<\tilde{j}\leq d-1 \\ 0 \leq k\leq N-1}.\end{equation}
\noindent for some permutation $\sigma^{\prime}$, then it follows that:
\begin{equation}
E_{\sigma^{\prime}(k)}=\bra{0} U^{*}_{S}(I_{d}\otimes \ket{k}\bra{k})U_{S}\ket{0},    
\end{equation}
\noindent  which can be represented graphically by the following circuit:

\vspace{0.2cm}

\begin{figure}[hbtp]
    \centering
    \includegraphics[scale=0.18]{images/portada.png}
    \caption{ \doublespacing
 We show the implementation of the d dimensional POVM $E_{\sigma^{\prime}(k)}$ by considering ancillary system of dimension N prepared $\ket{0}$ and the gate $U_S=F_{dN}P_{\sigma_{S}}$. In this case, the effect of measuring with the computational basis in the ancillary system is equivalent to measuring with the POVM associated with the $C_N$ frame permuted by $\sigma^{\prime}$.}
    \label{fig:enter-label}
\end{figure}

\end{prop}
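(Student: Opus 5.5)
The plan is a direct computation in the computational basis of $\mathbb{C}^{d}\otimes\mathbb{C}^{N}$, following the pattern that produced (\ref{3.44}). Only the ancilla part of $\ket{0}$ is used, so $\ket{j_1}\otimes\ket{0}$ corresponds to the index $j_1N$, and only the values $\sigma_S(j_1N)$ enter.

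\textbf{Step 1: general form of the block.} Write $U_S=F_{dN}P_{\sigma_S}$, so that
\begin{equation*}
\bra{i}U_S\ket{j}=\frac{1}{\sqrt{dN}}\,w_{dN}^{\,i\,\sigma_S(j)} .
\end{equation*}
Sandwiching $U_S^*(I_d\otimes\ket{k}\bra{k})U_S$ between $I_d\otimes\bra{0}$ and $I_d\otimes\ket{0}$ gives
\begin{equation*}
\bra{0}U_S^*(I_d\otimes\ket{k}\bra{k})U_S\ket{0}
=\frac{1}{dN}\sum_{j_1,\tilde j_1}\;\sum_{i_1=0}^{d-1} w_{dN}^{(i_1N+k)\,(\sigma_S(\tilde j_1N)-\sigma_S(j_1N))}\,\ket{j_1}\bra{\tilde j_1}.
\end{equation*}
Set $\Delta_{j\tilde j}:=\sigma_S(\tilde jN)-\sigma_S(jN)$. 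The factor $w_{dN}^{k\Delta}$ comes out of the sum over $i_1$, and what remains is a geometric sum
\begin{equation*}
\sum_{i_1=0}^{d-1} w_d^{\,i_1\Delta_{j\tilde j}} .
\end{equation*}

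\textbf{Step 2: use the first hypothesis.} The condition $d\mid\Delta_{j\tilde j}$ makes every term of this geometric sum equal to $1$, so the sum is exactly $d$. The entry then collapses to $\frac{1}{N}w_N^{\,k\Delta_{j\tilde j}/d}$. This step is the converse direction of the observation preceding the proposition: there, $d\mid\Delta$ was shown to be forced; here it is assumed and used.

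\textbf{Step 3: use the second hypothesis.} The divisibility by $N$ says that
\begin{equation*}
k\,\Delta_{j\tilde j}/d \equiv -\sigma'(k)\,(s_{\tilde j}-s_j)=\sigma'(k)\,(s_j-s_{\tilde j}) \pmod N .
\end{equation*}
Therefore $w_N^{\,k\Delta/d}=w_N^{\,\sigma'(k)(s_j-s_{\tilde j})}$. The stated condition only covers $j<\tilde j$. The case $j=\tilde j$ is trivial, and the case $j>\tilde j$ follows by negation, as noted in the preceding observation. The block thus equals
\begin{equation*}
\frac{1}{N}\sum_{j,\tilde j} w_N^{\,\sigma'(k)(s_j-s_{\tilde j})}\ket{j}\bra{\tilde j}=\frac{d}{N}\ket{v^S_{\sigma'(k)}}\bra{v^S_{\sigma'(k)}}=E_{\sigma'(k)},
\end{equation*}
using the expression for $\ket{v^S_r}$ from Proposition \ref{prop 3.1.1} with $m=1$.

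\textbf{Step 4: Naimark interpretation.} Since $\{I_d\otimes\ket{k}\bra{k}\}_k$ is a PVM, the observation following Proposition \ref{prop 2.6.1} identifies these blocks as the POVM effects realized by preparing the ancilla in $\ket{0}$, applying $U_S$, and measuring the ancilla in the computational basis. We should also check consistency: $\sum_k E_{\sigma'(k)}=I$ holds because $\sigma'$ is a bijection and the frame is tight with constant $N/d$ (Proposition \ref{3}). Finally, we verify that the concrete choice $\sigma_S(jN)=C-s_jd$ from the text meets both hypotheses with $\sigma'=\mathrm{id}$. Here $\Delta=d(s_j-s_{\tilde j})$, so $k\Delta/d+k(s_{\tilde j}-s_j)=0$. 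It remains to check that $\sigma_S$ can be extended from the multiples of $N$ to a genuine permutation of $\{0,\dots,dN-1\}$. This works because the values $C-s_jd$ are distinct and lie in range by the choice of $C$.

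The main obstacle is bookkeeping rather than conceptual. One must track the sign convention in $\sigma'(k)(s_{\tilde j}-s_j)$ against the $\ket{j}\bra{\tilde j}$ ordering, so that the result is $E_{\sigma'(k)}$ rather than its conjugate, which would correspond to $E_{-\sigma'(k)}$. One must also confirm the index convention $\ket{j_1}\otimes\ket{0}\leftrightarrow j_1N$ used in the Kronecker identification.
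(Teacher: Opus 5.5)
Your proposal is correct and follows essentially the same route as the paper. You compute $\bra{0}U_S^{*}(I_d\otimes\ket{k}\bra{k})U_S\ket{0}$ entrywise, collapse the geometric sum over $i_1$ using $d\mid\sigma_S(\tilde jN)-\sigma_S(jN)$, match the phase $w_N^{k\Delta/d}$ to $w_N^{\sigma'(k)(s_j-s_{\tilde j})}$ via the $N$-divisibility, and verify the explicit choice $\sigma_S(iN)=C-s_id$ with $\sigma'=\mathrm{id}$; the only difference is that the paper's preceding observations also derive these conditions as necessary, while you correctly use them only as sufficient hypotheses, which is all the statement requires.
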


\begin{ex}
    Consider the $C_N$ frame of dimension d with $s_j=j$, then this frame can be implemented by $\sigma_{S}(jN)=jd$. Indeed, let us note that:
\begin{equation}k\left(\dfrac{\sigma_{S}(\tilde{j}N)-\sigma_{S}(jN)}{d}\right)+\sigma^{\prime}(k)(s_{\tilde{j}}-s_{j})=(k+\sigma^{\prime}(k))(\tilde{j}-j),\end{equation}
    \noindent then, this element is divisible by N for all j, $\tilde{j}$ and k if and only if:
     \begin{equation}\sigma^{\prime}(k)=\left\{ \begin{array}{lcc}
             0\,\,\,\,\,\,\,\,\,\,\,\,\, :k=0      \\    
                 N-k\,\, :k\neq 0
             
             \end{array} \right. .\end{equation}
\end{ex}

\section{Simple example of Harmonic POVM on qubits computer}\label{example povm}

In a quantum computer, the dimension of the system we wish to measure will be $d=2^{\mathcal{N}}$ with $\mathcal{N}$ the number of qubits. The simplest case is when $\mathcal{N}=1$, furthermore, $N=2^{M}\geq d$ so that gates can be applied, so the simplest and most non-trivial case is when $M=2$. Consider $s_0=0, s_1=1$, then the vectors are:
\begin{equation}\ket{v^{S}_k}=\dfrac{1}{2} \begin{pmatrix}
    1\\
    w_4^{k}
\end{pmatrix},\end{equation}
\noindent with $w_4$ the fourth roots of unity. The POVM will then be:
\begin{equation}\label{3.61}E_k=\dfrac{1}{4}\begin{pmatrix}
    1 & w_4^{-k} \\

    w_4^{k} & 1
\end{pmatrix},\end{equation}
\noindent it is proven to be a POVM because the sum of all unit roots is $0$. In this case the gates we must construct are $F_8$ and $P_{\sigma_{S}}$, where $\sigma_{S}$ will implement in the order $\sigma^{\prime}$ if it satisfies for all $k\in\{0,3\}$:
\begin{equation}2\vert\sigma_S(4)-\sigma_S(0)\,\,\,\,\,\, \wedge \,\,\,\,\, 4\bigg\vert\left(k\left(\dfrac{\sigma_S(4)-\sigma_S(0)}{2}\right)+\sigma^{\prime}(k)\right).\end{equation}
\noindent To implement $F_8$, you can use the circuit shown in \cite{nielsen2010quantum}. On the other hand, we can choose $\sigma_{S}=(46)(57)$, thus, $\sigma_{S}(4)-\sigma_{S}(0)=6$ and $4\vert (3k+\sigma^{\prime}(k))$ for all k if and only if $\sigma^{\prime}(k)=k$, but this permutation can be implemented by means of $CNOT_{0\to 1}$.




\begin{figure}[hbtp]
    \centering
    \begin{quantikz}[row sep=0.8cm]
\lstick{$\rho$}& \ctrl{1}\gategroup[2,steps=1,style={dashed,rounded
corners,fill=blue!20, inner
xsep=2pt},background,label style={label
position=below,anchor=north,yshift=-0.2cm}]{{\sc
$\textcolor{blue}{P_{\sigma_{S}}}$}}&\gate{H} \gategroup[3,steps=7,style={dashed,rounded
corners,fill=red!20, inner
xsep=2pt},background,label style={label
position=below,anchor=north,yshift=-0.2cm}]{{\sc
$\textcolor{red}{F_{8}}$}}& \gate{S}& \gate{T}&&&&\swap{2}& &\rstick{$\rho^{S}_{2i+j}$}\\
\lstick{$\ket{0}$}& \gate{X}&&\ctrl{-1}  & &\gate{H}&\gate{S}&&&&\meter{i} \\
\lstick{$\ket{0}$}&  &&&\ctrl{-2}&&\ctrl{-1}&\gate{H}&\targX{}&&\meter{j}
\end{quantikz}
    \caption{\doublespacing We show the implementation of the POVM (\ref{3.61}) using the proposition \ref{prop 3.3.1}. In this case, two ancilla qubit were used, the gate $P_{\sigma_{S}}=CNOT_{0\to 1}$, and $F_8$, whose construction is described in \cite{nielsen2010quantum}.}
    \label{fig:enter-label}
\end{figure}

\section{Simple example of Harmonic state on quantum computer}\label{example separability}

To illustrate separability in a qubit computer (remember propositions \ref{prop 3.1.1} and \ref{prop 3.1.2}), the easiest case we can consider would be when $m=1$ ($C_N$ frame), with $n=2$ (Bipartite), $d_1=d_2=2$ and thus $d=4$, so the smallest $N$ is $4$, therefore $s_i=i$ is the only one that fulfills the cardinality condition. The states are of the form:
\begin{equation}\ket{v^{S}_{k}}=\dfrac{1}{2}\begin{pmatrix}
    1\\
    w_4^{k}\\
    w_4^{2k}\\
    w_{4}^{3k}
\end{pmatrix}.\end{equation}
\noindent In this case, the separability condition for the state vector $\ket{v^{S}_k}$ is:
\begin{equation} 4\vert k\underbrace{((s_{0}-s_{2})+(s_{3}-s_{1}))}_{=0},\end{equation}
\noindent so all the vectors are separable, furthermore $\theta=0$ (remember \ref{obs 3.2.1}), then by (\ref{3.37}):
\begin{equation}\ket{v^{S}_{k}}_{1}\otimes\ket{v^{S}_{k}}_{2}=\dfrac{1}{\sqrt{2}}\begin{pmatrix}
    1\\
    w_{4}^{2k}
\end{pmatrix}\otimes \dfrac{1}{\sqrt{2}}\begin{pmatrix}
    1\\
    w_{4}^{k}
\end{pmatrix}=\dfrac{1}{2}\begin{pmatrix}
    1\\
    w_4^{k}\\
    w_4^{2k}\\
    w_{4}^{3k}
\end{pmatrix}=\ket{v^{S}_{k}}.\end{equation}
\noindent We can increase the difficulty by taking $m=1$,$n=2$, $d_1=d_2=2$, $N\geq 4$ and any $S$ with the respective dimensions. In this case the condition of separability is:
\begin{equation}N\vert k((s_{0}-s_{2})+(s_{3}-s_{1})).\end{equation}
\noindent If we suppose that some vector indexed by k satisfies it, then, as $\theta=-s_0/2$:
\begin{equation}\ket{v^{S}_{k}}_{1}\otimes\ket{v^{S}_{k}}_{2}=\dfrac{1}{\sqrt{2}}\begin{pmatrix}
    w_{N}^{k \frac{s_0}{2}}\\
    w_{N}^{k (s_{2}-\frac{s_0}{2})}
\end{pmatrix}\otimes \dfrac{1}{\sqrt{2}}\begin{pmatrix}
    w_{N}^{k \frac{s_{0}}{2}}\\
    w_{N}^{k (s_{1}-\frac{s_0}{2})}
\end{pmatrix}=\dfrac{1}{2}\begin{pmatrix}
    w_{N}^{k s_0}\\
    w_{N}^{k s_1}\\
    w_{N}^{k s_2}\\
    w_{N}^{k (s_2+s_1-s_0)}
\end{pmatrix}=\dfrac{1}{2}\begin{pmatrix}
    w_{N}^{k s_0}\\
    w_{N}^{k s_1}\\
    w_{N}^{k s_2}\\
    w_{N}^{k s_3}
\end{pmatrix}=\ket{v^{S}_k},\end{equation}
\noindent since $k(s_2+s_1-s_0)=k s_3$ (mod $N$).


\chapter{Conclusions}
\section{Summary of Results
}\label{summary}
The presented thesis aimed to study the construction and implementation of tight frames in quantum computers as states (section \ref{state}) and as POVM (section \ref{measures}). Using representation theory (section \ref{representation}), group frames were introduced (section \ref{group frames}), and due to the known structure of abelian groups, efforts were focused on studying Harmonic frames (section \ref{harmonic}). In the first section of the results we analyzed the bipartite separability of the harmonic frame (section \ref{bipartite}). Specifically, the Harmonic frames of $N$ elements, starting from abelian groups $G=C_{n_0}\times\dots \times C_{n_{m-1}}$ such that $n_0 n_1...n_{m-1}=N$, and a sequence $J$ of $G$ with $d\leq N$ distinct elements (the dimension of the frame) was characterized by the proposition \ref{prop 3.1.1}, in which it was taken $J_j=(a_j^{s_{i,j}})_{0\leq i\leq d-1}$ the sequence of the $j$-th components of $J$, what defines the matrix $S:=(s_{i,j})$ of size $d\times m$, which satisfies the property of having distinct rows and that its elements $s_{i,j}\in \{0,\dots,n_j-1\}$. With this, it was obtained that the states of the frame written in the computational base are of the form:
\begin{equation}\label{4.1}\ket{v^{S}_{(r_j)_{j}}}=\dfrac{1}{\sqrt{d}}\sum_{ 0\leq i\leq d-1} \prod_{j=0}^{m-1} w_{}^{r_j \, s_{i,j}} \, \ket{i}\end{equation}
\noindent indexed by a sequence $(r_j)_j$ such that $r_j\in \{0,\dots, n_j-1\}$. Thanks to this, it was possible to obtain the following proposition \ref{prop 3.1.2}, which gives a necessary and sufficient condition for the bipartite separability of these states, given by:
\begin{equation}\sum_{j=0}^{m-1} \dfrac{r_j\,((s_{(i_1d_2+i_2),j}-s_{(k_1d_2+i_2),j})+(s_{(k_1 d_2+\tilde{i_2}),j}-s_{(i_1d_2+\tilde{i_2}),j}))}{n_j}\in \mathbb{Z} \,\,\,\, : \, \substack{0\leq i_1<k_1\leq d_1-1 \\ 0 \leq i_2< \tilde{i_2}\leq d_2-1}.\end{equation}
\noindent The possibility of maximally intertwined states was also studied, in which case the necessary condition was that:
\begin{equation}\sum_{j=0}^{m-1} \dfrac{r_j\,((s_{(i_1d_2+i_2),j}-s_{(k_1d_2+i_2),j})+(s_{(k_1 d_2+\tilde{i_2}),j}-s_{(i_1d_2+\tilde{i_2}),j}))}{n_j}\in \mathbb{Z}+\dfrac{1}{2} \,\,\,\, : \, \substack{0\leq i_1<k_1\leq d_1-1 \\ 0 \leq i_2< \tilde{i_2}\leq d_2-1}.\end{equation}
\noindent and, that in which case there was a purity in the reduction of $\rho^{S}_{(r_j)_{j}}=\ket{v^{S}_{(r_j)_{j}}}\bra{v^{S}_{(r_j)_{j}}}$ is:
\begin{equation}\gamma((\rho^{S}_{(r_j)_j})_{1})=\dfrac{1}{d}(d_1+d_2-1-(d_1-1)(d_2-1))=\dfrac{2(d_1+d_2-1)-d}{d}.\end{equation}
\noindent Performing an analysis for possible values of $d_1$ and $d_2$ it was found that the only possible dimensions where maximum entanglement can exist are $d_1=d_2=1$ and $d_1=d_2=3$. The existence of negative values for purity suggests that there are no values such that the necessary condition of maximum entanglement is fulfilled.\\

In the second section of results (section \ref{multipartite}), it was shown the proposition \ref{prop 3.2.1}, which states that for a multipartite system $d=d_1\dots d_n$, the $(d_1,\dots,d_n)$-separability of the state (\ref{4.1}), is given if and only if:
\begin{equation}\label{4.5}\sum_{j=0}^{m-1} \dfrac{r_j\,((s_{(i_ld^{\prime}_{l+1}+\tilde{i}_{l+1}),j}-s_{(k_ld^{\prime}_{l+1}+\tilde{i}_{l+1}),j})+(s_{(k_l d^{\prime}_{l+1}+\tilde{k}_{l+1}),j}-s_{(i_ld^{\prime}_{l+1}+\tilde{k}_{l+1}),j}))}{n_j}\in \mathbb{Z} \,\,\,\, : \, \substack{0\leq i_l<k_l\leq d_l-1 \\ 0 \leq \tilde{i}_{l+1}< \tilde{k}_{l+1}\leq d_{l+1}^{\prime}-1\\ 1 \leq l\leq n-1},\end{equation}
\noindent where $d^{\prime}_l=d_{l}d_{l+1}\dots d_{n}$, and in this case, the reductions of $\rho^{S}_{(r_j)_{j}}$ are: 
\begin{equation}(\rho^{S}_{(r_j)_j})_{l}=\dfrac{1}{d_l}\sum_{0 \leq i_l,k_l\leq d_l-1} \prod_{j=0}^{m-1} w_j^{r_j \,(s_{i_ld_{l+1}^{\prime},j}-s_{k_l d_{l+1}^{\prime},j})}\, \ket{i_l}\bra{k_l},\end{equation}
\noindent where $d_{n+1}^{\prime}:=1$ (empty product convention).

\noindent This suggested that in such a case the state could be written as a tensor product of the following factors:
\begin{equation}\ket{v^{S}_{(r_j)}}_{l}=\dfrac{1}{\sqrt{d_l}}\sum_{0 \leq i_l\leq d_l-1} \prod_{j=0}^{m-1} w_j^{r_j \,(s_{id_{l+1}^{\prime},j}+\theta_j)}\, \ket{i_l}.\end{equation}
\noindent It was shown in the observation \ref{obs 3.2.1} that one possible value for $\theta_j$ is:
\begin{equation}\theta_j=\left(\dfrac{1-n}{n}\right) s_{0,j}.\end{equation}
\noindent It was also observed in \ref{obs 3.2.2} that another valid separability equivalence is given when:
\begin{equation}\sum_{j=0}^{m-1} \dfrac{r_j\,((s_{i_ld^{\prime}_{l+1},j}-s_{0,j})+(s_{k_{l+1}(i_r)_{r\geq l+1},j}-s_{k_{l}(i_r)_{r\geq l},j}))}{n_j}\in \mathbb{Z} \,\,\,\, : \, \substack{0\leq i_r\leq d_r-1 \\  1 \leq l\leq n},\end{equation}
\noindent where:
\begin{equation}k_{l}(i_r)_{r\geq l}:=\sum_{r=l}^{n}i_{r}d^{\prime}_{r+1},\end{equation}
\noindent and $k_{n+1}(i_r)_{r\geq n+1}:=0$ (empty sum convention).\\

Analyzing the proof of Naimark's theorem (proposition \ref{prop 2.6.1}), it was subsequently shown in the section (\ref{Cn frames}) that it is possible to implement $C_N$ frames $(m=1)$ of dimension $d$ as POVM, using the Fourier matrix $F_{dN}$ and a permutation matrix $P_{\sigma_{S}}$ via this theorem. Specifically, the proposition \ref{prop 3.3.1}, says that this construction implement the $C_N$ frame in an order $\sigma^{\prime}$ if and only if for all:
\begin{equation}d\vert\sigma_{S}(\tilde{j}N)-\sigma_S(jN)\,\,\,\,\,\, \wedge \,\,\,\,\, N\bigg\vert\left(k\left(\dfrac{\sigma_S(\tilde{j}N)-\sigma_S(jN)}{d}\right)+\sigma^{\prime}(k)(s_{\tilde{j}}-s_{j})\right)\,\,\,\, : \, \substack{0\leq j<\tilde{j}\leq d-1 \\ 0 \leq k\leq N-1}.\end{equation}
\noindent It was demonstrated the existence of a permutation that designs the frame in order taking
\( C \in \{\max\limits_{0\leq i\leq d-1} s_id, \ldots, Nd-1\} \) and \( \sigma_{S}(iN)=C-s_id \). Finally, the results were applied to implement some simple frames on quantum computers. As a POVM we designed a circuit for the case of the $C_4$ frame of dimension $d=2$ such that $s_j=j$ (section \ref{example povm}). This is done by permuting $\sigma_{S}=(46)(57)$ in order $\sigma^{\prime}(k)=k$. The permutation matrix associated with the sigma turned out to be a $CNOT_{0\to 1}$. so that the circuit obtained after implementing $F_8$ by the proposed method in \cite{nielsen2010quantum} was:

\begin{figure}[hbtp]
    \centering
    \begin{quantikz}[row sep=0.8cm]
\lstick{$\rho$}& \ctrl{1}\gategroup[2,steps=1,style={dashed,rounded
corners,fill=blue!20, inner
xsep=2pt},background,label style={label
position=below,anchor=north,yshift=-0.2cm}]{{\sc
$\textcolor{blue}{P_{\sigma_{S}}}$}}&\gate{H} \gategroup[3,steps=7,style={dashed,rounded
corners,fill=red!20, inner
xsep=2pt},background,label style={label
position=below,anchor=north,yshift=-0.2cm}]{{\sc
$\textcolor{red}{F_{8}}$}}& \gate{S}& \gate{T}&&&&\swap{2}& &\rstick{$\rho^{S}_{2i+j}$}\\
\lstick{$\ket{0}$}& \gate{X}&&\ctrl{-1}  & &\gate{H}&\gate{S}&&&&\meter{i} \\
\lstick{$\ket{0}$}&  &&&\ctrl{-2}&&\ctrl{-1}&\gate{H}&\targX{}&&\meter{j}
\end{quantikz}
    \caption{\doublespacing We show the implementation of the POVM (\ref{3.61}) using the proposition \ref{prop 3.3.1}. In this case, two ancilla qubit were used, the gate $P_{\sigma_{S}}=CNOT_{0\to 1}$, and $F_8$, whose construction is described in \cite{nielsen2010quantum}.}
    \label{fig:enter-label}
\end{figure}

\noindent Viewed as states, the separability conditions were applied to the case of $m=1$ ($C_N$ frame), with $n=2$ (Bipartite), $d_1=d_2=2$ $(d=4)$, and $N=4$ $(S=s_i)$ (section \ref{example separability}). The separability was also illustrated for the general case of $N\geq 4$ and any S. The condition in this case was as follows $k(s_2+s_1-s_0)=k s_3$ (mod $N$) and:

\begin{equation}\ket{v^{S}_k}=(\ket{v^{S}_{k}})_{1}\otimes(\ket{v^{S}_{k}})_{2}=\dfrac{1}{\sqrt{2}}\begin{pmatrix}
    w_{N}^{k \frac{s_0}{2}}\\
    w_{N}^{k (s_{2}-\frac{s_0}{2})}
\end{pmatrix}\otimes \dfrac{1}{\sqrt{2}}\begin{pmatrix}
    w_{N}^{k \frac{s_{0}}{2}}\\
    w_{N}^{k (s_{1}-\frac{s_0}{2})}
\end{pmatrix}.\end{equation}

\section{Reflections}\label{reflections}

Throughout this thesis, tools for the implementation of harmonic frames in quantum computers were developed and illustrated with simple examples. These tools were instrumental in exploring how harmonic frames could be adapted to practical scenarios in quantum computing, offering insights into their potential applications. One notable aspect was the role of separability, which, while not directly utilized in the presented methods, holds significant promise for simplifying the construction of quantum states. Specifically, separability could facilitate the creation of less complex and computationally expensive states, as its local nature allows the design process to be broken down into several smaller, lower-dimensional components.\\

In the specific context of qubit-based quantum computers, the methods developed enabled the design of $C_{2^M}$ frames that could be implemented using native quantum gates. This achievement represents an important step forward in addressing the challenges associated with constructing POVMs, as it provides a pathway to more efficient implementations. By aligning these constructions with the native operations of quantum hardware, the developed techniques contribute to bridging the gap between theoretical designs and their practical realization.\\

Overall, while significant progress was achieved, much remains to be explored in this area. The methods and tools presented here lay a foundation for further research, yet they also highlight the complexity of the problem and the opportunities for future advancements. As discussed later, deeper analyses and broader generalizations of the approaches introduced in this work could lead to a more comprehensive understanding of harmonic frames and their applications in quantum computing.

\section{Future Research}\label{future research}

Regarding the separability conditions (\ref{4.5}), future work could focus on reducing the number of trivial conditions. In the application example, many conditions were straightforward and obvious; thus, achieving a minimum set of non-trivial conditions would be ideal. On the other hand, in terms of designing frames as POVMs, while a viable circuit was obtained for quantum computers operating with powers of $2$, this method is not feasible for other $C_N$ in qubit-based systems. This limitation arises because the Fourier matrix inherently produces roots of unity that are powers of two. For instance, implementing a cubic root would not be possible. Therefore, a remaining task is to investigate how such implementations can be realized on quantum computers. Additionally, developing a general circuit for all harmonic frames in an abstract sense would be a valuable contribution. This may be achievable through products of Fourier matrices, as suggested by the structure of the abelian group. \\

Another avenue for exploration is analyzing the computational cost of the circuits, particularly the implementation of the permutation matrices (\ref{prop 3.3.1}). It is well-known that the Toffoli gate is a universal gate for reversible Boolean circuits \cite{kitaev2002classical}, making it worthwhile to determine the minimum number of Toffoli gates required to implement the permutations forming the circuit. Furthermore, it would be beneficial to investigate whether simpler gates could be used to achieve the same functionality. However, there are intuitions that accomplishing this solely with local gates may not be feasible. \\

In conclusion, I could likely continue generating ideas all night, but I will leave it here. It has been profoundly rewarding to see how knowledge from diverse fields can converge to address a complex problem. Simply contemplating the possibilities is enough to give me goosebumps.

\bibliography{references}  

\begin{thebibliography}{10}
\providecommand{\url}[1]{#1}
\csname url@samestyle\endcsname
\providecommand{\newblock}{\relax}
\providecommand{\bibinfo}[2]{#2}
\providecommand{\BIBentrySTDinterwordspacing}{\spaceskip=0pt\relax}
\providecommand{\BIBentryALTinterwordstretchfactor}{4}
\providecommand{\BIBentryALTinterwordspacing}{\spaceskip=\fontdimen2\font plus
\BIBentryALTinterwordstretchfactor\fontdimen3\font minus
  \fontdimen4\font\relax}
\providecommand{\BIBforeignlanguage}[2]{{%
\expandafter\ifx\csname l@#1\endcsname\relax
\typeout{** WARNING: IEEEtran.bst: No hyphenation pattern has been}%
\typeout{** loaded for the language `#1'. Using the pattern for}%
\typeout{** the default language instead.}%
\else
\language=\csname l@#1\endcsname
\fi
#2}}
\providecommand{\BIBdecl}{\relax}
\BIBdecl

\bibitem{duffin1952class}
R.~J. Duffin and A.~C. Schaeffer, ``A class of nonharmonic fourier series,''
  \emph{Trans. Amer. Math. Soc.}, vol.~72, pp. 341--366, 1952.

\bibitem{gabor1946theory}
D.~Gabor, ``Theory of communication,'' \emph{Journ. IEE}, vol.~93, pp.
  429--457, 1946.

\bibitem{casazza2000art}
P.~Casazza, ``The art of frame theory,'' \emph{Taiwanese Journ. of Math.},
  vol.~4, no.~2, pp. 129--202, 2000.

\bibitem{eldar2002optimal}
Y.~Eldar and J.~G.D.~Forney, ``Optimal tight frames and quantum measurement,''
  \emph{IEEE Trans. Info. Th.}, vol.~48, no.~3, pp. 599--610, March 2002.

\bibitem{beneduci2018notes}
R.~Beneduci, ``Notes on naimark's dilation theorem,'' \emph{Journal of
  Mathematical Physics}, vol.~59, no.~3, p. 032107, 2018, departimento di
  Fisica, Universit\'a della Calabria and Istituto Nazionale di Fisica
  Nucleare, Italy.

\bibitem{benedetto2002finite}
J.~Benedetto and M.~Fickus, ``Finite normalized tight frames,'' \emph{Advances
  in Computational Mathematics}, 2002.

\bibitem{yard2024introduction}
J.~Yard, ``Introduction to quantum information processing,'' QIC 710 / CS 768 /
  PH 767 / CO 681 / AM 871, 2024, qNC 3126,
  \url{http://math.uwaterloo.ca/~jyard/qic710}.

\bibitem{renes2004symmetric}
J.~M. Renes, R.~Blume-Kohout, A.~J. Scott, and C.~M. Caves, ``Symmetric
  informationally complete quantum measurements,'' \emph{Journal of
  Mathematical Physics}, vol.~45, no.~6, pp. 2171--2180, 2004.

\bibitem{bergou2010discrimination}
J.~A. Bergou, ``Discrimination of quantum states,'' \emph{Journal of Modern
  Optics}, vol.~57, no. 3-4, pp. 160--180, 2010.

\bibitem{derka1998universal}
R.~Derka, V.~Bu{\v{z}}ek, and A.~K. Ekert, ``Universal algorithm for optimal
  estimation of quantum states from finite ensembles via realizable generalized
  measurement,'' \emph{Physical Review Letters}, vol.~80, no.~8, pp.
  1571--1575, 1998.

\bibitem{vertesi2010two}
T.~Vértesi and E.~Bene, ``Two-qubit bell inequality for which positive
  operator-valued measurements are relevant,'' \emph{Phys. Rev. A}, vol.~82, p.
  062115, 2010.

\bibitem{gomez2016device}
E.~S. Gómez \emph{et~al.}, ``Device-independent certification of a
  nonprojective qubit measurement,'' \emph{Phys. Rev. Lett.}, vol. 117, p.
  260401, 2016.

\bibitem{bennett1992quantum}
C.~H. Bennett, ``Quantum cryptography using any two nonorthogonal states,''
  \emph{Physical Review Letters}, vol.~68, no.~21, pp. 3121--3124, 1992.

\bibitem{jozsa2003entanglement}
R.~Jozsa and et~al., ``Entanglement cost of generalised measurements,''
  \emph{Quantum Information and Computation}, vol.~3, pp. 405--422, 2003.

\bibitem{steinberg2016representation}
B.~Steinberg, \emph{Representation Theory of Finite Groups: An Introductory
  Approach}.\hskip 1em plus 0.5em minus 0.4em\relax New York: Springer, 2016.

\bibitem{waldron2018group}
\BIBentryALTinterwordspacing
S.~Waldron, ``Group frames,'' 2018. [Online]. Available:
  \url{https://www.math.auckland.ac.nz/~waldron/Preprints/Group-frames/group-frames.pdf}
\BIBentrySTDinterwordspacing

\bibitem{vale2018symmetry}
\BIBentryALTinterwordspacing
R.~Vale and S.~Waldron, ``The symmetry group of a finite frame,'' 2018.
  [Online]. Available: \url{https://arxiv.org/abs/1812.04898}
\BIBentrySTDinterwordspacing

\bibitem{fuchs2015abelian}
L.~Fuchs, \emph{Abelian Groups}, ser. Graduate Texts in Mathematics.\hskip 1em
  plus 0.5em minus 0.4em\relax Berlin: Springer, 2015, vol. 189.

\bibitem{marshall2018number}
\BIBentryALTinterwordspacing
S.~Marshall and S.~Waldron, ``On the number of harmonic frames,'' \emph{ArXiv},
  2018, october 16, 2018. [Online]. Available:
  \url{https://arxiv.org/abs/1810.07415}
\BIBentrySTDinterwordspacing

\bibitem{casazza2008finite}
P.~G. Casazza and G.~Kutyniok, Eds., \emph{Finite Frames: Theory and
  Applications}.\hskip 1em plus 0.5em minus 0.4em\relax Boston: Birkhäuser,
  2008.

\bibitem{tung1985group}
W.~K. Tung, \emph{Group Theory in Physics}.\hskip 1em plus 0.5em minus
  0.4em\relax Singapore: World Scientific, 1985, vol.~1.

\bibitem{waldron2017tightframes}
S.~F.~D. Waldron, \emph{An Introduction to Finite Tight Frames}.\hskip 1em plus
  0.5em minus 0.4em\relax Springer, July 2017, draft available online.

\bibitem{piro2013fundamental}
A.~Piro, ``The fundamental theorem for finite abelian groups: A brief history
  and proof,'' \emph{Missouri Journal of Mathematical Sciences}, vol.~25,
  no.~2, pp. 88--96, 2013.

\bibitem{chien2011unitary}
T.-Y. Chien, ``On the unitary equivalence between cyclic harmonic frames,''
  \emph{Linear Algebra and its Applications}, vol. 435, no.~5, pp. 1047--1057,
  2011.

\bibitem{nielsen2010quantum}
M.~A. Nielsen and I.~L. Chuang, \emph{Quantum Computation and Quantum
  Information}, 10th~ed.\hskip 1em plus 0.5em minus 0.4em\relax Cambridge, UK:
  Cambridge University Press, 2010.

\bibitem{horodecki2024multipartite}
P.~Horodecki, Łukasz Rudnicki, and K.~Życzkowski, ``Multipartite
  entanglement,'' \emph{International Centre for Theory of Quantum
  Technologies, University of Gdansk}, 2024, dated: July 31, 2024.

\bibitem{paris2004modern}
M.~Paris, ``The modern tools of quantum mechanics: A tutorial on quantum
  states, measurements, and operations,'' \emph{Dipartimento di Fisica
  dell’Università degli Studi di Milano}, 2004.

\bibitem{dummit2004abstract}
D.~S. Dummit and R.~M. Foote, \emph{Abstract Algebra}, 3rd~ed.\hskip 1em plus
  0.5em minus 0.4em\relax Hoboken, NJ: John Wiley \& Sons, 2004.

\bibitem{treil2017linear}
S.~Treil, \emph{Linear Algebra Done Wrong}.\hskip 1em plus 0.5em minus
  0.4em\relax Brown University, 2017, available online:
  \url{https://www.math.brown.edu/~treil/papers/LADW/book.pdf}.

\bibitem{pathak2013elements}
A.~Pathak, \emph{Elements of Quantum Computation and Quantum
  Communication}.\hskip 1em plus 0.5em minus 0.4em\relax CRC Press, 2013.

\bibitem{kitaev2002classical}
A.~Y. Kitaev, A.~H. Shen, and M.~N. Vyalyi, \emph{Classical and Quantum
  Computation}, ser. Graduate Studies in Mathematics.\hskip 1em plus 0.5em
  minus 0.4em\relax Providence, RI: American Mathematical Society, 2002,
  vol.~47.

\end{thebibliography}

\end{document}